\documentclass[12pt,a4paper]{article}

\usepackage[scheme=plain]{ctex}
\usepackage{authblk}%作者目录

\usepackage{mathrsfs} %花体字宏包, 使用例子\mathscr{X}
\usepackage{latexsym,bm}
\usepackage{amsmath,amsfonts,amsmath,amssymb,amsthm}
\usepackage{CJKpunct,extarrows} %标点符号,箭头

\usepackage{graphicx,subfigure,epstopdf,float}
\usepackage{enumerate,cases}
\usepackage[justification=centering]{caption} % 统一设置图片标题居中
\usepackage[margin=10pt,font=small,labelfont=bf,labelsep=period]{caption}

\numberwithin{figure}{section}
\numberwithin{table}{section}
\numberwithin{equation}{section}

\usepackage{indentfirst}  %控制段落缩进
\usepackage[top=25mm,bottom=20mm,left=25mm,right=20mm]{geometry} %页边距
\usepackage{cite} %参考文献引用宏包, 自动将引用文献排序, 连续的用- 省略
\usepackage{CJKnumb}
\usepackage{listings}

\usepackage{makeidx}        % generate an index, automatically （这是生成文后的索引词汇的）

\usepackage{booktabs}
\usepackage[colorlinks,citecolor=red,linkcolor=blue,hyperindex,pdfstartview=FitH,plainpages=false,CJKbookmarks,bookmarksnumbered]{hyperref}

\newtheoremstyle{mythm}{1.5ex plus 1ex minus .2ex}{1.5ex plus 1ex minus .2ex}
{\songti}{\parindent}{\songti\bfseries}{}{1em}{}

\theoremstyle{mythm}

\newtheorem{theorem}{\heiti Theorem~}[section]
\newtheorem{lemma}{\heiti Lemma~}[section]
\newtheorem{definition}{\heiti Definition~}[section]

\newtheorem{remark}{\heiti Remark~}[]

\makeatletter
\renewenvironment{proof}[1][\proofname]{\par
\pushQED{\qed}%
\normalfont \topsep6\p@\@plus6\p@ \labelsep1em\relax
\trivlist
\item[\hskip\labelsep\indent
\bfseries #1]\ignorespaces
}{%
\popQED\endtrivlist\@endpefalse
}
\makeatother
\renewcommand{\proofname}{Proof}

\usepackage{algorithm}
\floatname{algorithm}{Algorithm}
\usepackage{listings,xcolor} % 程序代码
\usepackage{textcomp}
\usepackage[framed,numbered,autolinebreaks,useliterate]{mcode} % 需要 mcode.m
\usepackage{qcircuit}
\usepackage{arydshln}

\DeclareCaptionFont{white}{\color{white}}
\DeclareCaptionFormat{listing}{\colorbox[cmyk]{0.43, 0.35, 0.35,0.01}{\parbox{\textwidth}{#1#2#3}}}
\newcommand{\bb}{\boldsymbol}
\newcommand{\ket}[1]{| #1 \rangle} % |u>
\newcommand{\bra}[1]{\langle #1 |} % <u|

\newcommand{\ketbra}[2]{ | #1 \rangle \langle #2 |}
\def \d {\mathrm{d}}
\def \e {\mathrm{e}}

\newenvironment{highlights}
    {\section*{Highlights}\begin{itemize}}
    {\end{itemize}}

\begin{document}

\begin{highlights}
    \item A time-marching representation of the LBM is proposed for the first time
    \item Weighted energy-norm bounds for the time-marching representation are established
    \item Two quantum algorithms based on the time-marching representation are developed
\end{highlights}

\title{\bfseries Time-marching representation based quantum algorithms for the Lattice Boltzmann model of the advection-diffusion equation}

\author[1]{Yuan He}

\author[1,2,3]{Yuan Yu%
\thanks{Corresponding author. 
\textit{Email addresses:}
\texttt{heyuan@smail.xtu.edu.cn} (Yuan He),
\texttt{yuyuan@xtu.edu.cn} (Yuan Yu),
\texttt{terenceyuyue@xtu.edu.cn} (Yue Yu)}}

\author[1,2,3]{Yue Yu}

\affil[1]{School of Mathematics and Computational Science,
Xiangtan University, Xiangtan, 411105, Hunan, China}

\affil[2]{Hunan Research Center of the Basic Discipline Fundamental
Algorithmic Theory and Novel Computational Methods,
Xiangtan, 411105, Hunan, China}

\affil[3]{National Center for Applied Mathematics in Hunan,
Xiangtan, 411105, Hunan, China}
\date{}
\maketitle

%\renewcommand{\contentsname}{\centering  目\quad 录}
%\tableofcontents

\newpage

\begin{abstract}
This article introduces a novel framework for developing quantum algorithms for the Lattice Boltzmann Method (LBM) applied to the advection-diffusion equation. We formulate the collision-streaming evolution of the LBM as a compact time-marching scheme and rigorously establish its stability under low Mach number conditions. This unified formulation eliminates the need for classical measurement at each time step, enabling a systematic and fully quantum implementation. Building upon this representation, we investigate two distinct quantum algorithmic approaches. The first is a time-marching quantum algorithm realized through sequential evolution operators, for which we provide a detailed implementation—including block-encoding and dilating unitarization—along with a full complexity analysis. The second employs a quantum linear systems algorithm, which encodes the entire time evolution into a single global linear system. We demonstrate that both methods achieve comparable asymptotic time complexities. The proposed algorithms are validated through numerical simulations of benchmark problems in one and two dimensions. This work provides a systematic pathway that avoids full-state measurement and reinitialization at every time step for the quantum simulation of advection-diffusion processes via the lattice Boltzmann paradigm.
\end{abstract}

\section{Introduction}

The advection-diffusion equation (ADE) represents a cornerstone mathematical model for describing a vast array of transport phenomena in scientific and engineering disciplines. The generic form for a scalar field $\phi(\bb{x},t)$, which can represent physical quantities such as temperature or concentration, is expressed as
\begin{equation} \label{adv}
	\frac{\partial \phi}{\partial t} + \nabla \cdot (\bb{u}\phi) = D\nabla^2 \phi,
\end{equation}
where $\bb{u}$ is the velocity field and $D$ is the diffusion coefficient. This equation governs the physical processes where a physical quantity, such as mass, heat, energy, or species concentration, is transported within a medium due to two primary mechanisms: convective motion driven by a bulk flow field and diffusive spreading resulting from random molecular or turbulent motions\cite{chan1984stability,zoppou1997analytical,dehghan2004numerical}.

Despite its deceptively simple linear form for constant coefficients, obtaining accurate and efficient numerical solutions for complex geometries, coupled nonlinear problems, or across a wide range of Peclet numbers (the ratio of advective to diffusive transport rates) remains a significant computational challenge. Traditional discretization methods, such as finite difference or finite volume schemes, often grapple with issues of numerical diffusion, instability in strongly convective regimes, and the fulfillment of conservation laws. It is precisely these persistent challenges that motivate the exploration of alternative, more robust computational paradigms, such as the lattice Boltzmann method, and ultimately, its potential quantum acceleration.

Todorova and Steijl~\cite{todorova2020quantum} pioneered the first quantum
algorithm for solving the collisionless Boltzmann equation, demonstrating their
method on supersonic flow around a blunt body and free-molecular flow escaping
from a rectangular container, with specular reflection boundary conditions
employed in the simulations.
The quantum implementation of lattice Boltzmann methods has since evolved through
progressive optimization of measurement and state preparation overhead.
Budinski~\cite{budinski2021quantum} established the foundational framework by
mapping both collision and streaming steps onto quantum hardware, implementing
the non-unitary collision operator via linear combination of unitaries (LCU).
While enabling complete quantum execution, this approach requires full-state
measurement and re-encoding after each time step, where the output state serves
as input for subsequent iterations - a computational bottleneck that motivated
further optimization efforts.
Kocherla \emph{et al.}~\cite{kocherla2024fully} introduced a measurement-avoiding
algorithm employing iterative phase estimation to approximate and subtract
qubit relative phases at each iteration, thereby deferring full-state measurement
to the end of computation. While this strategy eliminates repeated full-state
measurements, it is somewhat complicated and necessitates many ancillary qubit measurements arising from the phase estimation
per time step.
Wawrzyniak \emph{et al.}~\cite{wawrzyniak2024unitary} developed fully unitary
collision operators through square-root amplitude encoding, enabling deterministic
application without probabilistic failure. For linear advection-diffusion problems,
their method performs mid-circuit measurements exclusively on a two-qubit
distribution function register while preserving the spatial quantum state in
superposition, facilitating multi-step evolution before final readout.
Subsequently, Wawrzyniak \emph{et al.}~\cite{wawrzyniak2025linearized} proposed
an improved scheme featuring fully unitary collision operators implemented via
dynamic quantum circuits. This approach enables continuous multi-time-step
evolution without intermediate full-state measurement or re-initialization.
The algorithm leverages mid-circuit measurements of ancillary qubit to
probabilistically select the distribution function to evolve at each time step
while maintaining the spatial register in superposition. A hybrid quantum-classical
variant was also introduced, wherein classical preprocessing replaces the initial
mid-circuit measurement, reducing quantum measurement overhead while preserving
computational equivalence.
Xu and Yao \emph{et al.}~\cite{xu2025improved} developed an ancilla-free QLBM
that eliminates ancilla qubit requirements in the collision step. Their method
avoids exponentially costly quantum state tomography at each time step by
measuring only the direction-encoding register for post-selection, extracting
macroscopic quantities through a quantum-classical hybrid framework only when
needed.
Kumar and Frankel~\cite{kumar2025quantum} developed a quantum lattice Boltzmann
method for low-Reynolds-number flows using singular value decomposition to
construct unitary collision and streaming operators. Their algorithm supports
multiple boundary condition types and different relaxation parameters $\tau$,
but necessitates full state reinitialization and measurement at every time step.
Despite these advances, further investigation is required to extend QLBM toward
more general relaxation regimes, complex boundary conditions, and improved
quantum resource efficiency.

In this article, we express the lattice Boltzmann method with the Bhatnagar-Gross-Krook collision operator as a compact time-marching scheme, which is referred to as its time-marching representation. We establish weighted energy-norm contractivity for spatially uniform steady velocity fields and a controlled single-step growth bound for spatially nonuniform but smooth steady velocity fields. This formulation eliminates classical computations and measurements at every time step, and enables a more straightforward and systematic development of corresponding quantum algorithms. We investigate two distinct quantum algorithmic approaches. The first is a time-marching quantum algorithm, which requires the sequential application of evolution operators. We provide a detailed description of this algorithm, including the construction of the matrix input model and the dilating unitarization of the time-marching scheme, followed by an analysis of its computational complexity.
The second approach employs a quantum linear systems algorithm (QLSA). Here, all variables from the time-marching process are assembled into a single large linear system, which is then solved using a state-of-the-art QLSA. Our analysis shows that the time complexities of both approaches are comparable.

The paper is structured as follows. In Section \ref{sec:LBM}, we briefly describe the LBM with the BGK collision operator and demonstrate that it recovers the macroscopic advection-diffusion equation in the low Mach number limit. Section \ref{sec:representation} presents the compact time-marching representation of the LBM. An analysis of the evolution matrix norm establishes its stability under low Mach number conditions. Section \ref{sec:timemarching} is devoted to the time-marching quantum algorithm based on this representation, with a detailed discussion of its implementation. In Section \ref{sec:QLSALBM}, we transform the time-marching representation into a large linear system and solve it by employing a quantum linear systems algorithm. Numerical examples are provided in the final section.

\section{Lattice Boltzmann for advection-diffusion} \label{sec:LBM}

%\subsection{The Lattice Boltzmann method with the BGK collision operator}

Let $f_{i}(\bb{x},t)$ be a scalar distribution function, which represents the probability density at position $\bb{x}$ and time $t$ associated with the standard discrete lattice velocity $\bb{e}_{i}$ in the $i$-th direction, where the index $i$ ranges from 0 to $q-1$ for a $d$-dimensional $\text{D}d\text{Q}q$ lattice model.
In this work, we employ the classical D2Q5 lattice model for simplicity. The standard discrete velocities for this model, as shown in Fig.~\ref{fig:D2Q5}, are given by $\bb{e}_0 = (0, 0)$, $\bb{e}_1 = (1, 0)$, $\bb{e}_2 = (-1, 0)$, $\bb{e}_3 = (0, 1)$, and $\bb{e}_4 = (0, -1)$.

\begin{figure}[!htb]
	\centering
	\includegraphics[scale=0.3]{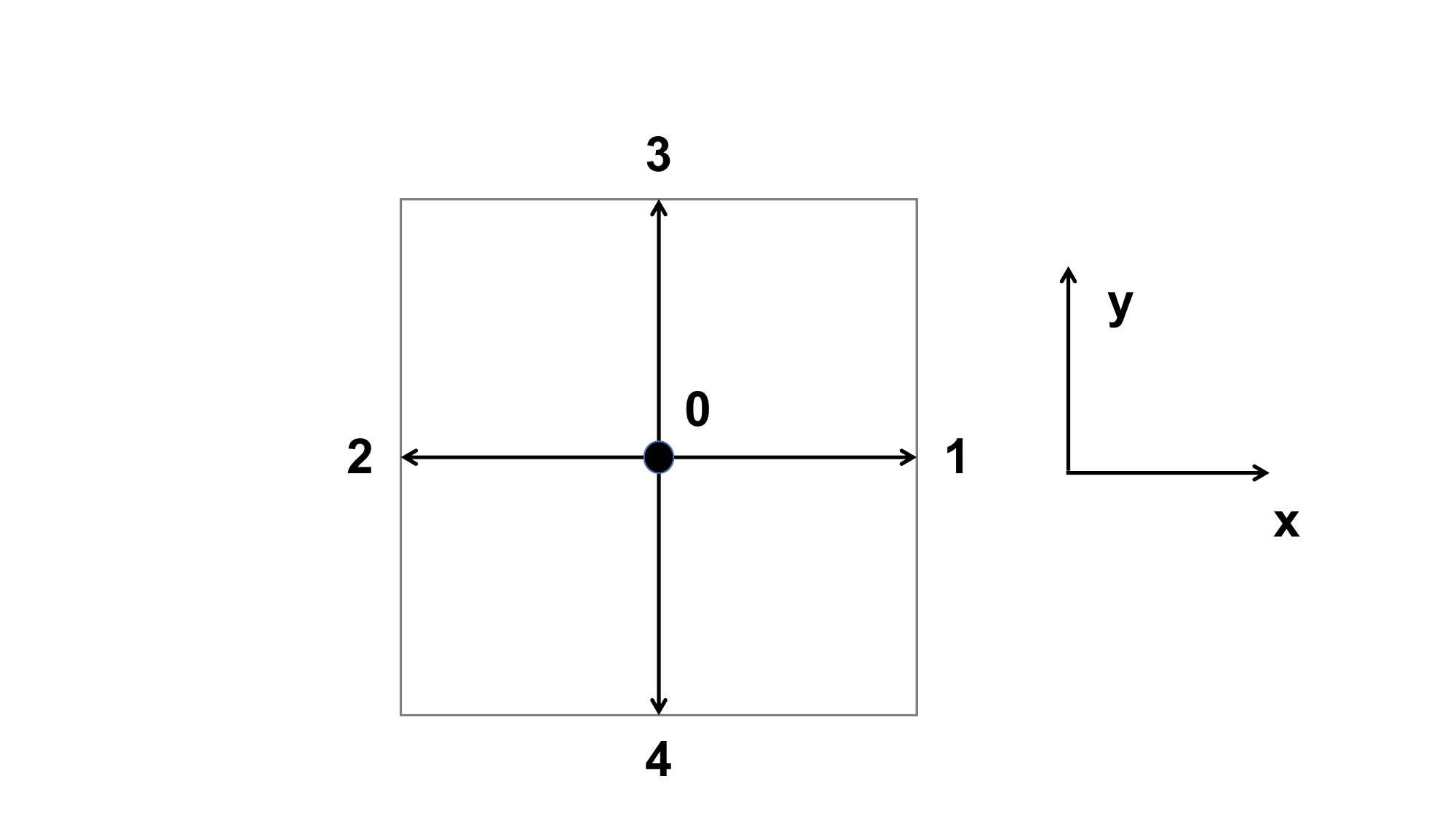}
	\caption{A diagram of the D2Q5 lattice Boltzmann discrete velocity model, showing five discrete directions in a 2D grid. The center direction is labeled as 0, with four surrounding directions labeled as 1, 2, 3 and 4. The directions should be symmetrically arranged around the center, representing particle movements in the horizontal and vertical directions.}
	\label{fig:D2Q5}
\end{figure}

The evolution of the distribution function $f_i$ is governed by the following discrete lattice Boltzmann equation with the Bhatnagar-Gross-Krook collision operator:
\begin{equation}\label{collision}
	f_i(\bb{x} + \bb{c}_i \Delta t, t + \Delta t) = f_i(\bb{x}, t) - \frac{\Delta t}{\tau}\left[f_i(\bb{x}, t) - f_i^{\mathrm{\text{eq}}}(\bb{x}, t)\right],
\end{equation}
where $\tau$ is the relaxation time, and $\bb{c}_i = c \bb{e}_i$, with $ c  = \Delta x/\Delta t$ being the lattice velocity. The BGK collision operator provides a simple yet effective approximation for modeling the local relaxation of the distribution function towards its equilibrium state.

The equilibrium distribution function for the D2Q5 model is expressed as
\begin{equation}\label{equibrium}
	f_i^{\mathrm{\text{eq}}} = w_i \phi \left[1 + \frac{\bb{c}_i \cdot \bb{u}}{c_s^2}\right],
\end{equation}
where $\phi$ represents the macroscopic scalar quantity (e.g., concentration or temperature) and $\bb{u}$ denotes the advection velocity field. The model employs weight coefficients of $w_0 = 2/6$ and $w_1 = w_2 = w_3 = w_4 = 1/6$, with the sound speed defined as
\[c_s  = \frac{1}{\sqrt{3}} c = \frac{1}{\sqrt{3}} \frac{\Delta x}{\Delta t}.\]

The macroscopic scalar quantity is recovered from the zeroth-order moment of the distribution functions:
\begin{equation}\label{phisum}
	\phi = \sum_{i=0}^4 f_i.
\end{equation}

%\subsection{The macroscopic advection-diffusion equation equation}

Through Chapman-Enskog multiscale expansion analysis, the above lattice Boltzmann equation recovers the macroscopic advection-diffusion equation in the low Mach number limit:
\begin{equation}\label{eq_cde_LU}
	\frac{\partial \phi}{\partial t} + \nabla \cdot (\bb{u} \phi) =  D  \nabla^2 \phi,
\end{equation}
demonstrating that the LBM correctly captures the physics of advection-diffusion processes at the macroscopic scale.

\begin{lemma}
	Let $f_i$ be the distribution function governed by the lattice Boltzmann evolution in \eqref{collision}. Then the zeroth-order moment $\phi$ in \eqref{eq_cde_LU} satisfies the following macroscopic advection-diffusion equation
	\[
	\frac{\partial \phi}{\partial t} + \nabla \cdot (\bb{u} \phi) =  D  \nabla^2 \phi + E.
	\]
	Here, the diffusion coefficient
	\begin{equation}\label{eq_D}
		D = c_s^2 \Big(\tau - \frac{\Delta t}{2}\Big)  = c_s^2 \Big(\tau^* - \frac{1}{2}\Big) \Delta t
		= \frac13 \Big(\tau^* - \frac{1}{2}\Big) \frac{\Delta x^2}{\Delta t}
	\end{equation}
	with $\tau^* = \tau/\Delta t$; $E$ is the error term for the full simulation, given by
	\[E = - D (\text{Ma})^2 \nabla ^2 \phi + \mathcal{O}(\Delta t^2 + \varepsilon^2), \]
	where $\text{Ma}$ is the Mach number and $\varepsilon = \text{Kn}$ is the Knudsen number.
\end{lemma}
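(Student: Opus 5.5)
The plan is a Chapman--Enskog multiscale expansion of the BGK scheme \eqref{collision}. First we record the discrete moments of the equilibrium \eqref{equibrium} for D2Q5. Using $\sum_i w_i=1$, $\sum_i w_i \bb{c}_i=0$, $\sum_i w_i c_{i\alpha}c_{i\beta}=c_s^2\delta_{\alpha\beta}$ and $\sum_i w_i c_{i\alpha}c_{i\beta}c_{i\gamma}=0$, we get
\[
\sum_i f_i^{\text{eq}}=\phi,\qquad \sum_i \bb{c}_i f_i^{\text{eq}}=\phi\bb{u},\qquad \sum_i c_{i\alpha}c_{i\beta} f_i^{\text{eq}}=c_s^2\phi\,\delta_{\alpha\beta}.
\]
The third moment comes out isotropic, with no $\phi u_\alpha u_\beta$ term. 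This is precisely because the equilibrium is only linear in $\bb{u}$, and it is the source of the Mach-number error term later.

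Next we Taylor-expand the left side of \eqref{collision} to second order:
\[
\Delta t\,(\partial_t+\bb{c}_i\cdot\nabla)f_i+\tfrac{\Delta t^2}{2}(\partial_t+\bb{c}_i\cdot\nabla)^2 f_i+\mathcal{O}(\Delta t^3)=-\tfrac{\Delta t}{\tau}(f_i-f_i^{\text{eq}}).
\]
We introduce the standard scalings $f_i=f_i^{(0)}+\varepsilon f_i^{(1)}+\varepsilon^2 f_i^{(2)}$, $\partial_t=\varepsilon\partial_{t_1}+\varepsilon^2\partial_{t_2}$, $\nabla=\varepsilon\nabla_1$, with solvability conditions $\sum_i f_i^{(k)}=0$ for $k\ge1$. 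Matching orders gives $f_i^{(0)}=f_i^{\text{eq}}$ and $f_i^{(1)}=-\tau(\partial_{t_1}+\bb{c}_i\cdot\nabla_1)f_i^{(0)}$. At $\mathcal{O}(\varepsilon^2)$, the usual rearrangement gives $\partial_{t_2}f_i^{(0)}+(1-\tfrac{\Delta t}{2\tau})(\partial_{t_1}+\bb{c}_i\cdot\nabla_1)f_i^{(1)}=-\tfrac1\tau f_i^{(2)}$.

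We then take the zeroth moment at each order. At $\mathcal{O}(\varepsilon)$ this gives $\partial_{t_1}\phi+\nabla_1\cdot(\bb{u}\phi)=0$. At $\mathcal{O}(\varepsilon^2)$ it gives $\partial_{t_2}\phi+(1-\tfrac{\Delta t}{2\tau})\nabla_1\cdot\sum_i\bb{c}_if_i^{(1)}=0$.

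The key step is computing $\sum_i c_{i\alpha}f_i^{(1)}=-\tau\big[\partial_{t_1}(\phi u_\alpha)+\partial_\beta(c_s^2\phi\delta_{\alpha\beta})\big]$. We substitute the first-order equation into the time derivative, taking $\bb{u}$ steady, so that $\partial_{t_1}(\phi u_\alpha)=-u_\alpha\partial_\beta(\phi u_\beta)$. This gives
\[
\sum_i c_{i\alpha}f_i^{(1)}=-\tau c_s^2\partial_\alpha\phi+\tau u_\alpha\partial_\beta(\phi u_\beta).
\]
Recombining scales, the diffusion term is $(\tau-\tfrac{\Delta t}{2})c_s^2\nabla^2\phi=D\nabla^2\phi$, which yields \eqref{eq_D}. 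The second-order term $\tau c_s^2\nabla^2\phi$ gives the diffusion, and the first-order term contributes $\Delta t/2$ from the Taylor term; substituting $c_s^2=c^2/3$ and $\tau=\tau^*\Delta t$ gives the other two forms.

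The leftover term is $-(\tau-\tfrac{\Delta t}{2})\nabla\cdot[\bb{u}\nabla\cdot(\bb{u}\phi)]$. We write it as $-\tfrac{D}{c_s^2}\nabla\cdot[\bb{u}\nabla\cdot(\bb{u}\phi)]$ and, for slowly varying $\bb{u}$, identify its leading part as $-D\,(|\bb{u}|^2/c_s^2)\nabla^2\phi=-D(\text{Ma})^2\nabla^2\phi$, where $\text{Ma}=|\bb{u}|/c_s$. Truncating the Taylor series at third order contributes $\mathcal{O}(\Delta t^2)$, and truncating the expansion contributes $\mathcal{O}(\varepsilon^2)$. Together these give $E$.

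The main obstacle is this last identification. Strictly, $\nabla\cdot[\bb{u}\,\nabla\cdot(\bb{u}\phi)]$ is an anisotropic operator $u_\alpha u_\beta\partial_\alpha\partial_\beta\phi$ plus terms with velocity gradients, not literally $|\bb{u}|^2\nabla^2\phi$. I would handle this by stating that the error is understood up to the leading magnitude, that is, bounded by $D\,\text{Ma}^2|\nabla^2\phi|$ in the uniform or slowly varying velocity case. Any gradient-of-$\bb{u}$ terms would be absorbed into the $\mathcal{O}(\varepsilon^2)$ remainder, since $\nabla$ carries a factor of $\varepsilon$.
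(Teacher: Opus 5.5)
Your Chapman--Enskog derivation is the same argument the paper relies on: its proof only cites Section 8.3.4.1 of Kr\"{u}ger et al. Your moments, the factor $1-\frac{\Delta t}{2\tau}$ giving $D=c_s^2(\tau-\frac{\Delta t}{2})$, and the leftover term $-\frac{D}{c_s^2}\nabla\cdot[\bb{u}\,\nabla\cdot(\bb{u}\phi)]$ are all correct. One slip in the last step: the velocity-gradient parts of that term carry two derivatives, just like $u_\alpha u_\beta\partial_\alpha\partial_\beta\phi$ and $D\nabla^2\phi$, so they are not higher order in $\varepsilon$ and do not belong in the $\mathcal{O}(\varepsilon^2)$ remainder; instead, when $\bb{u}$ varies on the macroscopic scale $\ell$, they are also of size $D\,(\text{Ma})^2\phi/\ell^2$ and should be grouped with the Mach-number term, which is how the statement's $E$ has to be read in any case.
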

\begin{proof}
	See Section 8.3.4.1 in \cite{kruger2017lattice}.
\end{proof}

According to Appendix A.2.1 of \cite{kruger2017lattice}, we have
\[\Delta t = \mathcal{O}(\tau) = \mathcal{O}(\mathcal{T}_{\mathrm{mfp}}), \qquad \text{Kn} = \mathcal{O}\Big( \frac{\mathcal{T}_{\mathrm{mfp}} }{\mathcal{T}_{c_s}} \Big) = \mathcal{O}\Big(\frac{\Delta t}{\mathcal{T}_{c_s}} \Big),\]
where $\mathcal{T}_{\mathrm{mfp}} = \mathcal{O}(\ell_{\mathrm{mfp}}/c_s)$ is the mean free path time scale, where $\ell_{\mathrm{mfp}}$ is the mean free path,  which represents the average time between successive molecular collisions, and $\mathcal{T}_{c_s} = \ell/c_s$  is the acoustic time scale characterizing the time required for acoustic waves to propagate over a macroscopic length scale $\ell$. The March number is defined by
\[\text{Ma} = \frac{u}{c_s}, \]
where $u$ is a characteristic fluid velocity, which gives
\[D (\text{Ma})^2 \nabla ^2 \phi = \mathcal{O}\Big(  c_s^2 \Big(\tau^* - \frac{1}{2}\Big) \Delta t  (\text{Ma})^2\Big)
=  \mathcal{O}\Big(   \Big(\tau^* - \frac{1}{2}\Big) \frac{\Delta x^2}{\Delta t}  (\text{Ma})^2\Big).\]

\section{Time-marching representation of the LBM} \label{sec:representation}

Consider the D2Q5 scheme in Fig.~\ref{fig:D2Q5}. Assume there are $N$ spatial nodes.
The grid points are denoted by $\bb{x}_j$ for $j\in [N]:=\{0,1,\cdots,N-1\}$.
For the distribution function $ f _i(\bb{x},t)$, we define
\[ f _{ij}^{t} =  f _i(\bb{x}_j,t), \qquad i = 0,1,\cdots, 4,\qquad j \in [N].\]
These function values are stored as the following column vector
\[\bb{f}^{t} = [\bb{f}_0^{t}; \cdots; \bb{f}_4^{t}], \qquad \bb{f}_i^{t} = \begin{bmatrix}
	f _{i,0}^{t} \\ \vdots \\  f _{i,N-1}^{t}\end{bmatrix}  = \sum_{j=0}^{N-1}   f _{ij}^{t} \ket{i,j},\]
where the semicolon ``;'' indicates that it is flattened into a column vector.
For the solution of \eqref{eq_cde_LU}, we also introduce the vector for grid points values of $\phi$,
\[\bb{\phi}^t = \begin{bmatrix} \phi_0({t}) \\ \phi_1({t}) \\ \vdots \\ \phi_{N-1}({t}) \end{bmatrix} = \sum_{j=0}^{N-1} \phi_j({t}) \ket{j},
\qquad \phi_j({t}) = \phi( \bb{x}_j,t).\]

The lattice Boltzmann method is summarized by the following three steps.

\textbf{Collision}. For the collision step, let us first consider the discretized equilibrium distribution for \eqref{equibrium}:
\[
f _i^{\text{eq}}(\bb{x}_j,t) = w_i \phi(\bb{x}_j,t) \cdot \Big( 1 + \frac{ \bb{u}(\bb{x}_j,t) \cdot \bb{c}_i}{c_s^2} \Big), \qquad j\in [N],
\]
which can be written in vector form as
\[(\bb{f}_i^{\text{eq}})^{t} = A_i(t) \bb{\phi}^{t}, \qquad i = 0,\cdots, 4,\]
where
\begin{equation}\label{equ:A_i}
	A_i = w_i\begin{bmatrix}
		1 + \frac{\bb{c}_i \cdot \bb{u}_0}{c_s^2} &  &  &  \\
		& \ddots & &  \\
		& & \ddots &  \\
		&  &  &  1 + \frac{\bb{c}_i \cdot \bb{u}_{N-1}}{c_s^2}
	\end{bmatrix},
\end{equation}
with $\bb{u}_j(t) = \bb{u}(\bb{x}_j,t)$.
By the definition of $\tau^*$, the collision term on the right-hand side of \eqref{collision} is then discretized as
\begin{equation}\label{fcol}
	(\bb{f}_i^{\text{col}})^{t} = \bb{f}_i^{t} - \frac{1}{{\tau}^*} (\bb{f}_i^{t} -(\bb{f}_i^{\text{eq}})^{t}  )
	= (1 - \frac{1}{{\tau}^*})\bb{f}_i^{t} + \frac{1}{{\tau}^*} A_i \bb{\phi}^{t}, \quad i = 0,\cdots, 4
\end{equation}
or
\begin{align*}
	\begin{bmatrix}
		(\bb{f}_0^{\text{col}})^{t} \\ (\bb{f}_1^{\text{col}})^{t} \\ \vdots \\ (\bb{f}_4^{\text{col}})^{t}
	\end{bmatrix}
	= \begin{bmatrix}
		(1 - \frac{1}{{\tau}^*})I &       &         &     & \frac{1}{{\tau}^*}A_0  \\
		&  (1 - \frac{1}{{\tau}^*})I  &         &     & \frac{1}{{\tau}^*}A_1\\
		&       & \ddots  &      &  \vdots\\
		&       &         & (1 - \frac{1}{{\tau}^*})I  & \frac{1}{{\tau}^*}A_4
	\end{bmatrix}
	\begin{bmatrix}
		\bb{f}_0^{t} \\ \bb{f}_1^{t} \\ \vdots \\ \bb{f}_4^{t} \\ \bb{\phi}^{t}
	\end{bmatrix}.
\end{align*}
For brevity, we denote the above equation as
\begin{equation}\label{matA}
	(\bb{f}^{\text{col}})^{t} =  A(t) \begin{bmatrix}
		\bb{f}^{t} \\
		\bb{\phi}^{t}
	\end{bmatrix}.
\end{equation}

\textbf{Streaming}. The discretization of the streaming or transfer equation \eqref{collision} is
\[
f _i( \bb{x}_j + \bb{c}_i \Delta t,t + \Delta t) =  f _i^{\text{col}}( \bb{x}_j,t), \qquad j \in [N].
\]	
Since the spatial nodes of the function on the left-hand side are not consistent with $ \bb{f}_i^{t+\Delta t} $, we need to introduce the remapping of grid points, denoted by $ P_i $, along the $\bb{e}_i$ direction, which is a permutation matrix. Then we have
\begin{equation}\label{fi}
	\bb{f}_i^{t+\Delta t} = P_i (\bb{f}_i^{\text{col}})^{t}, \qquad i = 0,1,\cdots, 4,
\end{equation}
where $P_0 = I_{N\times N}$ is the identity matrix. Let $P = \text{diag}(P_0, P_1,\cdots, P_4)$. The above equation can be written as
\begin{equation}\label{fupdate}
	\bb{f}^{t+\Delta t} = P (\bb{f}^{\text{col}})^{t} = M_1(t) \begin{bmatrix}
		\bb{f}^{t} \\
		\bb{\phi}^{t}
	\end{bmatrix}, \qquad M_1(t) = PA(t).
\end{equation}

\textbf{Macroscopic quantity}. The discretized equation for \eqref{phisum} can be written as
\begin{equation}\label{phiupdate}
	\bb{\phi}^{t+\Delta t} = \sum_{i=0}^4 \bb{f}_i^{t+\Delta t} = M_2(t) \begin{bmatrix}
		\bb{f}^{t} \\
		\bb{\phi}^{t}
	\end{bmatrix}, \qquad M_2(t) = E_IPA(t),
\end{equation}
with $E_I = [I, I, I, I, I]$.

The combination of Eqs. \eqref{fupdate} and \eqref{phiupdate} results in a compact time-marching scheme
\[\begin{bmatrix}
	\bb{f}^{t+\Delta t} \\
	\bb{\phi}^{t+\Delta t}
\end{bmatrix}  = M(t) \begin{bmatrix}
	\bb{f}^{t} \\
	\bb{\phi}^{t}
\end{bmatrix}, \qquad M(t) = \begin{bmatrix} M_1(t) \\ M_2(t) \end{bmatrix}. \]

We assume that there exists a grid-independent constant $a_*>0$ such that
\[
w_i\left(
1+\frac{\bb{c}_i\cdot\bb{u}(\bb{x}_j)}{c_s^2}
\right)\geq a_*,
\qquad
i=0,\ldots,4,\qquad j\in[N].
\]
This condition guarantees that each $A_i$ is positive definite, and hence
$A_i^{-1/2}$ is well defined.

For $\tau^*>1$, we set
\[
\omega=1-\frac{1}{\tau^*}\in(0,1),
\qquad
1-\omega=\frac{1}{\tau^*}.
\]
We introduce the energy scaling matrix
\[
W_{\omega,A}
=
\operatorname{diag}
\left(
\sqrt{\omega}A_0^{-1/2},
\sqrt{\omega}A_1^{-1/2},
\sqrt{\omega}A_2^{-1/2},
\sqrt{\omega}A_3^{-1/2},
\sqrt{\omega}A_4^{-1/2},
\sqrt{1-\omega}I
\right).
\]

Define the scaled state as
\[
\bb{\psi}^{t}
=
W_{\omega,A}
\begin{bmatrix}
	\bb{f}^{t}\\
	\bb{\phi}^{t}
\end{bmatrix}.
\]
Then the time-marching scheme takes the form
\begin{equation}\label{modifiedtimemarching}
	\bb{\psi}^{t+\Delta t}
	=
	\widehat{B}\bb{\psi}^{t},
\end{equation}
where
\begin{equation}\label{iterMatrix}
	\widehat{B}
	=
	W_{\omega,A}MW_{\omega,A}^{-1}.
\end{equation}

%To give the bounds of the eigenvalues of the corresponding matrix $\bb{H}$, or equivalently the singular values of $L$, we introduce the Gershgorin-type theorem for singular values.
%\begin{lemma}\cite{Qi984SVD} \label{lem:Qi}
%Let $\bb{A} = (a_{ij})$  be a square matrix of order $n$. Write
%\[r_i = \sum\limits_{j\ne i} |a_{ij}|, \quad c_i = \sum\limits_{j\ne i} |a_{ji}|, \quad  s_i = \max (r_i, c_i)\]
%for $i=1,2,\cdots, n$. Then each singular value of $\bb{A}$ lies in one of the real intervals
%\[[(|a_{ii}|-s_i)_+,  |a_{ii}|+s_i],\]
%where $a_+ = \max (0,a)$.
%\end{lemma}

\begin{theorem}\label{thm:norm}
	Let $\bb{u}=\bb{u}(\bb{x})$ be a prescribed time-independent velocity
	field. Assume that $\tau^* = \frac{1}{1-\omega}$ with $\omega \in (0,1)$, and  that the positivity condition above
	holds. Suppose that each $P_i$ is a permutation matrix satisfying
	\[
	P_i\ket{j}=\ket{\pi_i(j)}.
	\]
	Then the transformed time-marching matrix $\widehat{B}$ in
	\eqref{iterMatrix} satisfies
	\begin{equation}\label{equ:generalNormBound}
		\|\widehat{B}\|_2
		\leq
		\max_{i,j}
		\sqrt{
			\frac{a_i(\bb{x}_j)}
			{a_i(\bb{x}_{\pi_i(j)})}
		}
		=:\rho_h,
	\end{equation}
	where $a_i(\bb{x}_j)$ denotes the $j$-th diagonal entry of $A_i$.
\end{theorem}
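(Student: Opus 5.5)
The plan is to avoid writing $\widehat{B}$ out as a block matrix and instead prove the equivalent weighted-energy inequality directly. For an arbitrary (not necessarily consistent) pair $(\bb{f},\bb{\phi})$, set $\bb{\psi}=W_{\omega,A}[\bb{f};\bb{\phi}]$. Then
\[
\|\bb{\psi}\|_2^2=\omega\sum_{i=0}^4 \bb{f}_i^{T}A_i^{-1}\bb{f}_i+(1-\omega)\|\bb{\phi}\|_2^2 .
\]
Let $(\bb{f}^{+},\bb{\phi}^{+})=M[\bb{f};\bb{\phi}]$. It suffices to show $\|W_{\omega,A}[\bb{f}^{+};\bb{\phi}^{+}]\|_2^2\le \rho_h^2\|\bb{\psi}\|_2^2$. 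We use the collided quantities from \eqref{fcol},
\[
\bb{g}_i=\omega\bb{f}_i+(1-\omega)A_i\bb{\phi}, \qquad \bb{f}_i^{+}=P_i\bb{g}_i, \qquad \bb{\phi}^{+}=\sum_i P_i\bb{g}_i .
\]
The argument has two steps: first bound the new energy by $\rho_h^2\sum_i \bb{g}_i^{T}A_i^{-1}\bb{g}_i$, then bound that sum by $\|\bb{\psi}\|_2^2$.

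\textbf{Step 1 (streaming).} Write $\bb{g}_i=\sum_y g_i(y)\ket{y}$. Since $P_i\ket{y}=\ket{\pi_i(y)}$, the $f$-part of the new energy is
\[
\omega\sum_i\sum_y \frac{g_i(y)^2}{a_i(\bb{x}_{\pi_i(y)})}\le \omega\rho_h^2\sum_i\sum_y\frac{g_i(y)^2}{a_i(\bb{x}_y)},
\]
which uses the definition of $\rho_h$ in the form $a_i(\bb{x}_y)\le\rho_h^2 a_i(\bb{x}_{\pi_i(y)})$.

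For the $\phi$-part, fix a node $j$ and let $y_i=\pi_i^{-1}(j)$. Then $\phi^{+}_j=\sum_i g_i(y_i)$, and Cauchy--Schwarz gives
\[
(\phi_j^{+})^2\le\Big(\sum_i a_i(\bb{x}_{y_i})\Big)\Big(\sum_i \frac{g_i(y_i)^2}{a_i(\bb{x}_{y_i})}\Big).
\]
The key structural fact is $\sum_i A_i=I$, because $\sum_i w_i=1$ and $\sum_i w_i\bb{c}_i=0$ for D2Q5. Using $a_i(\bb{x}_{y_i})\le\rho_h^2 a_i(\bb{x}_j)$, the first factor is at most $\rho_h^2$. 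Summing over $j$ and reindexing each $i$-sum by the bijection $\pi_i$ yields
\[
(1-\omega)\|\bb{\phi}^{+}\|_2^2\le(1-\omega)\rho_h^2\sum_i \bb{g}_i^{T}A_i^{-1}\bb{g}_i .
\]
Adding the two parts bounds the new energy by $\rho_h^2\sum_i\bb{g}_i^{T}A_i^{-1}\bb{g}_i$.

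\textbf{Step 2 (collision).} Write
\[
A_i^{-1/2}\bb{g}_i=\omega\,(A_i^{-1/2}\bb{f}_i)+(1-\omega)\,(A_i^{1/2}\bb{\phi}).
\]
By convexity of $\|\cdot\|_2^2$, this gives
\[
\bb{g}_i^{T}A_i^{-1}\bb{g}_i\le\omega\,\bb{f}_i^{T}A_i^{-1}\bb{f}_i+(1-\omega)\,\bb{\phi}^{T}A_i\bb{\phi}.
\]
Summing over $i$ and again using $\sum_iA_i=I$ produces exactly $\|\bb{\psi}\|_2^2$. Combining with Step 1 gives $\|\widehat{B}\bb{\psi}\|_2\le\rho_h\|\bb{\psi}\|_2$ for every $\bb{\psi}$, which is \eqref{equ:generalNormBound}.

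\textbf{Main obstacle.} The delicate point is the $\bb{\phi}^{+}$ row. It mixes all five streamed populations at one node, and each population arrives from a different source node. The bound only closes if the Cauchy--Schwarz weights are chosen as $a_i$ evaluated at the \emph{source} nodes, so that they match the $A_i^{-1}$ weights in Step 2, and if those weights are then compared to $a_i(\bb{x}_j)$ at the target node via $\rho_h$, so that the identity $\sum_iA_i=I$ can be used. If this weighting does not line up, the fallback is to absorb the mismatch into the constant, which would give the same $\rho_h$ factor via the diagonal similarity $A_i^{-1/2}P_iA_i^{1/2}=P_iR_i$ with $\|R_i\|_2\le\rho_h$.
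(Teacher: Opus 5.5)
Your proof is correct. It follows the same two-stage skeleton as the paper: collision is a contraction in the $W_{\omega,A}$-weighted norm, and streaming costs at most a factor $\rho_h$. The difference is that you carry it out componentwise as an energy estimate, while the paper uses an operator factorization.

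The paper computes $\widehat{B}=R\widetilde{P}S$, where $S=R^T=[\sqrt{\omega}\,I_{5N},\ \sqrt{1-\omega}\,Q]$, $Q=[A_0^{1/2};\dots;A_4^{1/2}]$ and $\widetilde{P}=\mathcal{A}^{-1/2}P\mathcal{A}^{1/2}$. It gets $\|R\|_2=\|S\|_2\le 1$ from $Q^TQ=\sum_iA_i=I_N$, so that $QQ^T$ is an orthogonal projection. It then reads off $\|\widetilde{P}\|_2=\rho_h$, since each $\widetilde{P}_i$ is a weighted permutation.

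In this language, your Step~2 is exactly $\|S\bb{\psi}\|_2\le\|\bb{\psi}\|_2$, because the $i$-th block of $S\bb{\psi}$ is $A_i^{-1/2}\bb{g}_i$. You prove it by convexity rather than by the projection property. Your Step~1 is $\|R\widetilde{P}\bb{v}\|_2\le\rho_h\|\bb{v}\|_2$ with $\bb{v}=S\bb{\psi}$.

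The only substantive difference is in the $\bb{\phi}^{+}$ row. The paper's bound $\|Q^T\|_2\le 1$ amounts to Cauchy--Schwarz with weights $a_i(\bb{x}_j)$ at the \emph{target} node, with the factor $\rho_h$ paid afterwards inside $\widetilde{P}$. You instead weight at the source nodes and pay $\rho_h$ in $\sum_i a_i(\bb{x}_{y_i})$. So your closing remark that the estimate closes \emph{only} with source-node weights is too pessimistic: both weightings work, and the fallback you sketch is essentially the paper's route.

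As for what each approach buys:
\begin{itemize}
\item The paper's factorization is shorter, displays the symmetric structure $R\widetilde{P}R^T$, and isolates the only possible source of growth in a single weighted permutation. It does, however, rely on the block computation of $\widehat{B}$, notably $E_IP\mathcal{A}^{1/2}=Q^T\widetilde{P}$.
\item Your argument bypasses that computation by working directly with $\bb{f}$, $\bb{\phi}$ and the collided populations $\bb{g}_i$. It also makes visible that $\sum_iA_i=I_N$ is used exactly twice: once in the collision step and once in the macroscopic row.
\end{itemize}
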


\begin{proof}
	Let
	\[
	\mathcal{A}
	=
	\operatorname{diag}(A_0,A_1,A_2,A_3,A_4)
	\]
	and define
	\[
	Q
	=
	\begin{bmatrix}
		A_0^{1/2}\\
		A_1^{1/2}\\
		A_2^{1/2}\\
		A_3^{1/2}\\
		A_4^{1/2}
	\end{bmatrix}
	\in\mathbb{R}^{5N\times N}.
	\]
	Since
	\[
	\sum_{i=0}^{4}A_i=I_N,
	\]
	we have
	\[
	Q^TQ=\sum_{i=0}^{4}A_i=I_N.
	\]
	Thus, $QQ^T$ is an orthogonal projection and satisfies
	\[
	0\preceq QQ^T\preceq I_{5N}.
	\]
	
	Define the weighted streaming matrix
	\[
	\widetilde{P}
	=
	\mathcal{A}^{-1/2}P\mathcal{A}^{1/2}
	=
	\operatorname{diag}
	\left(
	A_0^{-1/2}P_0A_0^{1/2},
	\ldots,
	A_4^{-1/2}P_4A_4^{1/2}
	\right).
	\]
	Let
	\[
	\gamma=\sqrt{\omega(1-\omega)}.
	\]
	From the definition of $\widehat{B}$ in \eqref{iterMatrix}, direct
	calculation gives
	\[
	\widehat{B}
	=
	\begin{bmatrix}
		\omega\widetilde{P}
		&
		\gamma\widetilde{P}Q\\
		\gamma Q^T\widetilde{P}
		&
		(1-\omega)Q^T\widetilde{P}Q
	\end{bmatrix}.
	\]
	The above matrix can be decomposed as
	\begin{equation}\label{equ:Bfactorization}
		\widehat{B}
		=
		R\widetilde{P}S,
	\end{equation}
	where
	\[
	R
	=
	\begin{bmatrix}
		\sqrt{\omega}I_{5N}\\
		\sqrt{1-\omega}Q^T
	\end{bmatrix},
	\qquad
	S
	=
	\begin{bmatrix}
		\sqrt{\omega}I_{5N}
		&
		\sqrt{1-\omega}Q
	\end{bmatrix}
	=
	R^T.
	\]
	Noting that
	\[
	R^TR
	=
	SS^T
	=
	\omega I_{5N}
	+
	(1-\omega)QQ^T
	\preceq I_{5N},
	\]
	we obtain
	\[
	\|R\|_2\leq1,
	\qquad
	\|S\|_2\leq1.
	\]
	It follows from \eqref{equ:Bfactorization} that
	\begin{equation}\label{equ:BbyPtilde}
		\|\widehat{B}\|_2
		\leq
		\|\widetilde{P}\|_2.
	\end{equation}
	
	For each lattice direction, let
	\[
	\widetilde{P}_i
	=
	A_i^{-1/2}P_iA_i^{1/2}.
	\]
	The action of $\widetilde{P}_i$ on the standard basis vector
	$\ket{j}$ is given by
	\begin{align*}
		\widetilde{P}_i\ket{j}
		&=
		A_i^{-1/2}P_iA_i^{1/2}\ket{j}\\
		&=
		\sqrt{
			\frac{a_i(\bb{x}_j)}
			{a_i(\bb{x}_{\pi_i(j)})}
		}
		\ket{\pi_i(j)}.
	\end{align*}
	Since $P_i$ is a permutation matrix, the vectors
	$\{\ket{\pi_i(j)}\}_{j=0}^{N-1}$ form an orthonormal basis. Therefore,
	\[
	\|\widetilde{P}_i\|_2
	=
	\max_j
	\sqrt{
		\frac{a_i(\bb{x}_j)}
		{a_i(\bb{x}_{\pi_i(j)})}
	}.
	\]
	Since $\widetilde{P}$ is block diagonal, we further obtain
	\[
	\|\widetilde{P}\|_2
	=
	\max_i\|\widetilde{P}_i\|_2
	=
	\max_{i,j}
	\sqrt{
		\frac{a_i(\bb{x}_j)}
		{a_i(\bb{x}_{\pi_i(j)})}
	}
	=
	\rho_h.
	\]
	Combining this equality with \eqref{equ:BbyPtilde} gives
	\[
	\|\widehat{B}\|_2\leq\rho_h,
	\]
	which completes the proof.
\end{proof}

\begin{remark}\label{remark:bounded}
	For a spatially uniform velocity field, the coefficient
	$a_i(\bb{x}_j)$ is independent of $j$, and the bound in
	Theorem~\ref{thm:norm} reduces to
	\[
	\|\widehat{B}\|_2\leq1.
	\]
	For a spatially nonuniform but smooth velocity field, assume that there
	exists a grid-independent constant $L>0$ such that
	\[
	\max_{i,j}
	\left|
	\log
	\frac{a_i(\bb{x}_{\pi_i(j)})}
	{a_i(\bb{x}_j)}
	\right|
	\leq L\Delta t.
	\]
	Then
	\[
	\|\widehat{B}\|_2
	\leq
	\e^{L\Delta t/2}.
	\]
	In particular, if $0\leq L\Delta t\leq1$, then
	\[
	\|\widehat{B}\|_2
	\leq
	1+L\Delta t.
	\]
	The detailed derivations of these two bounds are provided in
	Appendix~\ref{app:velocityBounds}.
\end{remark}

\begin{remark}
	Our result in Theorem \ref{thm:norm} requires $\tau^* = \tau/\Delta t>1$. According to Section 3.5.3.1 of \cite{kruger2017lattice},  $\tau^*\ge \frac12$ is a necessary condition for stability.  The requirement of $\tau^*>1$ is for the under-relaxation case, where $f_i$ decays exponentially towards $f_i^{\text{eq}}$ like in the continuous-time BGK equation.
\end{remark}

\section{Time-marching quantum algorithm for the LBM} \label{sec:timemarching}

The time-marching structure of the scheme in \eqref{modifiedtimemarching} provides a natural foundation for a time-marching quantum algorithm. Although time marching is a conventional and effective approach for classical solvers of time-dependent differential equations, its application in quantum computing remains limited. This limitation arises because quantum solvers based on time marching can suffer from exponentially vanishing success probability with respect to the number of time steps, making them impractical for many scenarios. In this section, we employ the techniques from \cite{DLTY2026timemarching,Lin2023timeODE} to implement a time-marching quantum algorithm for the evolution in \eqref{modifiedtimemarching}.

\subsection{Block-encoding of the time-marching matrix}

\subsubsection{Quantum linear algebra}

%The matrices encountered are sparse, allowing us to construct all the involved matrices using sparse query models \cite{Gilyen2019QSVD, Chakraborty2019blockEncode, Lin2022Notes}.
Given a matrix $A$, which is assumed to be sparse with at most $s_r$ nonzero entries in any row and at most $s_c$ nonzero entries in any column. In the following we set $s_r = s_c = s$. The sparse query model is described below.
\begin{definition}\label{def:sparsequery}
	Let $A=(a_{ij})$ be an $n$-qubit matrix with at most $s$ non-zero elements in each row and column. Assume that $A$ can be accessed through the following oracles:
	\[
	O_r \ket{l} \ket{i} = \ket{r(i,l)} \ket{i}, \qquad O_c \ket{l} \ket{j} = \ket{c(j,l)} \ket{j},
	\]
	\[O_A \ket{0}\ket{i,j} = \Big( a_{ij} \ket{0} + \sqrt{1 - |a_{ij}|^2} \ket{1} \Big) \ket{i,j},\]
	where $r(i,l)$  and $c(j,l)$ give the $l$-th non-zero entry in the $i$-th row and $j$-th column.
\end{definition}
In the above definition, $O_A$ can be replaced by
\[\tilde{O}_A \ket{0^a} \ket{i,j} = \ket{\tilde{a}_{ij}} \ket{i,j},\]
where $\tilde{a}_{ij}$ is an $a$-bit binary representation of the $(i, j)$ matrix entry of $A$.

Compared to sparse encoding, block encoding is a more general input model for matrix operations on a quantum computer \cite{2018arXiv180601838G,Gilyen2019QSVD,Chakraborty2019blockEncode,Lin2022Notes,ACL2023LCH2}, which not only serves as an input model for quantum algorithms but also facilitates various matrix operations, enabling the block encoding of more complex matrices.
Consider an $n$-qubit matrix $A$, which need not be unitary. A block-encoding of $A$ is realized by a unitary operator $U_A$ on $(n+m)$ qubits such that the submatrix $\bar{A} = A / \alpha$ is contained in the top-left block of $U_A$, where $\alpha \geq \|A\|$ is a normalization factor.
The action of $U_A$ on a state $\ket{0^m, b}$, where $\ket{0^m}$ denotes the $m$-qubit all-zero state, is given by
\[
U_A \ket{0^m, b} = \ket{0^m} \bar{A} \ket{b} + \ket{\bot},
\]
where $\ket{\bot}$ is a state vector orthogonal to $\ket{0^m} \bar{A} \ket{b}$. In practice, it suffices to construct a unitary that block-encodes $A$ up to a specified error tolerance $\varepsilon$.

\begin{definition}\label{def:blockencoding}
	Let $ A $ be an $ n $-qubit matrix, and define $ \Pi = (\bra{0^m} \otimes I) $, where $ I $ is the $ n $-qubit identity operator. If there exist a positive constant $ \alpha $, an error tolerance $ \varepsilon \geq 0 $, and an $ (m+n) $-qubit unitary $ U_A $ such that
	\[
	\| A - \alpha (\bra{0^m} \otimes I) U_A (\ket{0^m} \otimes I) \| \leq \varepsilon,
	\]
	then $ U_A $ is called an $ (\alpha, m, \varepsilon) $-block-encoding of $ A $. Here, $ \alpha $ is known as the block-encoding constant.
\end{definition}

For a matrix of size $ M \times N $ with $ M, N \le 2^n $, we can define an embedding matrix $ A_e \in \mathbb{C}^{2^n \times 2^n} $ such that the top-left block of $ A_e $ is $ A $, and all other elements are zero \cite{Gilyen2019QSVD}. Suppose that $ A, B \in \mathbb{C}^{M \times N} $ and $ C \in \mathbb{C}^{N \times K} $ with $ K \leq 2^n $. Then, it is straightforward to verify that $ A_e + B_e = (A + B)_e $ and $ A_e C_e = (AC)_e $.

For a sparse matrix, we can construct the block encoding using its sparse query model \cite{2018arXiv180601838G,Gilyen2019QSVD, Chakraborty2019blockEncode, Lin2022Notes}.

\begin{lemma}[Block-encoding of sparse-access matrices] \label{lem:sparse2block}
	Let $A=(a_{ij})$ be an $n$-qubit matrix with at most $s$ non-zero elements in each row and column. Assume that  $\|A\|_{\max} = \max_{ij} |a_{ij}| \le 1$ and $A$ can be accessed through the sparse query model in Definition \ref{def:sparsequery}. Then we have an implementation of $(s, n+3, \epsilon)$-block-encoding of $A$ with a single use of $O_r$ and $O_c$, two uses of $\tilde{O}_A$ and additionally using $\mathcal{O}(n+\log^{2.5}(s/\epsilon))$ elementary gates and $\mathcal{O}(a+\log^{2.5}(s/\epsilon))$ ancilla qubits, where $a$ represents $a$-bit binary representations of the entries of $A$.
\end{lemma}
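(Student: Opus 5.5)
The plan is the standard sparse-access construction. We split the block encoding into two unitaries, $U_L$ and $U_R$, which prepare states indexed by the row and the column respectively. The inner product of these states reproduces $a_{ij}/s$.

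\textbf{Column-side unitary.} On registers $\ket{0}\ket{0^n}\ket{j}$ (plus ancillas), we first apply Hadamards to a $\lceil\log_2 s\rceil$-qubit subregister to produce $\frac{1}{\sqrt{s}}\sum_{l}\ket{l}\ket{j}$. We then apply $O_c$ to obtain $\frac{1}{\sqrt s}\sum_l \ket{c(j,l)}\ket{j}$. The entry amplitude is written into a single extra qubit. To do this we call $\tilde O_A$ once to load the bit string $\tilde a_{c(j,l),j}$ into an $a$-qubit ancilla register. A controlled rotation then maps $\ket{0}\mapsto a\ket{0}+\sqrt{1-|a|^2}\ket{1}$; this is possible because $\|A\|_{\max}\le1$. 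Finally we call $\tilde O_A$ a second time to uncompute the ancilla register, so it returns to $\ket{0^a}$ and does not spoil interference. These are the ``two uses of $\tilde O_A$''.

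The controlled rotation is implemented by computing $\arccos$ (or $\arcsin$) of the entry to precision $\epsilon/s$ with reversible arithmetic. I expect the cited cost of $\mathcal{O}(\log^{2.5}(s/\epsilon))$ gates and ancillas to come from exactly this arithmetic. The case of a complex phase is handled by an additional phase rotation of the same cost.

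\textbf{Row-side unitary.} Symmetrically, $U_L$ acts on $\ket{0}\ket{0^n}\ket{i}$ to give $\frac{1}{\sqrt s}\sum_l \ket{0}\ket{i}\ket{r(i,l)}$, using one call to $O_r$ and Hadamards. Register ordering is chosen so that the row index sits where the column side holds $j$, and vice versa. A SWAP of the two $n$-qubit index registers aligns them, which costs $\mathcal{O}(n)$ gates.

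\textbf{Assembling and verifying.} The unitary is $U=U_L^\dagger U_R$, and we compute the $\ket{0\cdots0}$ block:
\[
\bra{0}\bra{0^n}\bra{i}U_L^\dagger U_R\ket{0}\ket{0^n}\ket{j}=\frac{1}{s}\sum_{l,l'}a_{c(j,l),j}\,\delta_{i,c(j,l)}\delta_{j,r(i,l')}=\frac{a_{ij}}{s}.
\]
Each nonzero pair $(i,j)$ contributes exactly once, because $i$ appears once among the column entries and $j$ once among the row entries. Only the $\ket{0}$ branch of the rotation qubit contributes. Hence we have an exact $(s,n+\mathcal{O}(1),0)$-block-encoding, and the ancilla count $n+3$ comes from the $n$-qubit index register plus the rotation qubit and a couple of flag qubits. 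The flags handle the case where $s$ is not a power of two and rows have fewer than $s$ nonzeros; $\tilde O_A$ returning $0$ for padding entries takes care of the latter.

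\textbf{Error accounting (main obstacle).} The delicate part is the precision budget. Replacing the exact rotation by its $b$-bit approximation perturbs each entry by at most $\delta$. The perturbation matrix has at most $s$ nonzeros per row and column, so its spectral norm is at most $s\delta$. After dividing by $s$ and multiplying by $\alpha=s$ the error is again at most $s\delta$. We therefore choose $\delta=\epsilon/s$, i.e.\ $b=\mathcal{O}(\log(s/\epsilon))$ bits, and the arithmetic cost then yields the stated gate and ancilla bounds. I would invoke the standard arithmetic bounds from \cite{Gilyen2019QSVD,Chakraborty2019blockEncode} rather than rederive them.
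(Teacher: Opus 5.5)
Your proposal is correct and is essentially the argument the paper relies on. The paper states this lemma without proof and cites Gily\'{e}n et al.\ and Lin's notes, whose standard proof is exactly your construction: the factorization $U_L^\dagger U_R$ with one call each to $O_r$ and $O_c$, a compute--rotate--uncompute use of $\tilde{O}_A$ (hence two calls), $\mathcal{O}(\log(s/\epsilon))$-bit arithmetic for the rotation angle, and the sparsity argument that turns entrywise error $\epsilon/s$ into spectral error $\epsilon$.
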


If $O_r$, $O_c$ and $O_A$ are exact, then we can implement an $(s, n+1, 0)$-block-encoding of $A$ with a single use of $O_r$, $O_c$ and $O_A$ and additionally using $\mathcal{O}(n)$ elementary gates and $\mathcal{O}(1)$ ancilla qubits \cite{Lin2022Notes}.

The following lemma describes how to perform matrix arithmetic on block-encoded matrices \cite{2018arXiv180601838G,An2022forwarding,Chakraborty2019blockEncode}.

\begin{lemma}\label{lem:arithmeticBE}
	Let $A_i$ be $n$-qubit matrix for $i=1,2$. If $U_i$ is an $(\alpha_i, m_i, \varepsilon_i)$-block encoding of $A_i$ with gate complexity $T_i$, then
	\begin{enumerate}	
		\item $A_1 + A_2$ has an $(\alpha_1+\alpha_2, m_1+m_2, \alpha_1\varepsilon_2 + \alpha_2\varepsilon_1)$-block encoding that can be implemented with gate complexity $\mathcal{O}(T_1 + T_2)$.
		
		\item $A_1 A_2$ has an $(\alpha_1\alpha_2, m_1+m_2, \alpha_1\varepsilon_2 + \alpha_2\varepsilon_1)$-block encoding that can be implemented with gate complexity $\mathcal{O}(T_1 + T_2)$.
		
		\item $A_1^\dag$ has an $(\alpha_1, m_1, \varepsilon_1)$-block encoding that can be implemented with gate complexity $\mathcal{O}(T_1)$.
		
		\item $A_1\otimes A_2$ has an $(\alpha_1\alpha_2, m_1+m_2, \alpha_1\epsilon_2 + \alpha_2\epsilon_1)$-block encoding that can be implemented with gate complexity $\mathcal{O}(T_1+T_2)$.
	\end{enumerate}
\end{lemma}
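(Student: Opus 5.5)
The statement is a standard toolkit result, and I would prove each of the four items directly from Definition~\ref{def:blockencoding}. Write $\widetilde{A}_i = (\bra{0^{m_i}}\otimes I)U_i(\ket{0^{m_i}}\otimes I)$, so that $\|A_i-\alpha_i\widetilde{A}_i\|\le\varepsilon_i$. Two facts are used throughout. First, $\|\widetilde{A}_i\|\le 1$, since $\widetilde{A}_i$ is a compression of a unitary. Second, $\|A_i\|\le\alpha_i$, which is the normalization convention adopted in the paper for block-encodings. The gate-count claims follow immediately from the constructions, since each construction uses $U_1$ and $U_2$ (or controlled versions of them) a constant number of times plus $\mathcal{O}(1)$ extra gates.

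For the \emph{product}, I would let $U_1$ act on the ancilla register $m_1$ together with the system, and $U_2$ act on a disjoint ancilla register $m_2$ together with the system. Set $U=(I_{m_2}\otimes U_1)(I_{m_1}\otimes U_2)$, with registers suitably ordered. Projecting both ancilla registers onto $\ket{0}$ factorizes, because each factor acts trivially on the other factor's register, and gives $\widetilde{A}_1\widetilde{A}_2$. For the error I would split
\[
A_1A_2-\alpha_1\alpha_2\widetilde{A}_1\widetilde{A}_2=(A_1-\alpha_1\widetilde{A}_1)A_2+\alpha_1\widetilde{A}_1(A_2-\alpha_2\widetilde{A}_2),
\]
whose norm is at most $\varepsilon_1\alpha_2+\alpha_1\varepsilon_2$ by the two facts above.

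The \emph{tensor product} is handled the same way. I would take $U_1\otimes U_2$ after permuting qubits so that the ancillas come first, and use the analogous splitting
\[
A_1\otimes A_2-\alpha_1\alpha_2\widetilde{A}_1\otimes\widetilde{A}_2=(A_1-\alpha_1\widetilde{A}_1)\otimes A_2+\alpha_1\widetilde{A}_1\otimes(A_2-\alpha_2\widetilde{A}_2).
\]
Combining this with $\|X\otimes Y\|=\|X\|\|Y\|$ gives the bound $\alpha_2\varepsilon_1+\alpha_1\varepsilon_2$.

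For the \emph{adjoint}, I would take $U_1^\dagger$. Its projected block is $\widetilde{A}_1^\dagger$, and $\|A_1^\dagger-\alpha_1\widetilde{A}_1^\dagger\|=\|A_1-\alpha_1\widetilde{A}_1\|\le\varepsilon_1$. The gate cost is that of reversing the circuit, which is $\mathcal{O}(T_1)$.

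The \emph{sum} is the step I expect to need the most care; it is a linear-combination-of-unitaries argument. I would introduce a state-preparation unitary
\[
V\ket{0}=\sqrt{\alpha_1/(\alpha_1+\alpha_2)}\,\ket{0}+\sqrt{\alpha_2/(\alpha_1+\alpha_2)}\,\ket{1}
\]
on one selector qubit, and form $(V^\dagger\otimes I)\big(\ketbra{0}{0}\otimes U_1+\ketbra{1}{1}\otimes U_2\big)(V\otimes I)$. Projecting the selector and the ancillas onto zero yields $(\alpha_1\widetilde{A}_1+\alpha_2\widetilde{A}_2)/(\alpha_1+\alpha_2)$. The main obstacle is bookkeeping rather than analysis, and it has two parts.

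First, the ancilla count. The selector qubit and the larger of the two ancilla registers are needed, and this is at most $m_1+m_2$ whenever $\min(m_1,m_2)\ge1$, since $\max(m_1,m_2)+1\le m_1+m_2$. A shorter register is padded by letting the corresponding $U_i$ act as the identity on the unused qubits.

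Second, the error. The direct estimate gives $\varepsilon_1+\varepsilon_2$. This is dominated by the stated bound $\alpha_1\varepsilon_2+\alpha_2\varepsilon_1$ in the normalized regime $\alpha_i\ge1$, which is the regime in which the lemma is used later, for example with $\alpha=s$ from Lemma~\ref{lem:sparse2block}. I would record this explicitly rather than claim the weaker form unconditionally.
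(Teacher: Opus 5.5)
Your proposal is correct and is essentially the standard argument from the references the paper cites in lieu of a proof: the two unitaries composed on disjoint ancillas for the product and tensor product, a one-qubit LCU for the sum, and the reversed circuit for the adjoint. Your extra conditions on the sum are genuinely needed, since item~1 as literally stated can fail without them: for $m_1=m_2=0$, the oracles $U_1=I$, $U_2=-I$ give $A_1+A_2=0$, which has no $(2,0,0)$-block-encoding; and for $\alpha_i=\varepsilon_i=\tfrac12$ with $U_i=X\otimes I$ (so $\widetilde{A}_i=0$), the choices $A_1=A_2=\pm\tfrac12 I$ cannot be distinguished from the oracles, which forces a worst-case error of $1=\varepsilon_1+\varepsilon_2$ rather than the claimed $\tfrac12$.
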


For linear combination of block-encoded matrices, we need the following state preparation unitaries.
\begin{definition}
	Let $\bb{y} \in \mathbb{C}^{2^b}$ and $\|\bb{y}\|_1 \le \beta$. Then a pair of unitaries $(P_L, P_R)$ is a $(\beta, b, \epsilon)$-state-preparation-pair, if
	\[P_L \ket{0^b} = \sum_{j = 0}^{2^b-1} c_j \ket{j},  \qquad P_R \ket{0^b} = \sum_{j = 0}^{2^b-1} d_j \ket{j}, \qquad
	\sum_{j = 0}^{2^b-1} |\beta (c_j^\dag d_j) - y_j| \le \epsilon.\]
\end{definition}

\begin{lemma}\cite[Lemma 52]{2018arXiv180601838G} \label{lem:LCU}
	Let $A = \sum_{j=0}^{2^b-1} y_j A_j$ be an $n$-qubit matrix with $U_j$ being an $(\alpha, a, \epsilon)$-block-encoding of $A_j$.
	Suppose that $(P_L, P_R)$ is a $(\beta, b, \delta)$-state-preparation-pair for $\bb{y}$ and
	\[W = \sum_{j = 0}^{2^b-1} \ket{j}\bra{j} \otimes U_j  \]
	is an $(n + a + b)$-qubit unitary matrix. Then we can implement an $(\alpha \beta, a+b, \alpha \delta + \alpha \beta \epsilon)$-block-encoding of $A$, with a single use of $W$, $P_R$ and $P_L^\dag$.
\end{lemma}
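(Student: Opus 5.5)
The plan is to write down the standard LCU circuit explicitly and read off its top-left block. Define the $(n+a+b)$-qubit unitary
\[
U = \big(P_L^\dagger \otimes I_{a}\otimes I_{n}\big)\, W \,\big(P_R \otimes I_{a}\otimes I_{n}\big),
\]
which uses $W$, $P_R$ and $P_L^\dagger$ exactly once each, as the lemma requires. The claim is that $U$ is the desired block-encoding with $a+b$ ancilla qubits, namely the $b$-qubit selection register together with the $a$ ancillas shared by the $U_j$.

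\textbf{Step 1: compute the projected block.} Since $P_R\ket{0^b}=\sum_j d_j\ket{j}$, we have $\bra{0^b}P_L^\dagger = (P_L\ket{0^b})^\dagger = \sum_j c_j^{*}\bra{j}$. Because $W=\sum_j \ketbra{j}{j}\otimes U_j$ is block diagonal in the selection register, we get
\[
\big(\bra{0^{b}}\otimes\bra{0^a}\otimes I\big)\,U\,\big(\ket{0^{b}}\otimes\ket{0^a}\otimes I\big)=\sum_{j} c_j^{*} d_j\,\big(\bra{0^a}\otimes I\big)U_j\big(\ket{0^a}\otimes I\big).
\]
Set $B_j=\alpha(\bra{0^a}\otimes I)U_j(\ket{0^a}\otimes I)$. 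Then $\|B_j-A_j\|\le\epsilon$ by the block-encoding definition (Definition~\ref{def:blockencoding}). Also $\|B_j\|\le\alpha$, because $B_j/\alpha$ is a sub-block of a unitary. Multiplying the display by $\alpha\beta$ therefore gives $\sum_j \beta c_j^{*}d_j B_j$.

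\textbf{Step 2: error estimate.} Write
\[
A-\sum_j \beta c_j^{*}d_j B_j=\sum_j\big(y_j-\beta c_j^{*}d_j\big)B_j+\sum_j y_j\,(A_j-B_j).
\]
The first sum is bounded by $\alpha\sum_j|\beta c_j^{*}d_j-y_j|\le\alpha\delta$, using the state-preparation-pair condition. The second sum is bounded by $\|\bb{y}\|_1\epsilon\le\beta\epsilon$. So the total error is at most $\alpha\delta+\beta\epsilon$.

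\textbf{Step 3: match the stated constant.} The remaining point is the bookkeeping that turns $\beta\epsilon$ into the stated $\alpha\beta\epsilon$. When $\alpha\ge1$, which is the case for every encoding used in this paper, we have $\beta\epsilon\le\alpha\beta\epsilon$ and the lemma follows. When $\alpha<1$, I would first regard each $U_j$ as an $(\alpha',a,\epsilon)$-block-encoding with $\alpha'=\max(\alpha,1)$, using a trivially padded rescaling. Alternatively I would follow \cite{2018arXiv180601838G} and interpret the error there in normalized form. The algebra in Steps 1 and 2 is routine; this step is the only one where care with conventions is needed.
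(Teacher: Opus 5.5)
Your Steps 1 and 2 are the standard LCU argument behind the cited Lemma 52 of Gily\'en et al.; the paper offers only that citation, not a proof of its own. Those steps are correct. The circuit $(P_L^\dagger\otimes I)W(P_R\otimes I)$ uses each oracle once, and the block-diagonality of $W$ collapses the projected block to $\sum_j c_j^{*}d_j(\bra{0^a}\otimes I)U_j(\ket{0^a}\otimes I)$. Your splitting then gives an error of at most $\alpha\delta+\beta\epsilon$. This is sharper than the stated $\alpha\delta+\alpha\beta\epsilon$ and implies it whenever $\alpha\ge1$.

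The problem is Step 3 for $\alpha<1$, and it is not a matter of conventions. Under Definition~\ref{def:blockencoding} the lemma is false in that regime, so no patch can succeed. Take $b=1$, $\bb{y}=(1,0)$, $\beta=1$, $P_L=P_R=I$ (so $\delta=0$), $U_0=U_1=I$, $\alpha=\tfrac12$, $A_0=(\tfrac12+\epsilon)I$ and $A_1=\tfrac12 I$. Each $U_j$ is a $(\tfrac12,a,\epsilon)$-block-encoding of $A_j$, and $A=A_0$. But the LCU circuit is the identity, and $\|A-\tfrac12 I\|=\epsilon>\epsilon/2=\alpha\delta+\alpha\beta\epsilon$.

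No other circuit built from these oracles can do better. The same oracles are equally valid encodings when $A_0=(\tfrac12-\epsilon)I$, and no block $X$ can have $\tfrac12 X$ within $\epsilon/2$ of both $(\tfrac12\pm\epsilon)I$.

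Your first repair also fails on its own terms. The same $U_j$ is not an $(\alpha',a,\epsilon)$-block-encoding of $A_j$ for $\alpha'>\alpha$, because $\alpha'$ times its block is a different matrix. You would have to shrink the block by $\alpha/\alpha'$, which costs an extra ancilla. Even then the output normalization is $\alpha'\beta=\beta$, not $\alpha\beta$. Your second repair replaces Definition~\ref{def:blockencoding} rather than proving the lemma.

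The right conclusion is that the lemma holds with error $\alpha\delta+\beta\epsilon$ in general, i.e.\ with the stated error only under the extra hypothesis $\alpha\ge1$. This does not affect the paper, since every use of the lemma there has $\epsilon=0$.
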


With the help of the above preparations, we are ready to block encode the
time-marching matrix $\widehat{B}$ in \eqref{iterMatrix}.
For simplicity, we assume $N=2^n$, where $N$ is the number of grid points.

\subsubsection{Block-encoding of $A$}

The matrix $A$ in \eqref{matA} has five blocks along the row and six blocks along the column. We extend it into an embedding matrix $A_e$ with eight blocks in each row and column. This expanded matrix can then be decomposed as
\begin{equation}\label{Ae}
	A_e = (1 - \frac{1}{\tau^*}) A_e^{(1)} + \frac{1}{\tau^*} A_e^{(2)},
\end{equation}
where
\[A_e^{(1)} = \begin{bmatrix}
	I &       &         &     &   &  &    \\
	&  I  &         &     &   &  &  \\
	&       & \ddots  &      &  &  &  \\
	&       &         & I  &  &  &   \\
	&       &         &    & O  &  &   \\
	&       &         &    &    & O &   \\
	&       &         &    &    &   & O
\end{bmatrix}, \quad A_e^{(2)} = \begin{bmatrix}
	O &       &         &     & A_0  & O  & O \\
	&  O  &         &     & A_1 & O  & O \\
	&       & \ddots  &      &  \vdots & O  & O \\
	&       &         & O  & A_4  & O  & O  \\
	&       &         &   &  O  & O  & O  \\
	&       &         &   & O  & O  & O  \\
	&       &         &   & O  & O  & O  \\
\end{bmatrix}.\]

(1) The matrix $A_e^{(1)}$ can be written as
\[A_e^{(1)} = L \otimes I_n,\]
where
\begin{equation}\label{LE}
	L = \ket{0}\bra{0} \otimes I_{4\times 4} +  \ket{1}\bra{1} \otimes E,   \qquad  E = \text{diag}(1,0,0,0).
\end{equation}
The matrix $L$ can be treated as a ``controlled-$E$'' operator. The block-encoding of $E$ is straightforward to implement with the circuit shown in Fig.~\ref{fig:UE}. It is a $(1,1,0)$-block-encoding of $E$, with gate complexity $\mathcal{O}(1)$. The full block-encoding of $A_e^{(1)}$ is then displayed in Fig.~\ref{fig:UL}, which is a $(1,1,0)$-block-encoding of $A_e^{(1)}$ with gate complexity $\mathcal{O}(1)$.

\begin{figure}[H]
	\centering
	\centerline{
		\Qcircuit @C=1em @R=2em {
			\lstick{\ket{0}_1^a}  & \qw & \gate{X}& \qw  & \gate{X}  & \qw & \qw   \\
			\lstick{\ket{\cdot}_2}& \qw &  \qw    & \qw  & \ctrlo{-1}& \qw & \qw
	}}
	\caption{A quantum circuit that implements the block-encoding of $E$, denoted by $U_E$.} \label{fig:UE}
\end{figure}

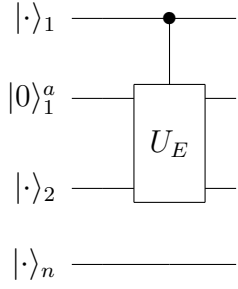
\begin{figure}[H]
	\centering
	\centerline{
		\Qcircuit @C=1em @R=2em {
			\lstick{\ket{\cdot}_1}  & \qw &  \ctrl{1}   & \qw\\
			\lstick{\ket{0}_1^a}& \qw &  \multigate{1}{U_E} & \qw \\
			\lstick{\ket{\cdot}_2}& \qw &  \ghost{U_E} & \qw \\
			\lstick{\ket{\cdot}_n}& \qw &  \qw        & \qw
	}}
	\caption{A quantum circuit that implements the block-encoding of $A_e^{(1)}$, denoted by $U_{A_e^{(1)}}$.} \label{fig:UL}
\end{figure}

(2) The matrix $A_e^{(2)}$ can be written as
\begin{equation}\label{Ae2}
	A_e^{(2)} = E_{05} \otimes A_0 + E_{15}\otimes A_1 + \cdots +  E_{45} \otimes A_4,
\end{equation}
where $E_{ij} = \ket{i}_3 \bra{j}_3$.  For the matrix $E_{05}$, we want to construct a block-encoding such that
\[U_{E_{05}} \ket{0}_1^a \ket{j}_3 = \begin{bmatrix} E_{05} \ket{j}_3 \\ *  \end{bmatrix} = \begin{cases}
	\ket{0}_1^a \ket{0}_3 + \ket{\bot}, \quad & j = 5, \\
	\ket{1}_1^a \ket{*}_3, \quad & j \ne 5. \end{cases}\]
Noting that $ \ket{5}_3 = \ket{101} $ and $ \ket{0}_3 = \ket{000} $, we present the block-encoding circuit shown in Fig.~\ref{fig:UE05}. This circuit represents a $ (1,1,0) $-block-encoding of $ E_{05} $, with gate complexity $ \mathcal{O}(1) $.
Similar circuits can be constructed for $ E_{i5} $ where $ i = 1, \cdots, 4 $. The details are omitted for brevity.

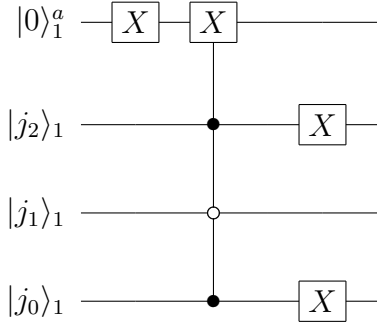
\begin{figure}[H]
	\centering
	\centerline{
		\Qcircuit @C=1em @R=2em {
			\lstick{\ket{0}_1^a}  & \gate{X} &  \gate{X} & \qw & \qw       & \qw   \\
			\lstick{\ket{j_2}_1}    & \qw &  \ctrl{-1}   & \qw & \gate{X}  &  \qw \\
			\lstick{\ket{j_1}_1}    & \qw &  \ctrlo{-1}  & \qw & \qw       &  \qw      \\
			\lstick{\ket{j_0}_1}    & \qw &  \ctrl{-1}   & \qw & \gate{X}  &  \qw
	}}
	\caption{A quantum circuit that implements the block-encoding of $E_{05}$, denoted by $U_{E_{05}}$.} \label{fig:UE05}
\end{figure}

For simplicity, we assume that each matrix $A_i$ is accessed through
the exact sparse-query model. By the positivity assumption and the
normalization identity $\sum_{i=0}^{4}a_i(\bb{x}_j)=1$, we have
$0<a_i(\bb{x}_j)\leq1$. Therefore, by
Lemma~\ref{lem:sparse2block}, we can construct a
$(1,n+1,0)$-block-encoding of $A_i$ in \eqref{equ:A_i}, with gate
complexity $\mathcal{O}(n)$.

By Lemma \ref{lem:arithmeticBE}, we can construct a $(1, n+2, 0)$-block-encoding of the matrix $E_{i5} \otimes A_i$. Using the LCU technique in Lemma \ref{lem:LCU} for \eqref{Ae2}, we obtain a $(5, n+5, 0)$-block-encoding of $A_e^{(2)}$, with gate complexity $\mathcal{O}(5n)$.

(3) When $\tau^*>1$, we have $ 0< 1 - \frac{1}{\tau^*}< 1$ and $ 0<\frac{1}{\tau^*} < 1$. By Lemma \ref{lem:arithmeticBE}, we finally construct a $(6, n+6, 0)$-block-encoding of $A_e$, with gate complexity $\mathcal{O}(5n)$.

\subsubsection{Block-encoding of $P$}

\begin{figure}[H]
	\centering
	\includegraphics[scale=0.4]{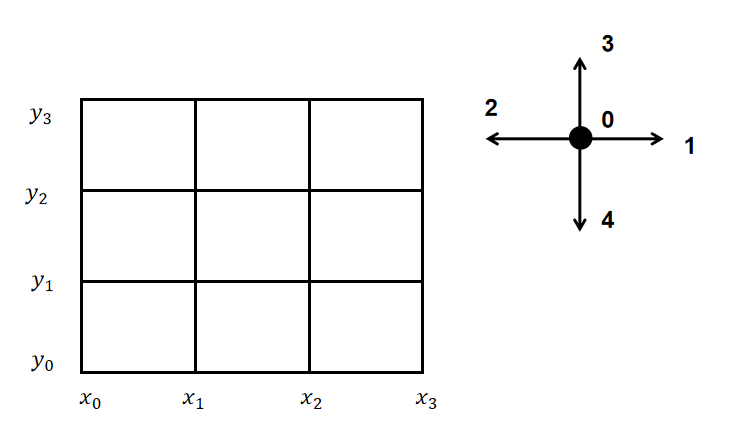}
	\caption{Schematic diagram of the D2Q5 discrete velocity model}
	\label{fig:2d_point}
\end{figure}

Let $\Omega = (0,1)^2$. We divide it into equidistant grid points along both the $x$- and $y$-axes, denoted by
\[0=x_0<\cdots <x_{N_x-1} = 1, \qquad 0 = y_0<\cdots < y_{N_y-1},\]
where $N_x$ and $N_y$ are the numbers of nodes along the two axes (see Fig.~\ref{fig:2d_point}). Without loss of generality, we set $N_x = 2^{n_x}$ and $N_y = 2^{n_y}$. Then the total number of grid points is $N = N_x N_y = 2^{n_x + n_y}$, and hence $n = n_x + n_y$.

For fixed $y_j$, the action of $P_1$ is
\[P_1 \ket{i,j} = \begin{cases}
	\ket{i+1, j}, \qquad & i =0,\cdots, N_x-2\\
	\ket{0,j}, \qquad & i = N_x-1
\end{cases} \qquad
\mbox{or} \qquad P_1 \ket{i,j} = \ket{(i+1) ~\mbox{mod}~ N_x, \quad j}.\]
Since the above operation performs a modular increment of the input state in the $x$-direction, we introduce the following unitary shift operator
\[S_x\ket{i} = \ket{(i+1) ~\mbox{mod}~ N_x},\]
which gives
\[P_1 = S_x \otimes I_{n_y},\]
where $I_{n_y}$ is an $n_y$-qubit identity matrix.  The quantum circuit of $S_x$ is presented in \cite{kharazi2025explicit}. According to the discussion there, this operator can be implemented by using $\mathcal{O}(n_x)$ multi-controlled Toffolis or $\mathcal{O}(n_x^2)$ Toffoli gates, and the cost can further be reduced to $\mathcal{O}(n_x)$ by using an additional ancilla qubit.

Similarly, we have
\[P_2 = S_x^\dag \otimes I_{n_y}, \qquad P_3 = I_{n_x}\otimes S_y, \qquad  P_4 = I_{n_x}\otimes S_y^\dag,\]
where $S_x^\dag$ is the transpose of $S_x$, and $S_y$ can be defined as $S_x$.

In the block-encoding of $A$, the matrix $A$ is extended into embedding matrices $A_e$  with eight blocks in each row and column. For this reason, we need to expand the unitary matrix $P$ as a matrix $P_e$, which can be written as
\begin{align*}
	P_e
	& = \ket{0}\bra{0} \otimes
	\text{diag}(P_0,P_1,P_2, P_3) + \ket{1}\bra{1} \otimes
	\text{diag}(P_4,O,O, O) \\
	& = \ket{0}\bra{0} \otimes
	\text{diag}(P_0,P_1,P_2, P_3) + \ket{1}\bra{1} \otimes (E \otimes P_4),
\end{align*}
where $E$ is given in \eqref{LE}. It is clear that $P_e$ is a select oracle and can be realized by control operation. Since $E$ has a $(1,1,0)$-block-encoding, we can also obtain a $(1, 1,0)$-block-encoding of $P_e$, with gate complexity $\mathcal{O}(2(n_x+n_y)) = \mathcal{O}(2n)$.

\subsubsection{Block-encoding of $E_I$}

The matrix $E_I$ in \eqref{phiupdate} can be written as
\[E_I = \begin{bmatrix}1&1&1&1&1\end{bmatrix}\otimes I_n=: \tilde{E} \otimes I_n.\]
So we only need to give the block-encoding of $\tilde{E}$.

We first expand the first matrix $\tilde{E}$ into a matrix whose number of rows and columns are both powers of 2, and denote it as $\tilde{E}_e$, namely,
\[\tilde{E}_e =\begin{bmatrix}
	1&1&1&1&1&0&0&0\\
	0&0&0&0&0&0&0&0\\
	\vdots&\vdots&\vdots&\vdots&\vdots&\vdots&\vdots&\vdots\\
	0&0&0&0&0&0&0&0\\
\end{bmatrix}_{8\times 8}.\]
The corresponding extended matrix for $E_I$ is denoted by $(E_I)_e$.
The matrix $\tilde{E}_e$ can be decomposed as
\begin{equation}
	\tilde{E}_e =
	\begin{bmatrix}
		1&0\\
		0&0\\
	\end{bmatrix}
	\otimes
	\begin{bmatrix}
		1&1&1&1\\
		0&0&0&0\\
		0&0&0&0\\
		0&0&0&0\\
	\end{bmatrix}
	+
	\begin{bmatrix}
		0&1\\
		0&0\\
	\end{bmatrix}
	\otimes
	\begin{bmatrix}
		1&0&0&0\\
		0&0&0&0\\
		0&0&0&0\\
		0&0&0&0\\
	\end{bmatrix}
	:= \ketbra{0}{0} \otimes D
	+ \ketbra{0}{1} \otimes E.
\end{equation}
Since similar matrices have been handled before for $\ketbra{0}{0}$, $\ketbra{0}{1}$ and $E$, we only provide the implementation of $D$.

The matrix $D$ can be written as
\begin{equation}\label{Ddecp}
	\begin{bmatrix}
		1&1&1&1\\
		0&0&0&0\\
		0&0&0&0\\
		0&0&0&0\\
	\end{bmatrix}
	= 2 \begin{bmatrix}
		1&0&0&0\\
		0&0&0&0\\
		0&0&0&0\\
		0&0&0&0\\
	\end{bmatrix}
	\begin{bmatrix}
		\frac{1}{2}&\frac{1}{2}&\frac{1}{2}&\frac{1}{2}\\
		*&*&*&*\\
		*&*&*&*\\
		*&*&*&*\\
	\end{bmatrix}
\end{equation}
where $*$ denotes arbitrary complex numbers. The first matrix in \eqref{Ddecp} is exactly the matrix $E$, which has already been discussed before. The second matrix can be chosen as $H \otimes H$, where $H$ is the Hadamard gate. The corresponding quantum circuit is shown in Fig.~\ref{fig:UD}, which gives the $(2,1,0)$-block-encoding of $D$.

\begin{figure}[H]
	\centering
	\centerline{
		\Qcircuit @C=1em @R=2em {
			\lstick{\ket{0}_1^a}    & \qw                  &   \gate{X} & \gate{X}      & \qw       &\qw  \\
			\lstick{\ket{\cdot}_2}  & \gate{H^{\otimes 2}} &   \qw      & \ctrlo{-1}    & \qw    &\qw     \\
	}}
	\caption{A quantum circuit that implements the block-encoding of $D$, denoted by $U_D$.} \label{fig:UD}
\end{figure}
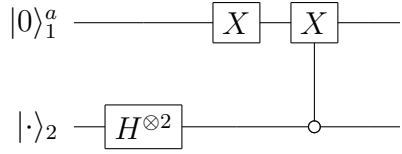

Therefore, the block encoding of matrix $(E_I)_e$ can be implemented by the LCU procedure as displayed in Fig.~\ref{fig:UI}, which is a $(3,3,0)$-block-encoding of $(E_I)_e$ with gate complexity $\mathcal{O}(1)$.
\begin{figure}[H]
	\centering
	\centerline{
		\Qcircuit @C=1em @R=2em {
			\lstick{\ket{0}_1^a}    & \gate{H} &\ctrlo{1}         & \ctrl{1} & \gate{H}          &\qw \\
			\lstick{\ket{0}_1^a}      & \qw      & \multigate{1}{U_{E_{00}}} & \multigate{1}{U_{E_{01}}} & \qw           &\qw  \\
			\lstick{\ket{\cdot}_1}      & \qw      & \ghost{U_{E_{00}}} & \ghost{U_{E_{01}}} & \qw           &\qw  \\
			\lstick{\ket{0}_1^a}  & \qw      & \multigate{1}{U_D}\qwx       & \multigate{1}{U_E}\qwx  &\qw        &\qw      \\
			\lstick{\ket{\cdot}_2}  & \qw      & \ghost{U_D}        & \ghost{U_E}  &\qw        &\qw      \\
			\lstick{\ket{\cdot}_n}  & \qw      & \qw              & \qw&\qw  &   \qw
	}}
	\caption{A quantum circuit that implements the block-encoding of $(E_I)_e$, denoted by $U_{E_I}$.} \label{fig:UI}
\end{figure}
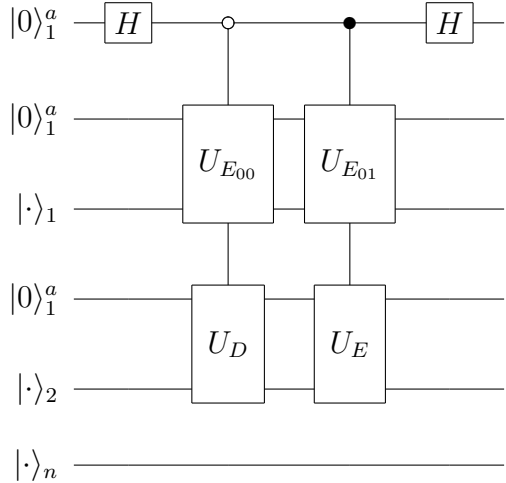

\subsubsection{Block-encoding of the time-marching matrix}

Combining the previous constructions, we are ready to obtain the
block-encoding of the time-marching matrix $\widehat{B}$ in
\eqref{iterMatrix}.

\begin{theorem}\label{thm:BEM}
	Let $\tau^*>1$ and set
	$\omega=1-\frac{1}{\tau^*}\in(0,1)$. Suppose that there exists
	a grid-independent constant $a_*>0$ such that
	$a_i(\bb{x}_j)\geq a_*$ for $i=0,\ldots,4$ and $j\in[N]$.
	Then we can construct an
	$(\alpha_{\widehat{B}},n_{\widehat{B}},0)$-block-encoding of
	the matrix $\widehat{B}$ in \eqref{iterMatrix}, with gate
	complexity $\mathcal{O}(9n+6)$, where
	\[
	\alpha_{\widehat{B}}
	=
	18\sqrt{2}
	\sqrt{
		\frac{
			\max\left\{\frac{\omega}{a_*},1-\omega\right\}
		}{
			\min\left\{\omega,1-\omega\right\}
		}
	},
	\qquad
	n_{\widehat{B}}=3n+18,
	\]
	and $n$ is the number of qubits for the grid points.
\end{theorem}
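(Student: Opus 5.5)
We write $\widehat{B}=W_{\omega,A}MW_{\omega,A}^{-1}$ with $M=\begin{bmatrix}PA\\ E_IPA\end{bmatrix}$, and assemble a block-encoding of each factor from the pieces constructed above. We then multiply them using Lemma~\ref{lem:arithmeticBE}(2). In the embedded ($8$-block) picture, $M_e = \big(\ket{0}\bra{0}\otimes I_{8N} + \ket{1}\bra{0}\otimes (E_I)_e\big)\,(P_eA_e)$. Equivalently, $M_e$ is the LCU of the two terms $\ket{0}\bra{0}\otimes P_eA_e$ and $\ket{1}\bra{0}\otimes (E_I)_eP_eA_e$, with one extra register qubit distinguishing the $\bb f$-rows from the $\bb\phi$-rows.

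The ingredients come with these parameters. $A_e$ has a $(6,n+6,0)$-block-encoding with cost $\mathcal{O}(5n)$. $P_e$ has a $(1,1,0)$-block-encoding with cost $\mathcal{O}(2n)$. $(E_I)_e$ has a $(3,3,0)$-block-encoding with cost $\mathcal{O}(1)$. By Lemma~\ref{lem:arithmeticBE}, the sum $I+(E_I)_e$-type factor, placed in the appropriate rows, costs a normalization of at most $1+3$. A more careful LCU of the two rows costs normalization $3$, up to the $\sqrt2$ that appears from stacking the two row blocks. I expect the constant $18\sqrt2 = 6\cdot 3\cdot\sqrt2$ to arise this way: $6$ from $A_e$, $3$ from $E_I$, and $\sqrt2$ from combining the two block rows with a Hadamard-type state-preparation pair. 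The ancilla count should likewise add up: $n+6$ from $A_e$, $1$ from $P_e$, $3$ from $E_I$, plus register and padding qubits.

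For the diagonal scalings, $W_{\omega,A}$ and $W_{\omega,A}^{-1}$ are diagonal. Their entries are $\sqrt{\omega/a_i(\bb x_j)}$ and $\sqrt{1-\omega}$, respectively $\sqrt{a_i(\bb x_j)/\omega}$ and $1/\sqrt{1-\omega}$. We therefore block-encode them through the exact sparse query model, which by Lemma~\ref{lem:sparse2block} gives $s=1$ and $\mathcal{O}(n)$ gates each, after dividing by their max-norms. The max-norm of $W_{\omega,A}$ is $\sqrt{\max\{\omega/a_*,1-\omega\}}$. The max-norm of $W_{\omega,A}^{-1}$ is $\sqrt{\max\{a_i/\omega, 1/(1-\omega)\}}\le 1/\sqrt{\min\{\omega,1-\omega\}}$, using $a_i\le 1$. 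Multiplying these two normalizations by $18\sqrt2$ gives exactly the stated $\alpha_{\widehat B}$.

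Each diagonal scaling uses $n+4$ qubits in total, i.e.\ $n$ system qubits plus $3$ block-index qubits plus one flag. The total ancilla register is therefore the sum of those of the five factors: two diagonal ones, $A_e$, $P_e$, and the $E_I$-stacking. We need to check that this bookkeeping yields $3n+18$. The gate counts should sum to $\mathcal{O}(n)+\mathcal{O}(n)+\mathcal{O}(5n)+\mathcal{O}(2n)+\mathcal{O}(1)=\mathcal{O}(9n+6)$. Exactness of all oracles gives error $0$ throughout, since the error terms in Lemma~\ref{lem:arithmeticBE} vanish.

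I expect the main obstacle to be the bookkeeping rather than the mathematics. Three points need care. First, the embedding: $M$ has $6N$ rows and columns while $A_e,P_e,(E_I)_e$ live on $8N$-dimensional spaces, and the stacked structure $[M_1;M_2]$ must be matched consistently so that products of embeddings equal embeddings of products. Second, tracking the ancilla qubits so that they total exactly $3n+18$. I would first fix a single index layout, with $3$ block qubits and $n$ spatial qubits, and express every factor on it. Only then would I count ancillas, treating the $\sqrt2$ factor as coming from the two-term LCU for stacking $M_1$ and $M_2$.
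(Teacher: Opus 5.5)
Your plan is correct and is essentially the paper's proof: write $\widehat B=W_{\omega,A}MW_{\omega,A}^{-1}$, then encode $M_e$ with constant $18\sqrt2$ using a Hadamard on a row-selector qubit, the $(6,n+7,0)$-encoding of $P_eA_e$, and $U_{E_I}$ controlled on that qubit, and finally multiply by one-sparse encodings of $W_{\omega,A}^{\pm1}$ with constants $\sqrt{\max\{\omega/a_*,1-\omega\}}$ and $1/\sqrt{\min\{\omega,1-\omega\}}$ via Lemma~\ref{lem:arithmeticBE} (your ``normalization $3$ for both rows'' correctly requires also scaling the $\ket{0}$ row by $1/3$, e.g.\ by a rotation on one $U_{E_I}$ ancilla, which the bare controlled-$U_{E_I}$ does not do). The ancilla count you left open closes as $(n+6)+1+3+2(n+4)=3n+18$, as in the paper, once the row-selector qubit is treated as an output index of the rectangular $M_e$ rather than an ancilla, and $n+4=(n+3)+1$ is read as the ancilla count (not the total width) of each diagonal encoding.
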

\begin{proof}
	In the block-encoding procedure, the matrices $A$, $P$ and $E_I$ are extended into embedding matrices $A_e$, $P_e$ and $(E_I)_e$ with eight blocks in each row and column. The matrix $M$ in \eqref{iterMatrix} is then expanded as
	\begin{align*}
		M_e  = \begin{bmatrix} P_e A_e \\ (E_I)_e P_e A_e \end{bmatrix}
		= \ket{0}_1 \otimes P_e A_e + \ket{1}_1 \otimes (E_I)_e P_e A_e.
	\end{align*}

	According to the previous discussion, we can construct a $(6, n+7, 0)$-block-encoding of $P_e A_e$, denoted by $U_{P_eA_e}$, with gate complexity $\mathcal{O}(7n)$, which means
	\[U_{P_eA_e} \ket{0^{n+7}} \ket{\psi} = \frac{1}{6}\ket{0^{n+7}}P_e A_e \ket{\psi} + \ket{\bot}. \]
	Noting that
	\begin{align*}
		M_e \bb{\psi}
		& = \begin{bmatrix} P_e A_e \bb{\psi} \\ (E_I)_e P_e A_e \bb{\psi} \end{bmatrix}
		= \begin{bmatrix} P_e A_e  &   \\  &  (E_I)_e P_e A_e  \end{bmatrix}
		\begin{bmatrix} \bb{\psi} \\  \bb{\psi} \end{bmatrix} \\
		& = \Big( \ket{0}_1\bra{0}_1 \otimes P_e A_e + \ket{1}_1\bra{1}_1 \otimes (E_I)_e P_e A_e\Big)( \ket{0}_1\bb{\psi} + \ket{1}_1\bb{\psi} ) \\
		& = \sqrt{2} \Big( \ket{0}_1\bra{0}_1 \otimes I_e + \ket{1}_1\bra{1}_1 \otimes (E_I)_e\Big)(I_1 \otimes P_e A_e) (\text{H}\ket{0}_1 \otimes \bb{\psi}),
	\end{align*}
	where $\text{H}$ is the Hadamard gate. Since $(E_I)_e$ has a $(3,3,0)$-block-encoding $U_{E_I}$ with gate complexity $\mathcal{O}(1)$, according to Lemma \ref{lem:arithmeticBE}, we can construct the circuit for $M_e$ as shown in Fig.~\ref{fig:UMe}, which gives a $(18\sqrt{2}, n+10, 0)$-block-encoding of $M_e$ with gate complexity $\mathcal{O}(7n)$.
	
	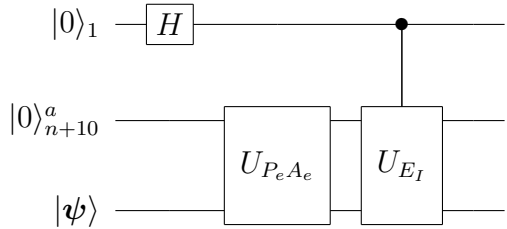
\begin{figure}[H]
		\centering
		\centerline{
			\Qcircuit @C=1em @R=2em {
				\lstick{\ket{0}_1}    & \gate{H} &\qw         & \ctrl{1} & \qw          &\qw \\
				\lstick{\ket{0}_{n+10}^a}  & \qw      & \multigate{1}{U_{P_eA_e}}        & \multigate{1}{U_{E_I}}\qwx  &\qw        &\qw      \\
				\lstick{\ket{\bb{\psi}}}  & \qw      & \ghost{U_{P_eA_e}}        & \ghost{U_{E_I}}  &\qw        &\qw
		}}
		\caption{A quantum circuit that implements the block-encoding of $M_e$, denoted by $U_M$.} \label{fig:UMe}
	\end{figure}

It follows from the definition of $W_{\omega,A}$ that
\[
\|W_{\omega,A}\|_2
\leq
\alpha_W
:=
\max
\left\{
\sqrt{\frac{\omega}{a_*}},
\sqrt{1-\omega}
\right\},
\]
and
\[
\|W_{\omega,A}^{-1}\|_2
\leq
\alpha_{W^{-1}}
:=
\max
\left\{
\frac{1}{\sqrt{\omega}},
\frac{1}{\sqrt{1-\omega}}
\right\}.
\]

The matrices $W_{\omega,A}$ and $W_{\omega,A}^{-1}$ are diagonal
and one-sparse. Under the exact sparse-query model, we can construct
an $(\alpha_W,n+4,0)$-block-encoding of $(W_{\omega,A})_e$ and an
$(\alpha_{W^{-1}},n+4,0)$-block-encoding of
$(W_{\omega,A}^{-1})_e$, respectively, with gate complexity
$\mathcal{O}(n+3)$ for both oracles.

Combining these two block-encodings with the
$(18\sqrt{2},n+10,0)$-block-encoding of $M_e$ and applying
Lemma~\ref{lem:arithmeticBE}, we can construct an
$(\alpha_{\widehat{B}},n_{\widehat{B}},0)$-block-encoding of
$\widehat{B}$, where
\[
\alpha_{\widehat{B}}
=
18\sqrt{2}\,\alpha_W\alpha_{W^{-1}},
\qquad
n_{\widehat{B}}=3n+18.
\]
The resulting gate complexity is $\mathcal{O}(9n+6)$.
\end{proof}

\subsection{Dilating unitarization of the time-marching scheme}

In the sequel, we review the quantum algorithm in \cite{DLTY2026timemarching} for time-marching scheme.

\subsubsection{The dilating unitarization} \label{subsec:dilatingUnitarization}

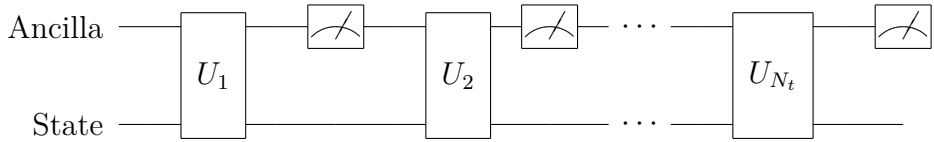
\begin{figure}[!htb]
	\centering
	\centerline{
		\Qcircuit @C=1em @R=2em {
			\lstick{\text{Ancilla}} & \qw & \multigate{1}{U_1} & \qw & \meter & \qw & \multigate{1}{U_2} & \meter & \qw  & \cdots & \quad & \qw & \multigate{1}{U_{N_t}} & \qw & \meter\\
			\lstick{\text{State}}   & \qw & \ghost{U_1}        & \qw & \qw    & \qw & \ghost{U_2}        & \qw    & \qw  & \cdots & \quad & \qw & \ghost{U_{N_t}}  & \qw & \qw
	}}
	\caption{Implementing $\Xi_{N_t} \cdots \Xi_2 \Xi_1$. After we apply each $U_j$, we measure the ancilla qubits, and only
		proceed when the measurement result is all 0, and otherwise abort the procedure.}
	\label{fig:TimeMarchingCircuit}
\end{figure}

The time-marching algorithm requires the sequential application of evolution operators, implementing the product $\Xi_{N_t} \cdots \Xi_2 \Xi_1$, where $\Xi_j$ corresponds to the $j$-th time step. Assume each $\Xi_j$ has an $(\alpha_j, m_j, \epsilon_j)$-block-encoding, denoted by $U_j$.
A naive implementation without amplitude amplification, depicted in Fig.~\ref{fig:TimeMarchingCircuit}, involves applying these unitaries sequentially. However, this approach necessitates intermediate measurements to confirm the successful application of each $\Xi_j$. To facilitate amplitude amplification across all time steps, the ancilla register must be duplicated $N_t$ times.

\begin{figure}[H]
	\centering
	\includegraphics[scale=0.2]{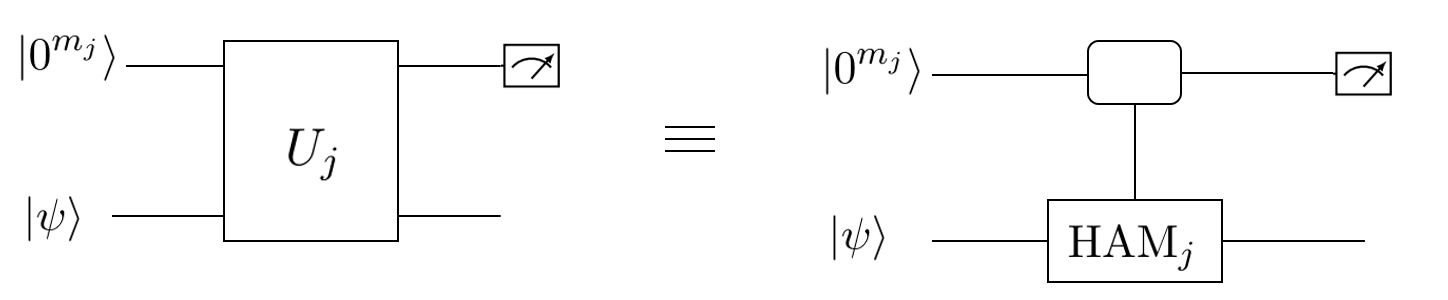}\\
	\caption{A modified quantum circuit representation of the block-encoding oracle $U_j$}\label{fig:BEHAMj}
\end{figure}

To simplify the description of the ancilla qubits used in block-encoding, we employ the modified quantum circuit representation from \cite{Low2019Interaction}, shown in Fig.~\ref{fig:BEHAMj}. This notation provides a more flexible depiction of the ancilla qubits on the right-hand side. Using this representation, the circuit from Fig.~\ref{fig:TimeMarchingCircuit} can be redrawn in the equivalent form of Fig.~\ref{fig:BEHAMProd}, which makes the duplication of the ancilla register more apparent.

\begin{figure}[H]
	\centering
	\includegraphics[scale=0.2]{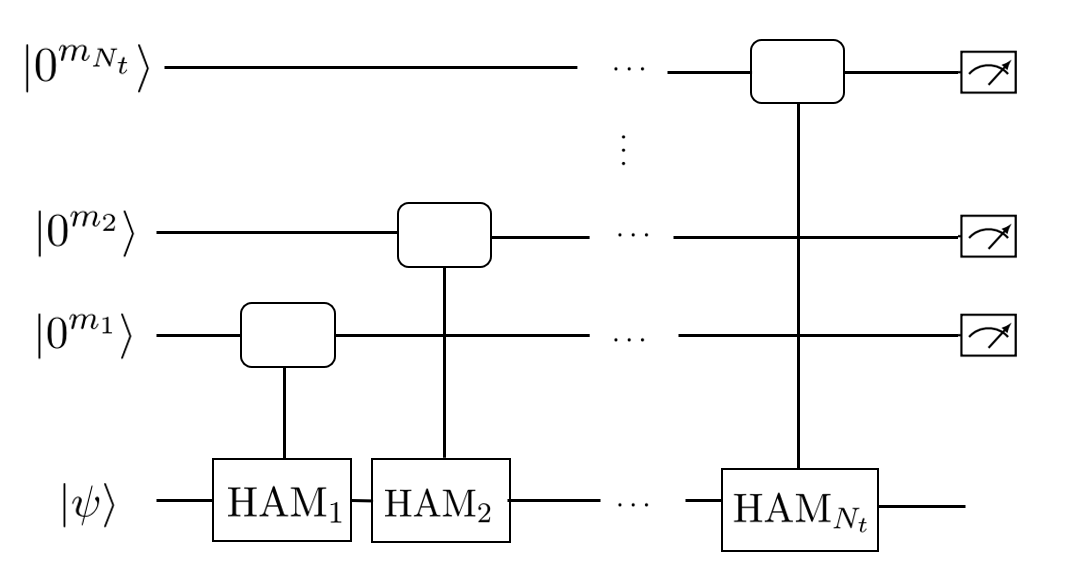}\\
	\caption{A modified quantum circuit representation for the circuit in Fig.~\ref{fig:TimeMarchingCircuit}}\label{fig:BEHAMProd}
\end{figure}

To avoid this overhead, we can apply the dilating unitarization in \cite{DLTY2026timemarching}. Consider the $j$-th step:
\[\bb{\psi}(t_j) = \Xi_j\bb{\psi}(t_{j-1}).\]
For simplicity, we assume that the numbers of qubits used for each $U_j$ are the same, i.e., $m_1 = m_2 = \cdots = m_{N_t} = m$. The block-encoding means
\[\ket{0}_a \otimes \bb{\psi}(t_{j-1}) \xrightarrow{U_j}  \ket{0}_a \otimes \frac{1}{\alpha_j} \bb{\psi}(t_j) + \ket{\bot}.\]
To achieve the update, we introduce a ``Time'' register and express the block-encoding as
\[\ket{0}_t \otimes \ket{0}_a \otimes \bb{\psi}(t_{j-1}) \xrightarrow{I^{\otimes n_t} \otimes U_j}  \ket{0}_t \otimes\ket{0}_a \otimes \frac{1}{\alpha_j} \bb{\psi}(t_j) + \ket{0}_t \otimes \ket{\bot},\]
where
\[\ket{\bot} = \ket{1}_a \otimes * + \ket{2}_a \otimes *+\cdots + \ket{2^m-1}_a \otimes *\]
is orthogonal to $\ket{0}_a \otimes  \bb{\psi}(t_{j-1}) $.
The core idea of dilating unitarization is to retain the target vector in the first computational basis $\ket{0}_t$ of the time register, while shifting extraneous vectors to other computational basis states. To achieve this, we define a unitary operator ADD that implements addition by 1 modulo the 2-power of the number of the qubits in the time register, i.e.,
\[\text{ADD}: \ket{c} \to \ket{c+1 \quad \text{mod} (N_t+1)}, \qquad c = 0,1,\cdots,N_t.\]
To relocate extraneous vectors, we apply a controlled increment of the time register, activated when the ancilla register is not in state $\ket{0}_a$. This relocation operator can be written as
\[
S = I^{\otimes n_t} \otimes \ket{0}_a\bra{0}_a \otimes I^{\otimes n} +  \text{ADD} \otimes \sum_{j=1}^{2^m-1} \ket{j}_a\bra{j}_a \otimes I^{\otimes n}.
\]
To simplify the control operation, we rewrite $S$ as an equivalent form
\[
S = \Big(\text{ADD}^\dag \otimes \ket{0}_a\bra{0}_a \otimes I^{\otimes n} +  I^{\otimes n_t} \otimes \sum_{j=1}^{2^m-1} \ket{j}_a\bra{j}_a \otimes I^{\otimes n}\Big)(\text{ADD} \otimes I^{\otimes m} \otimes I^{\otimes n}).
\]
The dilating circuit is illustrated in Fig.~\ref{fig:TimeRegisters}.

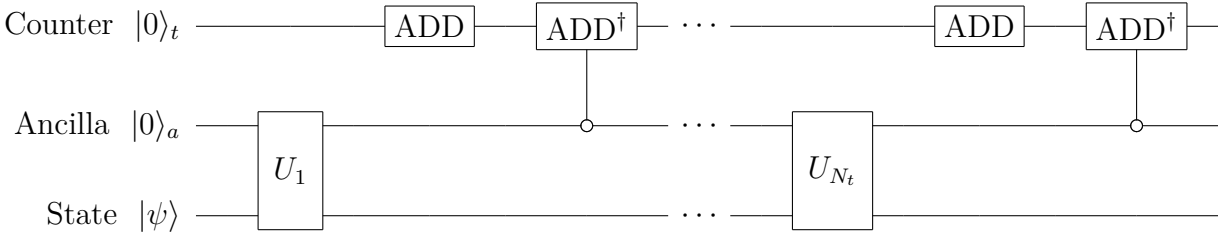
\begin{figure}[!htb]
	\centering
	\centerline{
		\Qcircuit @C=1em @R=2em {
			\lstick{\text{Counter}~~\ket{0}_t}  & \qw & \qw                & \qw  & \gate{\text{ADD}} & \qw  & \gate{\text{ADD}^\dag}
			& \qw & \cdots & \quad & \qw
			& \qw                & \qw  & \gate{\text{ADD}} & \qw  & \gate{\text{ADD}^\dag}  & \qw \\
			\lstick{\text{Ancilla}~~\ket{0}_a}  & \qw & \multigate{1}{U_1} & \qw  & \qw               & \qw  & \ctrlo{-1}
			& \qw & \cdots & \quad & \qw
			& \multigate{1}{U_{N_t}} & \qw  & \qw               & \qw  & \ctrlo{-1}  & \qw\\
			\lstick{\text{State}~~\ket{\psi}}   & \qw & \ghost{U_1}        & \qw  & \qw               & \qw  & \qw
			& \qw & \cdots & \quad & \qw
			& \ghost{U_{N_t}}        & \qw  & \qw               & \qw  & \qw & \qw
	}}
	\caption{Quantum circuit for the dilating unitarization.}	\label{fig:TimeRegisters}
\end{figure}

\subsubsection{The uniform singular value amplification}

Let
\[
\bb{\psi}(t)
=
W_{\omega,A}
\begin{bmatrix}
	\bb{f}^{t}\\
	\bb{\phi}^{t}
\end{bmatrix}.
\]
We can write the whole process of the dilating unitarizaton as
\begin{equation}\label{flowmap}
	\bb{\Psi}(t_0) = \begin{bmatrix}
		\bb{\psi}(t_0) \\
		\bb{0} \\
		\hdashline
		\bb{0} \\
		\bb{0}\\
		\hdashline
		\bb{0} \\
		\bb{0}\\
		\hdashline
		\vdots\\
		\vdots\\
		\hdashline
		\bb{0} \\
		\bb{0} \\
	\end{bmatrix} \xrightarrow{V_1} \begin{bmatrix}
		\frac{1}{\alpha_1}\bb{\psi}(t_1) \\
		\bb{0} \\
		\hdashline
		\bb{0} \\
		*_1\\
		\hdashline
		\bb{0} \\
		\bb{0}\\
		\hdashline
		\vdots\\
		\vdots\\
		\hdashline
		\bb{0} \\
		\bb{0} \\
	\end{bmatrix} \xrightarrow{V_2} \begin{bmatrix}
		\frac{1}{\alpha_2\alpha_1}\bb{\psi}(t_2) \\
		\bb{0} \\
		\hdashline
		\bb{0} \\
		*_2\\
		\hdashline
		\bb{0} \\
		*_1\\
		\hdashline
		\vdots\\
		\vdots\\
		\hdashline
		\bb{0} \\
		\bb{0} \\
	\end{bmatrix}  \cdots \xrightarrow{V_{N_t}} \begin{bmatrix}
		\frac{1}{\alpha_{N_t}\cdots \alpha_1}\bb{\psi}(t_{N_t}) \\
		\bb{0} \\
		\hdashline
		\bb{0} \\
		*_{N_t}\\
		\hdashline
		\bb{0} \\
		*_{N_t-1}\\
		\hdashline
		\vdots\\
		\vdots\\
		\hdashline
		\bb{0} \\
		*_1 \\
	\end{bmatrix} =: \bb{\Psi}(t_{N_t}).
\end{equation}
It is clear that the success probability of projecting onto $\bb{\psi}(T) = \bb{\psi}(t_{N_t})$ is
\begin{equation}\label{Pru0ut}
	\text{P}_{\text{r}} = \frac{1}{(\alpha_{N_t}\cdots \alpha_1)^2}\frac{\|\bb{\psi}(t_{N_t})\|^2}{\|\bb{\Psi}(t_{N_t})\|^2}
	= \frac{1}{(\alpha_{N_t}\cdots \alpha_1)^2}\frac{\|\bb{\psi}(T)\|^2}{\|\bb{\psi}(t_0)\|^2}.
\end{equation}

A significant challenge for time-marching quantum solvers is the
potential for an exponentially vanishing success probability as the
number of time steps increases. This occurs when the block-encoding
constants satisfy $\alpha_i>1$, even if the norms of the time-marching
matrices remain bounded. To mitigate this exponential overhead, we
employ the uniform singular value amplification (USVA) technique from
\cite[Theorem 17]{Gilyen2019QSVD} and
\cite[Lemma 2]{An2022forwarding}.

\begin{lemma}[Uniform singular value amplification] \label{lem:SVA}
	Let $U$ be an $(\alpha, m, 0)$-block-encoding of a matrix $A$. Then, for any $\delta \in (0,1)$ and $\epsilon > 0$, we can construct a $(\frac{\|A\|}{1-\delta},\ m+1,\ \epsilon \|A\|)$-block-encoding of $A$. This construction uses $\mathcal{O}(\frac{\alpha}{\delta\|A\|} \log \frac{\alpha}{\|A\| \epsilon})$ applications of (controlled-) $U$ and its inverse.
\end{lemma}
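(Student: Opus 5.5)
The plan is to obtain Lemma~\ref{lem:SVA} from quantum singular value transformation, as in \cite[Theorem 17]{Gilyen2019QSVD}. Write the singular value decomposition of the encoded block, $A/\alpha=\sum_k (\sigma_k/\alpha)\ket{u_k}\bra{v_k}$, so that every normalized singular value lies in $[0,\|A\|/\alpha]$. We then look for an odd real polynomial $p$ with the following properties:
\begin{itemize}
\item $|p(x)|\le 1$ on $[-1,1]$;
\item $|p(x)-\gamma x|\le \gamma\epsilon'|x|$ for $|x|\le \|A\|/\alpha$, where $\gamma=\alpha(1-\delta)/\|A\|$.
\end{itemize}
Applying QSVT with $p$ to $U$ costs $\deg p$ uses of controlled $U$ and $U^\dagger$. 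It requires one extra ancilla qubit to implement the projector-controlled phase rotations, which accounts for $m+1$. The result is a unitary whose top-left block is $p^{(SV)}(A/\alpha)=\sum_k p(\sigma_k/\alpha)\ket{u_k}\bra{v_k}$. Since $p$ is odd, the singular vectors are preserved.

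The error analysis is direct once $p$ is in hand. The target block is $\gamma A/\alpha=A(1-\delta)/\|A\|$, which has normalization $\|A\|/(1-\delta)$. For each $k$,
\[
|p(\sigma_k/\alpha)-\gamma\sigma_k/\alpha|\le \gamma\epsilon'\sigma_k/\alpha\le(1-\delta)\epsilon'.
\]
Hence the operator-norm error of the block is at most $(1-\delta)\epsilon'$. After multiplying by the normalization $\|A\|/(1-\delta)$, the error becomes $\epsilon'\|A\|$. Choosing $\epsilon'=\epsilon$ gives the claimed $(\|A\|/(1-\delta),m+1,\epsilon\|A\|)$-block-encoding.

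The main obstacle is constructing $p$ with degree $\mathcal{O}\big(\frac{\alpha}{\delta\|A\|}\log\frac{\alpha}{\|A\|\epsilon}\big)$. I would follow the construction of \cite{Gilyen2019QSVD}, in three steps.
\begin{enumerate}
\item Approximate the linear function $\gamma x$ by a bounded polynomial on a slightly larger interval $[-(1+\delta')\|A\|/\alpha,(1+\delta')\|A\|/\alpha]$. Because $\gamma x\le 1-\delta$ there, the approximant stays below $1$ in magnitude.
\item Multiply by a polynomial approximation of a rectangle function that is $\approx1$ on the inner interval and $\approx0$ outside the enlarged one. This keeps $|p|\le1$ on all of $[-1,1]$. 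The rectangle is built from polynomial approximations of $\mathrm{erf}(kx)$ with $k\sim \alpha/(\delta\|A\|)$; a Chebyshev truncation of its Taylor/Jacobi--Anger expansion gives degree $\mathcal{O}(k\log(1/\epsilon))$.
\item Rescale the error to relative form. The approximation error must be relative to $x$ near $0$. This is handled by approximating $\gamma x\cdot g(x)$ with $g$ even, since then the oddness of $p$ makes the error vanish linearly at $0$. Converting absolute error into relative error $\epsilon$ at the scale $\|A\|/\alpha$ costs the $\log\frac{\alpha}{\|A\|\epsilon}$ factor.
\end{enumerate}
Carefully tracking these constants and checking that $|p|\le1$ survives the truncation is where most of the work lies. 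The overall strategy (QSVT applied to an odd, bounded, relatively accurate approximation of $\gamma x$) should be routine given the known results.
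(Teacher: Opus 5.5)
Your proposal is correct and follows the paper's route. The paper gives no proof of its own; it imports the lemma from \cite[Theorem 17]{Gilyen2019QSVD} and \cite[Lemma 2]{An2022forwarding}, and your argument (QSVT with an odd, bounded polynomial approximating $\gamma x$ with relative error on $[0,\|A\|/\alpha]$, then rescaling by $\|A\|/(1-\delta)$) is the argument behind that cited result.

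One clarification on your step 3. If you take $p(x)=\gamma x\,g(x)$ with $g$ an even rectangle approximant, the relative error $\gamma|x|\,|g(x)-1|$ on the inner interval comes for free. The $\log\gamma$ in the degree comes from a different requirement: $|g|\le 1/\gamma$ outside $[-1/\gamma,1/\gamma]$, with the transition of $g$ confined to $[(1-\delta)/\gamma,\,1/\gamma]$, so that $|p|\le 1$ on $[-1,1]$.
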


In the above lemma, the exact value $\|A\|$ can be replaced with any known upper bound.

For brevity, we denote the time-marching scheme in
\eqref{modifiedtimemarching} as
\begin{equation}\label{denotetime}
	\bb{\psi}(t_j)
	=
	\Xi_j\bb{\psi}(t_{j-1}),
	\qquad
	j=1,\cdots,N_t,
\end{equation}
where $\Xi_j=\widehat{B}$ for the time-independent velocity fields
considered here, and
\[
\bb{\psi}(T)
=
W_{\omega,A}
\begin{bmatrix}
	\bb{f}^{T}\\
	\bb{\phi}^{T}
\end{bmatrix}.
\]
In the following discussion, we only focus on the quantum simulation
of \eqref{denotetime} and neglect the errors between
\eqref{denotetime} and the advection-diffusion equation.

Let $\Lambda\geq1$ be a known upper bound such that
\[
\|\Xi_j\|\leq\Lambda,
\qquad
j=1,\cdots,N_t.
\]
For a spatially uniform velocity field, we take $\Lambda=1$. For a
spatially nonuniform but smooth velocity field satisfying the
condition in Theorem~\ref{thm:norm} and Remark~\ref{remark:bounded}, we take
$\Lambda=\e^{L\Delta t/2}$. 
Define $\Gamma_T:=\Lambda^{N_t}.$
Therefore, $\Gamma_T=1$ for a spatially uniform velocity field, while
$\Gamma_T=\e^{LT/2}$ for a spatially nonuniform but smooth velocity
field, where $T=N_t\Delta t$.

\begin{theorem}
	Under the conditions of Theorem~\ref{thm:norm}, there is a
	quantum algorithm that prepares an approximation of the normalized
	solution $\ket{\bb{\psi}(T)}$, denoted by
	$\ket{\tilde{\bb{\psi}}(T)}$, with $\Omega(1)$ probability and a
	flag indicating success, using
	\[
	\mathcal{O}\left(
	\Gamma_T
	\frac{\|\bb{\psi}(t_0)\|}{\|\bb{\psi}(T)\|}
	N_t
	\log\frac{N_t\Gamma_T}{\varepsilon}
	\right)
	\]
	queries to the block-encoding oracle of each $\Xi_j$ and its
	inverse. The unnormalized vector $\tilde{\bb{\psi}}(T)$ provides
	an $\varepsilon$-approximation of $\bb{\psi}(T)$ in the following
	sense
	\[
	\|\tilde{\bb{\psi}}(T)-\bb{\psi}(T)\|
	\leq
	\varepsilon\|\bb{\psi}(t_0)\|.
	\]
\end{theorem}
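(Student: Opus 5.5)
The plan combines three ingredients: the block-encoding of $\widehat{B}$ from Theorem~\ref{thm:BEM}, uniform singular value amplification (Lemma~\ref{lem:SVA}) applied step by step, and the dilating unitarization of Section~\ref{subsec:dilatingUnitarization} followed by amplitude amplification.

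\textbf{Step 1: amplify each step.} Starting from the $(\alpha_{\widehat{B}},n_{\widehat{B}},0)$-block-encoding of $\Xi_j=\widehat{B}$, I apply Lemma~\ref{lem:SVA} with $\|A\|$ replaced by the known bound $\Lambda$ and with $\delta = 1/N_t$. This yields an $(\alpha, n_{\widehat{B}}+1, \epsilon_0\Lambda)$-block-encoding $U_j$ of $\Xi_j$, where
\[
\alpha=\frac{\Lambda}{1-1/N_t}.
\]
Its cost is $\mathcal{O}(\alpha_{\widehat{B}} N_t \log(\alpha_{\widehat{B}}/\epsilon_0))$ queries, and $\alpha_{\widehat{B}}$ is a constant independent of $N_t$. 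The product of the block-encoding constants then satisfies
\[
\alpha^{N_t} = \Lambda^{N_t}(1-1/N_t)^{-N_t} = \mathcal{O}(\Gamma_T),
\]
since $(1-1/N_t)^{-N_t}\le 4$ for $N_t\ge2$.

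\textbf{Step 2: chain the steps without measurement.} I run the dilating circuit of Fig.~\ref{fig:TimeRegisters} with the $U_j$. By \eqref{flowmap}, the component on $\ket{0}_t\ket{0}_a$ is $\alpha^{-N_t}\tilde{\bb\psi}(T)$. Here $\tilde{\bb\psi}(T)$ is the product of the approximate blocks $\tilde\Xi_j$, each satisfying $\|\tilde\Xi_j-\Xi_j\|\le\epsilon_0\Lambda$. The total state keeps norm $\|\bb\psi(t_0)\|$ because the whole map is unitary.

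\textbf{Step 3: error telescoping (the delicate step).} I write
\[
\tilde{\bb\psi}(T)-\bb\psi(T)=\sum_{k}\tilde\Xi_{N_t}\cdots\tilde\Xi_{k+1}(\tilde\Xi_k-\Xi_k)\Xi_{k-1}\cdots\Xi_1\bb\psi(t_0).
\]
The factors are bounded as follows: $\|\Xi_j\|\le\Lambda$, and $\|\tilde\Xi_j\|\le\alpha$ because $\tilde\Xi_j$ sits inside a unitary scaled by $\alpha$. Each term is therefore at most $\alpha^{N_t}\epsilon_0\|\bb\psi(t_0)\|$, and the sum is at most $N_t\,\mathcal{O}(\Gamma_T)\,\epsilon_0\|\bb\psi(t_0)\|$. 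Choosing $\epsilon_0=\varepsilon/(CN_t\Gamma_T)$ gives the required bound. Each step then costs $\mathcal{O}(\log(N_t\Gamma_T/\varepsilon))$ in the logarithmic factor, for a total of $\mathcal{O}(N_t\log(N_t\Gamma_T/\varepsilon))$ queries per run of the full circuit.

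\textbf{Step 4: success probability and amplitude amplification.} By \eqref{Pru0ut}, the success probability is
\[
\text{P}_{\text{r}}=\frac{\|\tilde{\bb\psi}(T)\|^2}{\alpha^{2N_t}\|\bb\psi(t_0)\|^2}=\Omega\Big(\frac{\|\bb\psi(T)\|^2}{\Gamma_T^2\|\bb\psi(t_0)\|^2}\Big),
\]
which requires $\|\tilde{\bb\psi}(T)\|\ge\|\bb\psi(T)\|/2$. That holds once $\varepsilon\|\bb\psi(t_0)\|\le\|\bb\psi(T)\|/2$; in the opposite regime I would shrink $\varepsilon$ accordingly, which costs only a logarithmic factor. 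Since the circuit involves no intermediate measurement, amplitude amplification applies directly. It uses $\mathcal{O}(\Gamma_T\|\bb\psi(t_0)\|/\|\bb\psi(T)\|)$ repetitions, and each repetition is flagged by the all-zero time and ancilla registers. Multiplying the repetitions by the per-run cost from Step 3 gives the stated total query count.

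The main obstacle is the bookkeeping in Step 3: the error of each step can be amplified by later steps by up to a factor $\alpha^{N_t}$. Keeping that factor at $\mathcal{O}(\Gamma_T)$ is exactly why $\delta$ must be taken as $1/N_t$ in Step 1, at the price of the linear factor $N_t$ in the per-step cost.
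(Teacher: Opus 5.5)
Your proposal is correct and follows the paper's proof essentially step for step. The shared ingredients are: uniform singular value amplification on each step with $\delta\sim 1/N_t$ and $\epsilon\sim\varepsilon/(N_t\Gamma_T)$, a telescoping error bound, the dilating unitarization, and amplitude amplification with $\mathcal{O}(\Gamma_T\|\bb{\psi}(t_0)\|/\|\bb{\psi}(T)\|)$ rounds. You bound $\|\tilde\Xi_j\|\le\alpha$ through the unitary embedding rather than by $\Lambda(1+\epsilon)$, and you distinguish $\|\tilde{\bb{\psi}}(T)\|$ from $\|\bb{\psi}(T)\|$ in the success probability more carefully than the paper does. One caution: your $\mathcal{O}(N_t\log(N_t\Gamma_T/\varepsilon))$ ``per run'' figure is valid only as a count of queries to the oracle of each fixed $\Xi_j$, which is how the statement counts; summed over all $N_t$ steps it is a factor $N_t$ larger.
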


\begin{proof}
	Under the conditions of Theorem~\ref{thm:norm}, we have
	$\|\Xi_j\|\leq\Lambda$. According to
	Theorem~\ref{thm:BEM}, we can construct an
	$(\alpha_{\widehat{B}},n_{\widehat{B}},0)$-block-encoding of each
	$\Xi_j$, denoted by $U_j$, with gate complexity
	$\mathcal{O}(9n+6)$, where
	\[
	\alpha_{\widehat{B}}
	=
	18\sqrt{2}
	\sqrt{
		\frac{
			\max\left\{\frac{\omega}{a_*},1-\omega\right\}
		}{
			\min\left\{\omega,1-\omega\right\}
		}
	},
	\qquad
	n_{\widehat{B}}=3n+18.
	\]
	Using the known upper bound $\Lambda$ in
	Lemma~\ref{lem:SVA}, for any $\delta\in(0,1)$ and $\epsilon>0$,
	we can construct a
	$(\frac{\Lambda}{1-\delta},n_{\widehat{B}}+1,
	\epsilon\Lambda)$-block-encoding $\tilde{U}_j$ of $\Xi_j$. Thus,
	\[
	\|\Xi_j-\tilde{\Xi}_j\|
	\leq
	\epsilon\Lambda,
	\]
	where $\tilde{\Xi}_j$ denotes the matrix exactly encoded in
	$\tilde{U}_j$. This construction uses
	$\mathcal{O}\left(
	\frac{\alpha_{\widehat{B}}}{\delta\Lambda}
	\log
	\frac{\alpha_{\widehat{B}}}{\Lambda\epsilon}
	\right)$
	applications of controlled-$U_j$ and its inverse.
	
	Let $\tilde{\bb{\psi}}(t_j) = \tilde{\Xi}_j \tilde{\bb{\psi}}(t_{j-1})$ with $\tilde{\bb{\psi}}(t_0) = \bb{\psi}(t_0)$. Noting that
	\begin{align*}
		& \Xi_{N_t}\cdots \Xi_2\Xi_1 - \tilde{\Xi}_{N_t}\cdots \tilde{\Xi}_2\tilde{\Xi}_1  \\
		& = (\Xi_{N_t}\cdots \Xi_2\Xi_1 - \Xi_{N_t}\cdots \Xi_2\tilde{\Xi}_1)
		+ (\Xi_{N_t}\cdots \Xi_3\Xi_2\tilde{\Xi}_1 - \Xi_{N_t}\cdots \Xi_3\tilde{\Xi}_2\tilde{\Xi}_1) + \cdots \\
		&  \quad + (\Xi_{N_t}\tilde{\Xi}_{N_t-1}\cdots \tilde{\Xi}_1 - \tilde{\Xi}_{N_t}\tilde{\Xi}_{N_t-1}\cdots \tilde{\Xi}_1)
	\end{align*}
	and
	\[
	\|\tilde{\Xi}_j\|
	\leq
	\|\Xi_j\|+\|\tilde{\Xi}_j-\Xi_j\|
	\leq
	\Lambda(1+\epsilon),
	\]
	we have
	\[
	\|\tilde{\bb{\psi}}(T)-\bb{\psi}(T)\|
	\leq
	\epsilon\Gamma_T
	\left(
	1+(1+\epsilon)+\cdots+(1+\epsilon)^{N_t-1}
	\right)
	\|\bb{\psi}(t_0)\|.
	\]
	Since $(1+\epsilon)^j\leq\e^{j\epsilon}$, if
	$N_t\epsilon\leq1$, then
	\[
	\|\tilde{\bb{\psi}}(T)-\bb{\psi}(T)\|
	\leq
	\e N_t\epsilon\Gamma_T\|\bb{\psi}(t_0)\|.
	\]
	Therefore,
	\[
	\|\tilde{\bb{\psi}}(T)-\bb{\psi}(T)\|
	\leq
	\varepsilon\|\bb{\psi}(t_0)\|
	\]
	if we choose $\epsilon
	\leq
	{\varepsilon}/{(\e N_t\Gamma_T)}.$
	
	Let $\alpha_j=\frac{\Lambda}{1-\delta}$. Applying the dilating
	unitarization algorithm in Section~\ref{subsec:dilatingUnitarization}
	and noting \eqref{Pru0ut}, we obtain an approximation of
	$\ket{\bb{\psi}(T)}$ with probability
	\[
	\text{P}_{\text{r}}
	=
	\frac{1}{(\alpha_{N_t}\cdots\alpha_1)^2}
	\frac{\|\bb{\psi}(T)\|^2}{\|\bb{\psi}(t_0)\|^2}
	=
	\left(
	\frac{(1-\delta)^{N_t}}{\Gamma_T}
	\frac{\|\bb{\psi}(T)\|}{\|\bb{\psi}(t_0)\|}
	\right)^2.
	\]
	Using amplitude amplification, the number of repetitions is
	\[
	g
	=
	\mathcal{O}\left(
	\frac{\Gamma_T}{(1-\delta)^{N_t}}
	\frac{\|\bb{\psi}(t_0)\|}{\|\bb{\psi}(T)\|}
	\right).
	\]
	Choosing $\delta={1}/{(2N_t)}$ gives
	\[
	(1-\delta)^{N_t}
	\geq
	\e^{-\frac{\delta N_t}{1-\delta}}
	=
	\Omega(1).
	\]
	Therefore,
	\[
	g
	=
	\mathcal{O}\left(
	\Gamma_T
	\frac{\|\bb{\psi}(t_0)\|}{\|\bb{\psi}(T)\|}
	\right).
	\]

	For every $j$, each implementation of $\tilde{U}_j$ uses
	\[
	\mathcal{O}\left(
	\frac{\alpha_{\widehat{B}}}{\delta\Lambda}
	\log
	\frac{\alpha_{\widehat{B}}}{\Lambda\epsilon}
	\right)
	=
	\mathcal{O}\left(
	N_t\log\frac{N_t\Gamma_T}{\varepsilon}
	\right)
	\]
	applications of controlled-$U_j$ and its inverse.
	This completes the proof.
\end{proof}

\section{Quantum linear systems algorithms for the LBM} \label{sec:QLSALBM}

For brevity, we denote the time-marching scheme in
\eqref{modifiedtimemarching} as
\[
\bb{\psi}^{n+1}=B_n\bb{\psi}^n,
\qquad n=0,1,\cdots,N_t-1,
\]
where
\[
\bb{\psi}^n
=
W_{\omega,A}
\begin{bmatrix}
	\bb{f}^{t_n}\\
	\bb{\phi}^{t_n}
\end{bmatrix},
\qquad
B_n=\widehat{B}.
\]
By introducing the notation $\bb{\Psi} = [\bb{\psi}^0; \bb{\psi}^1; \cdots ;\bb{\psi}^{N_t}]$, one obtains the following linear system
\begin{equation}\label{AxbLBM}
	L \bb{\Psi} = \bb{F},
\end{equation}
where
\[L =
\begin{bmatrix}
	I  &            &           &            \\
	-B_0 & I     &           &            \\
	&\ddots      & \ddots    &    \\
	&            & -B_{N_t-1}   & I     \\
\end{bmatrix}
, \qquad
\bb{F} =
\begin{bmatrix}
	\bb{\psi}^0 \\
	\bb{0}  \\
	\vdots\\
	\bb{0} \\
\end{bmatrix}.
\]
\begin{lemma}\label{lem:Lbound}
	If $\|B_n\|\leq\Lambda$ for $n=0,1,\cdots,N_t-1$, then
	the singular values of $L$ satisfy
	\[
	\sigma_{\max}(L)\leq1+\Lambda,
	\qquad
	\sigma_{\min}(L)
	\geq
	\frac{1}{(N_t+1)\Gamma_T}.
	\]
\end{lemma}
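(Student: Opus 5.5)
We split $L=I-\mathcal{B}$, where $\mathcal{B}$ is the block strictly-lower-bidiagonal matrix whose $(n+1,n)$ block is $B_n$. For the upper bound, $\mathcal{B}$ is a block shift with blocks $B_n$: for $\bb{\Psi}=[\bb{\psi}^0;\cdots;\bb{\psi}^{N_t}]$ we have $\|\mathcal{B}\bb{\Psi}\|^2=\sum_{n}\|B_n\bb{\psi}^n\|^2\le\Lambda^2\|\bb{\Psi}\|^2$. Hence $\|\mathcal{B}\|\le\Lambda$, and the triangle inequality gives $\sigma_{\max}(L)\le 1+\Lambda$.

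For the lower bound we use $\sigma_{\min}(L)=1/\|L^{-1}\|$ and write $L^{-1}$ explicitly. Since $\mathcal{B}$ is nilpotent ($\mathcal{B}^{N_t+1}=0$), we have $L^{-1}=\sum_{k=0}^{N_t}\mathcal{B}^k$. Equivalently, $L^{-1}$ is block lower triangular with $(m,n)$ block $B_{m-1}B_{m-2}\cdots B_n$ for $m\ge n$ (the identity when $m=n$). This matches forward substitution: solving $L\bb{\Psi}=\bb{G}$ gives $\bb{\psi}^m=\sum_{n\le m}B_{m-1}\cdots B_n\bb{g}^n$.

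Each block satisfies $\|B_{m-1}\cdots B_n\|\le\Lambda^{m-n}\le\Lambda^{N_t}=\Gamma_T$, using $\Lambda\ge1$. Two routes then give the bound.
\begin{itemize}
\item Bounding $\|\mathcal{B}^k\|\le\Lambda^k$ and summing yields $\|L^{-1}\|\le\sum_{k=0}^{N_t}\Lambda^k\le(N_t+1)\Gamma_T$.
\item Alternatively, the block matrix of norms is a $(N_t+1)\times(N_t+1)$ lower triangular array with entries at most $\Gamma_T$, whose spectral norm is at most its Frobenius norm, about $(N_t+1)\Gamma_T/\sqrt2$.
\end{itemize}
The first route is cleaner and gives exactly the stated constant, so we use it: $\|L^{-1}\|\le(N_t+1)\Gamma_T$, and therefore $\sigma_{\min}(L)\ge 1/((N_t+1)\Gamma_T)$.

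The only point needing care is the step $\|\mathcal{B}^k\|\le\Lambda^k$. We argue it directly: $\mathcal{B}^k$ is a block $k$-shift whose nonzero blocks are products of $k$ consecutive $B$'s. Each block row contains at most one nonzero block, so $\|\mathcal{B}^k\bb{\Psi}\|^2=\sum_n\|B_{n+k-1}\cdots B_n\bb{\psi}^n\|^2\le\Lambda^{2k}\|\bb{\Psi}\|^2$. Alternatively, submultiplicativity with $\|\mathcal{B}\|\le\Lambda$ gives the same bound at once. No genuine obstacle is expected.
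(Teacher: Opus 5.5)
Your proposal is correct and follows essentially the same route as the paper. Both arguments bound $\|L\|\le 1+\Lambda$ by the triangle inequality. Both then write $L^{-1}$ as the identity plus the block $k$-shifts with blocks $B_{m-1}\cdots B_n$, and bound each shift by $\Lambda^k\le\Gamma_T$ (using $\Lambda\ge 1$) to get $\|L^{-1}\|\le (N_t+1)\Gamma_T$; your justification of $\|\mathcal{B}^k\|\le\Lambda^k$ through the shift structure simply makes explicit a step the paper states without comment.
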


\begin{proof}
	It is simple to find
	\[
	\sigma_{\max}(L)
	=
	\|L\|
	\leq
	1+\max_n\|B_n\|
	\leq
	1+\Lambda.
	\]
	By definition,
	$\sigma_{\min}(L)=1/\sigma_{\max}(L^{-1})$. After simple
	algebra, one has
	\begin{align*}
		L^{-1}
		& =
		\begin{bmatrix}
			I & O & O & \cdots & O \\
			B_0 & I & O & \cdots & O \\
			B_1 B_0 & B_1 & I & \cdots & O \\
			\vdots & \vdots & \ddots & \ddots & \vdots \\
			B_{N_t-1} \cdots B_0 & B_{N_t-1} \cdots B_1 & \cdots & B_{N_t-1} & I
		\end{bmatrix} \\
		& = \begin{bmatrix}
			I          &              &                 &          &            \\
			& I       &                 &          &        \\
			&              &  \ddots         &          &\\
			&              &                 &  \ddots  &   \\
			&              &                 &          &  I      \\
		\end{bmatrix}+
		\begin{bmatrix}
			&              &                 &          &            \\
			B_0        &              &                 &          &        \\
			&  \ddots      &                 &          &\\
			&              &  \ddots         &          &   \\
			&              &                 &  B_{N_t-1} &         \\
		\end{bmatrix} + \cdots
	\end{align*}
	which gives
	\begin{align*}
	\sigma_{\max}(L^{-1})
	= \|L^{-1}\| \le \|I\| + \max_{i} \|B_i\| + \cdots + \|B_{N_t-1} \cdots B_0\| \le (N_t + 1)\Gamma_T.
\end{align*}
	Thus,
	\[
	\sigma_{\min}(L)
	\geq
	\frac{1}{(N_t+1)\Gamma_T}.
	\]
	This completes the proof.
\end{proof}

For the matrix $L$ in \eqref{AxbLBM}, we assume the HAM-T structure, which can be viewed as a simultaneous block-encoding of the matrices evaluated at different times. In particular, suppose that we are given oracles $\text{HAM-T}_{B}$ such that
\[(\bra{0}_{a} \otimes I) \text{HAM-T}_{B}(\ket{0}_{a} \otimes I)
= \sum_{k=0}^{N_t-1} \ket{k}\bra{k} \otimes \frac{B_k}{\alpha_B},\]
which can be realized from the block-encoding of each $B_k$, where $\alpha_B=\alpha_{\widehat{B}}$ is given in Theorem~\ref{thm:BEM}. The block-encoding of $L$ is then obtained by using $\mathcal{O}(1)$ applications of $\text{HAM-T}_{B}$ with details omitted. It is evidence that the encoding constant of $L$ is $\mathcal{O}(\alpha_B+1) = \mathcal{O}(\alpha_{\widehat{B}})$.

To enhance the success probability, we introduce \(N_t\) copies of the final vector \(\bb{\psi}^{N_t}\). Specifically, we append the following additional equations to \eqref{AxbLBM}:
\[
\bb{\psi}^{n+1} - \bb{\psi}^n = \bb{0}, \quad n = N_t, \cdots, 2N_t-1 .
\]
The resulting system is still denoted by the original notation in \eqref{AxbLBM}.

\begin{theorem}
	Under the conditions of Theorem~\ref{thm:norm} and
	Remark~\ref{remark:bounded}, there is a quantum algorithm that
	prepares an approximation of the normalized solution
	$\ket{\bb{\psi}(T)}$, denoted by
	$\ket{\tilde{\bb{\psi}}(T)}$, with $\Omega(1)$ probability and
	a flag indicating success, using
	\[
	\mathcal{O}\left(
	\Gamma_T^2
	\frac{\|\bb{\psi}(t_0)\|}{\|\bb{\psi}(T)\|}
	N_t
	\log\frac{N_t\Gamma_T}{\varepsilon}
	\right)
	\]
	queries to the HAM-T oracle. The unnormalized vector
	$\tilde{\bb{\psi}}(T)$ provides an $\varepsilon$-approximation
	of $\bb{\psi}(T)$ in the following sense
	\[
	\|\tilde{\bb{\psi}}(T)-\bb{\psi}(T)\|
	\leq
	\varepsilon\|\bb{\psi}(t_0)\|.
	\]
\end{theorem}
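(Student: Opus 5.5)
We apply a state-of-the-art QLSA, with query complexity $\mathcal{O}(\kappa\log(1/\epsilon))$ in the block-encoding of the coefficient matrix, to the augmented system \eqref{AxbLBM} that includes the $N_t$ extra copy equations. Then we measure the time register and post-select on the second half, indices $n=N_t,\dots,2N_t$, each of which stores $\bb{\psi}^{N_t}=\bb{\psi}(T)$. The plan is to bound four quantities: the normalized condition number of the augmented $L$, the success probability of this post-selection, the error propagation from the QLSA output to $\tilde{\bb{\psi}}(T)$, and the resulting query count.

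\textbf{Condition number.} We first redo Lemma~\ref{lem:Lbound} for the augmented matrix. Its extra subdiagonal blocks are $I$, so we set $\Lambda' =\max\{\Lambda,1\}=\Lambda$ and $\|L\|\le 1+\Lambda$. The inverse $L^{-1}$ is again block lower triangular, with block $(k,l)$ equal to the product of the propagators from level $l$ to level $k$; each such product has norm at most $\Gamma_T$. Any row or column of blocks has at most $2N_t+1$ entries, so $\|L^{-1}\|\le (2N_t+1)\Gamma_T$. With block-encoding constant $\alpha_L=\mathcal{O}(\alpha_{\widehat B})$, the effective condition number is $\kappa=\alpha_L\|L^{-1}\| = \mathcal{O}(\alpha_{\widehat B}N_t\Gamma_T)$. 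Since $\alpha_{\widehat B}$ depends only on $\omega$ and $a_*$, which are grid-independent, we treat it as $\mathcal{O}(1)$.

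\textbf{Success probability.} The exact solution has components $\bb{\psi}^n=B_{n-1}\cdots B_0\bb{\psi}^0$ for $n\le N_t$. Hence $\|\bb{\psi}^n\|\le\Gamma_T\|\bb{\psi}(t_0)\|$ and
\[
\|\bb{\Psi}\|^2\le (N_t+1)\Gamma_T^2\|\bb{\psi}(t_0)\|^2 + N_t\|\bb{\psi}(T)\|^2 .
\]
The post-selected part has squared norm $(N_t+1)\|\bb{\psi}(T)\|^2$, so the probability is
\[
\text{P}_{\text{r}}\gtrsim \frac{\|\bb{\psi}(T)\|^2}{\Gamma_T^2\|\bb{\psi}(t_0)\|^2}.
\]
Amplitude amplification therefore needs $\mathcal{O}\big(\Gamma_T\|\bb{\psi}(t_0)\|/\|\bb{\psi}(T)\|\big)$ rounds. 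Multiplying by the QLSA cost $\mathcal{O}(\kappa\log(1/\epsilon'))=\mathcal{O}(N_t\Gamma_T\log(1/\epsilon'))$ gives the claimed $\Gamma_T^2\frac{\|\bb{\psi}(t_0)\|}{\|\bb{\psi}(T)\|}N_t\log(\cdot)$.

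\textbf{Error propagation (expected main obstacle).} The QLSA returns a normalized state $\epsilon'$-close to $\ket{\bb{\Psi}}$. Post-selection divides this error by $\sqrt{\text{P}_{\text{r}}}$. To obtain $\|\tilde{\bb{\psi}}(T)-\bb{\psi}(T)\|\le\varepsilon\|\bb{\psi}(t_0)\|$ in unnormalized form, we rescale by $\|\bb{\Psi}\|\le\sqrt{2N_t+1}\,\Gamma_T\|\bb{\psi}(t_0)\|$. This suggests choosing $\epsilon'=\varepsilon/\mathrm{poly}(N_t\Gamma_T)$ times a ratio of norms. Because the dependence on $\epsilon'$ is only logarithmic, this yields $\log\frac{N_t\Gamma_T}{\varepsilon}$, up to absorbing the norm ratio into the logarithm or treating it as bounded. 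The care needed here is to make sure the post-selected block error is measured relative to $\|\bb{\psi}(t_0)\|$ and not $\|\bb{\psi}(T)\|$: the error of each copy is bounded by the total error, and averaging over the $N_t+1$ copies does not worsen it.
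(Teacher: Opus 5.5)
Your proposal is correct and follows essentially the same route as the paper. The paper also runs the QLSA on the copy-augmented system at cost $\mathcal{O}(\alpha_{\widehat{B}}N_t\Gamma_T\log(1/\epsilon'))$ and uses $\mathcal{O}(\Gamma_T\|\bb{\psi}(t_0)\|/\|\bb{\psi}(T)\|)$ rounds of amplitude amplification; its error bound is the same block-versus-total estimate, $\|\tilde{\bb{\psi}}(T)-\bb{\psi}(T)\|\le\|\tilde{\bb{\Psi}}-\bb{\Psi}\|\le\epsilon'\|\bb{\Psi}\|$, and you are slightly more careful than the paper in re-deriving $\|L^{-1}\|\le(2N_t+1)\Gamma_T$ for the augmented matrix. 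Your hedge about a ``ratio of norms'' is unnecessary: the target error is unnormalized and measured relative to $\|\bb{\psi}(t_0)\|$, so $\|\bb{\Psi}\|\le\sqrt{2N_t+1}\,\Gamma_T\|\bb{\psi}(t_0)\|$ gives $1/\epsilon'=\mathcal{O}(N_t\Gamma_T/\varepsilon)$ directly, with no $1/\sqrt{\text{P}_{\text{r}}}$ factor entering.
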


\begin{proof}
	We use the quantum linear systems algorithm proposed in
	\cite{Costa2021QLSA} to solve the linear system. To obtain an
	$\epsilon$-close normalized solution $\tilde{\bb{\Psi}}$ to the
	expanded linear system in \eqref{AxbLBM}, by
	Lemma~\ref{lem:Lbound}, the query complexity of the QLSA is
	given by
	\begin{equation}\label{QLSAquery}
		\mathcal{O}\left(
		\alpha_{\widehat{B}}\|L^{-1}\|
		\log\frac{1}{\epsilon}
		\right)
		=
		\mathcal{O}\left(
		\alpha_{\widehat{B}}N_t\Gamma_T
		\log\frac{1}{\epsilon}
		\right).
	\end{equation}
	Let $\ket{\tilde{\bb{\Psi}}(T)}$ be the $\epsilon$-close
	normalized solution prepared by the QLSA, and define the
	corresponding unnormalized vector as
	\[
	\tilde{\bb{\Psi}}(T)
	=
	\|\bb{\Psi}(T)\|
	\ket{\tilde{\bb{\Psi}}(T)}.
	\]
	Then
	\[
	\begin{aligned}
		\|\bb{\psi}(T)-\tilde{\bb{\psi}}(T)\|\leq
		\|\bb{\Psi}(T)-\tilde{\bb{\Psi}}(T)\|\leq
		\epsilon\|\bb{\Psi}(T)\|\lesssim
		N_t\Gamma_T
		\|\bb{\psi}(t_0)\|\epsilon =:\varepsilon\|\psi(t_0)\|.
	\end{aligned}
	\]
	Therefore, it suffices to choose
	\[
	\frac{1}{\epsilon}
	=
	\frac{N_t\Gamma_T}{\varepsilon}
	\]
	to ensure that
	\[
	\|\bb{\psi}(T)-\tilde{\bb{\psi}}(T)\|
	\leq
	\varepsilon\|\bb{\psi}(t_0)\|.
	\]
	Here, we have used $\|B_n\|\leq\Lambda$ and
	$\Gamma_T=\Lambda^{N_t}$.
	
	By using amplitude amplification, the repeated times for the
	measurements can be approximated as
	\[
	g
	=
	\mathcal{O}\left(
	\frac{\|\bb{\Psi}\|}
	{\sqrt{N_t+1}\|\bb{\psi}(T)\|}
	\right)
	=
	\mathcal{O}\left(
	\Gamma_T
	\frac{\|\bb{\psi}(t_0)\|}
	{\|\bb{\psi}(T)\|}
	\right).
	\]
	The proof follows by multiplying \eqref{QLSAquery} by $g$.
\end{proof}

%\section{Quantum ODE solvers for the LBM}

\section{Numerical examples}

To validate the effectiveness and accuracy of the proposed algorithm, we perform numerical simulations for the one-dimensional (1D) and two-dimensional (2D) Gauss Hill problems. The simulations employ the D1Q3 and D2Q5 lattice models, respectively, with the dimensionless relaxation time $\tau^* = 0.8,\,1.0,\,1.3$. We systematically compare the numerical results obtained from the analytical solution, the classical lattice Boltzmann method, and the quantum lattice Boltzmann method (QLBM) proposed in this work. Here, we only consider the classical simulation of the first quantum algorithm since the similar results are observed for the second one.
Although the theoretical analysis presented earlier establishes the validity of the algorithm for $\tau^* > 1$, numerical experiments demonstrate that the method remains stable and accurate in the regime $1/2\leq \tau^* \leq 1$.

\subsection{1D Gauss Hill}

For the one-dimensional test case, we set the spatial grid size to $N_x = 128$ lattice nodes on a  D1Q3 velocity model. The initial condition is prescribed as
\begin{equation}\label{equ:initial_condition}
	\phi(\bm{x}, t = 0)
	= \phi_0
	\exp\left(
	-\frac{(\bm{x} - \bm{x}_0)^2}{2\sigma_0^2}
	\right),
\end{equation}
where $\bm{x}_0$ denotes the location of the peak of the Gaussian distribution and is set to $x_0 = 64\Delta x$, and $\sigma_0$ represents its width.

During the temporal evolution, a uniform advection velocity $\bm{u}$ is imposed. Here, $u = 0.2\Delta x /\Delta t$ The time evolution has an analytical solution\cite{ginzburg2005equilibrium}:
\begin{equation}
	\phi(\bm{x}, t) ={(\frac{\sigma_0^2}{\sigma_0^2 + \sigma_D^2})}^{d/2}\phi_0 \exp\left( -\frac{(\bm{x} - \bm{x}_0 - \bm{u} t)^2}{2(\sigma_0^2 + \sigma_D^2)} \right), \quad \text{where} \quad \sigma_D = \sqrt{2 D t}
\end{equation}
We consider relaxation times $\tau^* = 0.8,\, 1.0,\, 1.3$, with $\phi_0 = 0.3$ and $\sigma_0 = 15\Delta x$, and the corresponding diffusion coefficient reads $D = (\tau^* - \frac{1}{2})\Delta x^2/(3\Delta t)$.
For each parameter set, the system is evolved for $20$ and $40$ time steps, and the results are compared against the analytical solution, the classical LBM, and the proposed QLBM.

As shown in Fig.~\ref{fig:gh1d}, the numerical results obtained from both the classical and quantum LBM are in excellent agreement and accurately capture the overall shape and evolution of the Gaussian distribution predicted by the analytical solution. Minor discrepancies are observed near the peak, which can be attributed to the intrinsic discretization errors of the LBM. These deviations remain small and do not compromise the overall validity of the proposed method.

\begin{figure}[htbp]
	\centering
	
	\subfigure[$\tau^*=0.8$]{
		\includegraphics[width=0.3\linewidth]{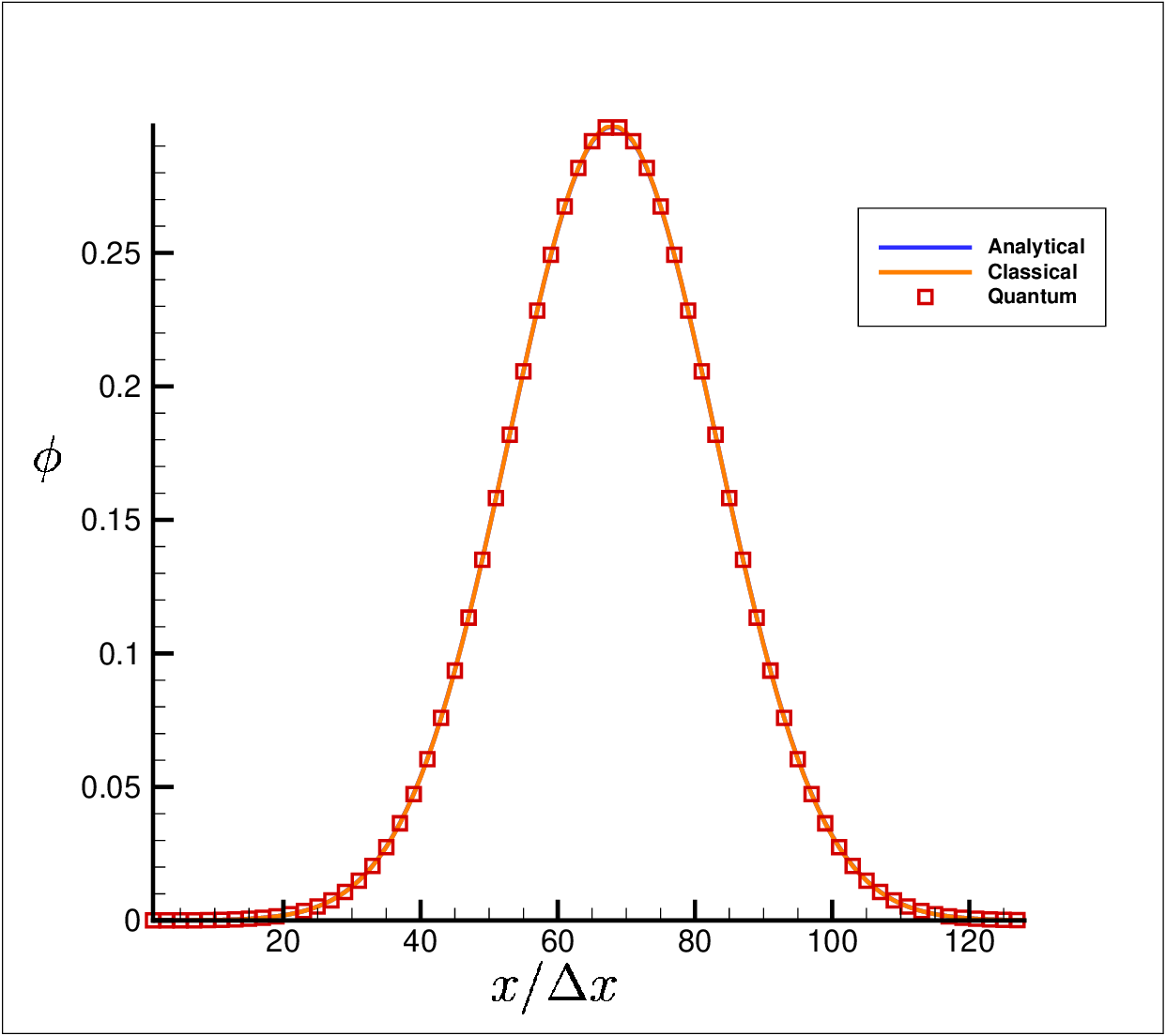}
	}
	\subfigure[$\tau^*=1.0$]{
		\includegraphics[width=0.3\linewidth]{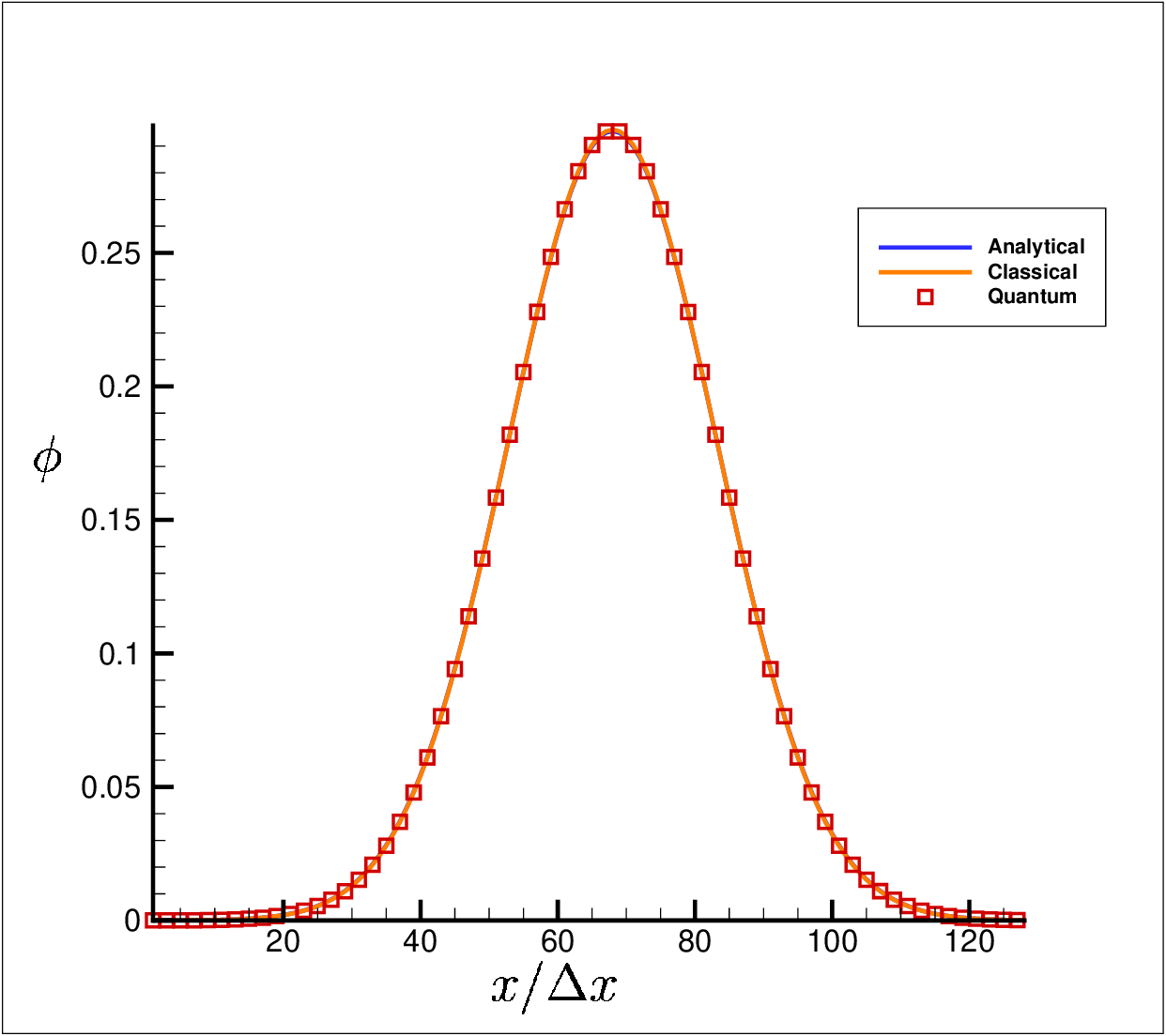}
	}
	\subfigure[$\tau^*=1.3$]{
		\includegraphics[width=0.3\linewidth]{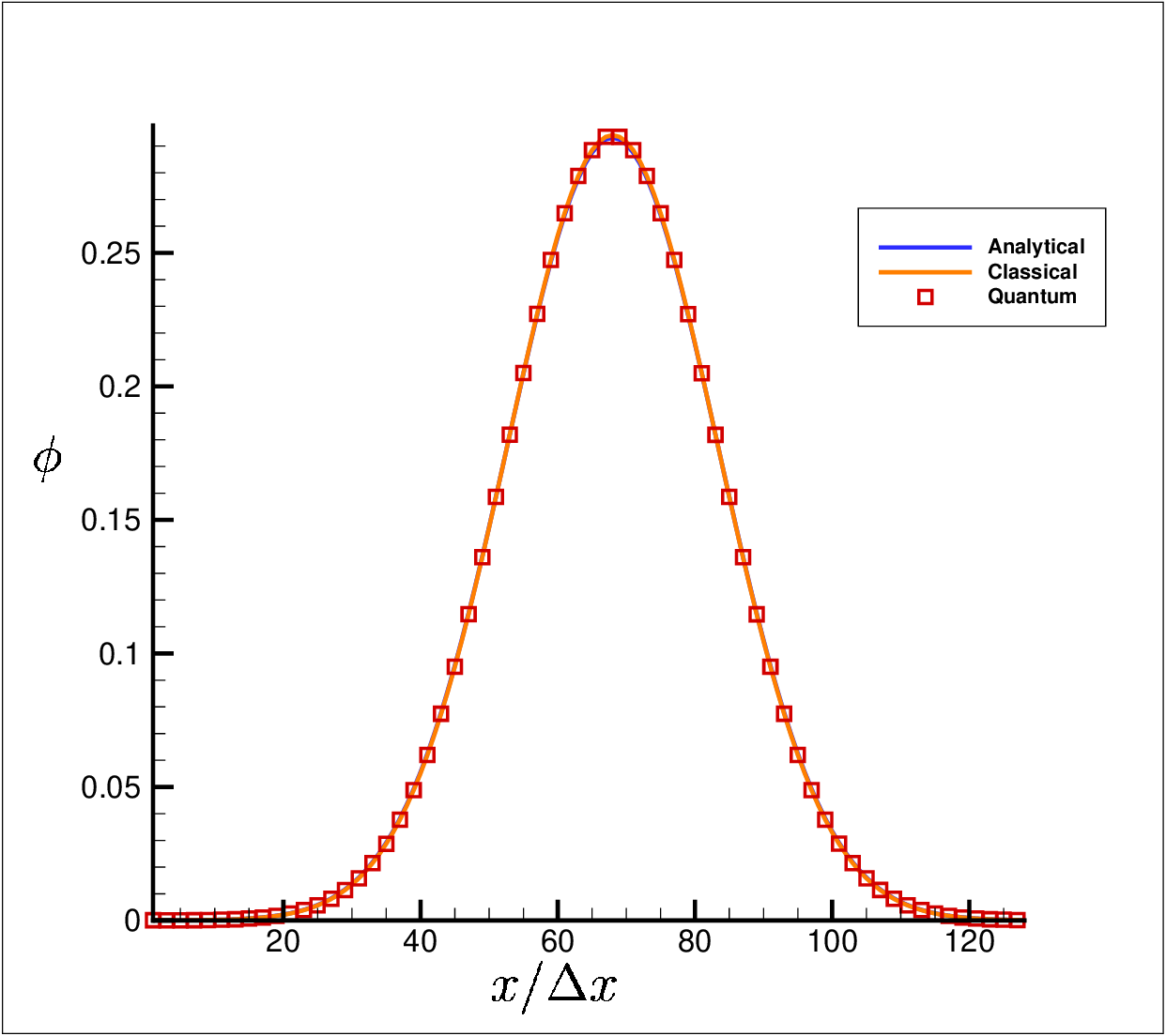}
	}
	
	\subfigure[$\tau^*=0.8$]{
		\includegraphics[width=0.3\linewidth]{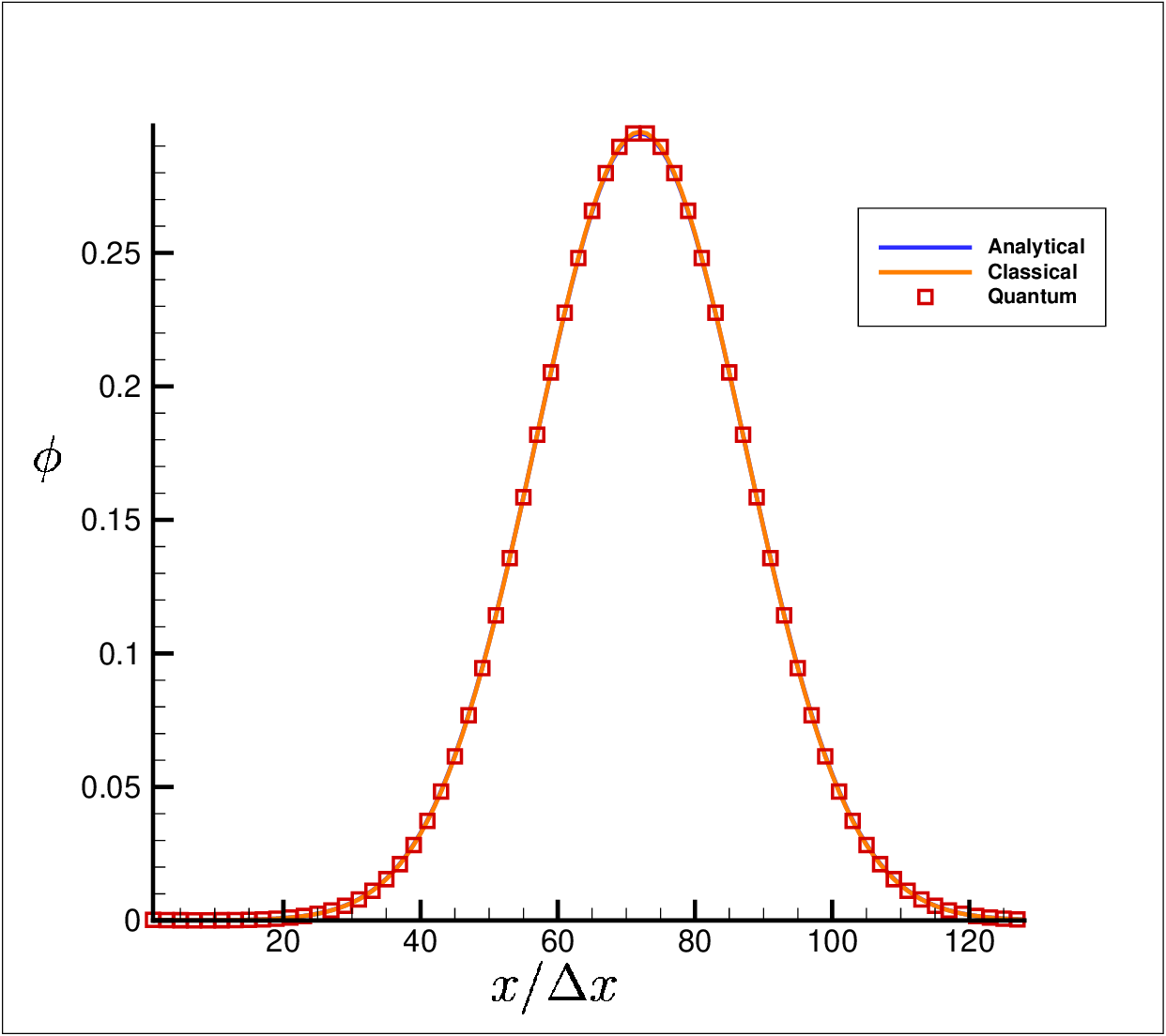}
	}
	\subfigure[$\tau^*=1.0$]{
		\includegraphics[width=0.3\linewidth]{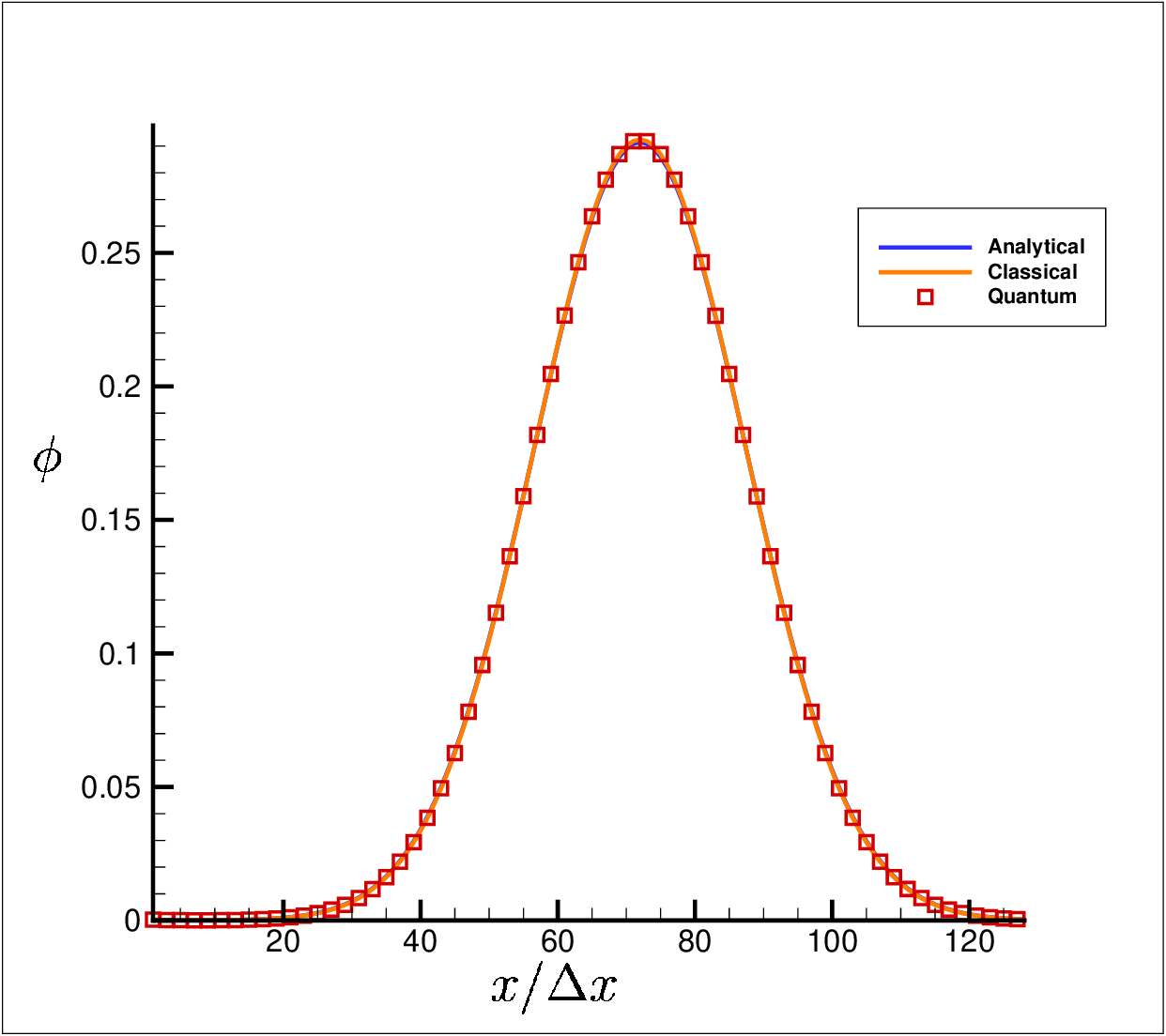}
	}
	\subfigure[$\tau^*=1.3$]{
		\includegraphics[width=0.3\linewidth]{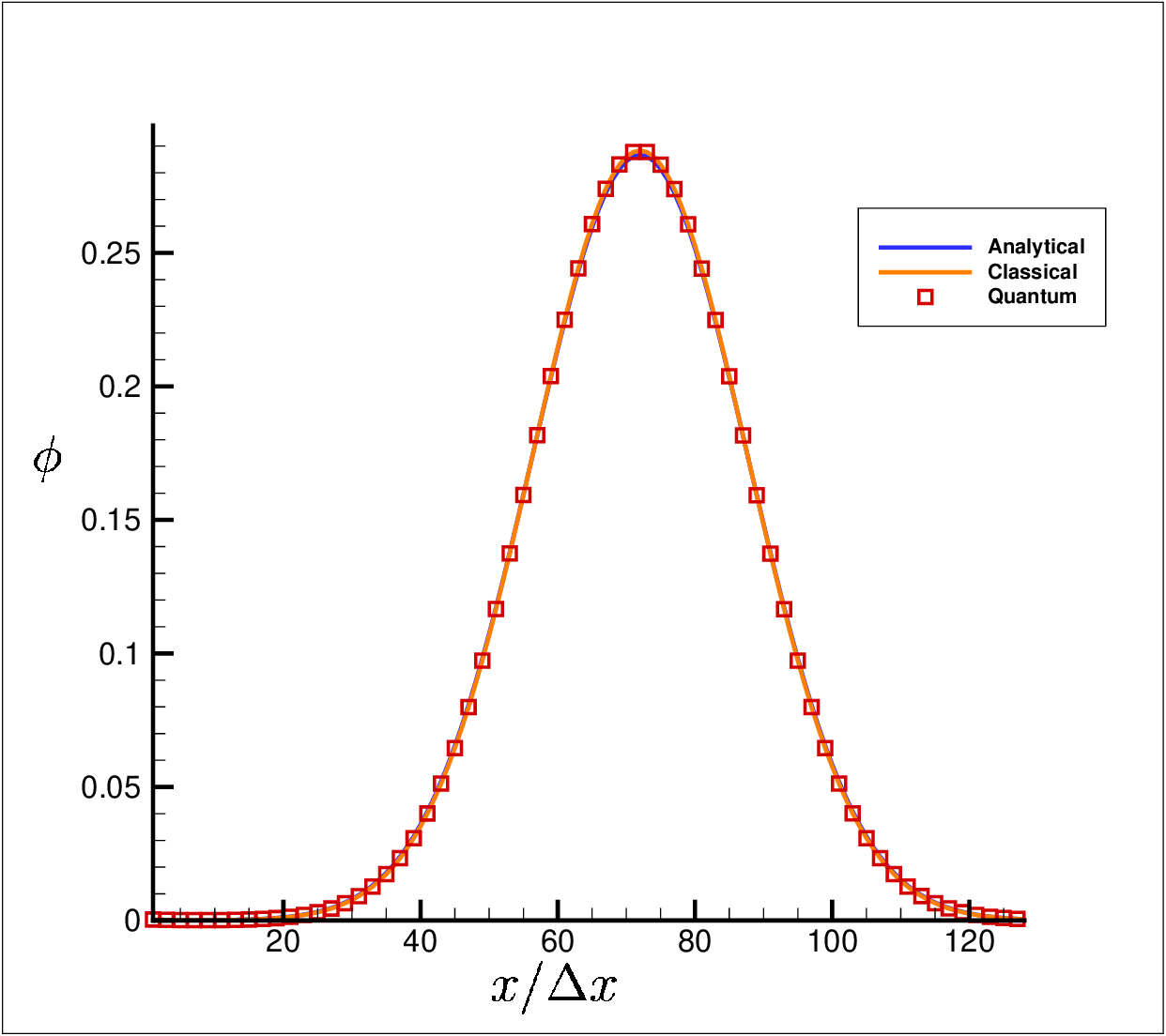}
	}
	
	\caption{Comparison of the analytical solution, classical LBM, and quantum LBM for the one-dimensional Gaussian distribution problem. The first and second rows correspond to evolution after $20$ and $40$ time steps, respectively.}
	\label{fig:gh1d}
\end{figure}

\subsection{2D Gauss Hill}

For the two-dimensional test case, the computational domain is discretized with $N_x = 64$ and $N_y = 64$ grid points, and the D2Q5 velocity model is adopted. The initial condition is again given by Eq.~\eqref{equ:initial_condition}, with the peak location set to $\bm{x}_0 = (32,32)\Delta x$.

A uniform velocity field $\bm{u} = (0.2,\,0.2)\Delta x/\Delta t$ is imposed during the evolution.
We consider relaxation times $\tau^* = 0.8,\, 1.0,\, 1.3$, with $\phi_0 = 0.3$ and $\sigma_0 = 5\Delta x$, and the corresponding diffusion coefficient reads $D = (\tau^* - \frac{1}{2})\Delta x^2/(3\Delta t)$.
The system is evolved for $10$ and $30$ time steps, and the results are compared with the analytical solution, the classical LBM, and the proposed QLBM.

As illustrated in Fig.~\ref{fig:gh2d}, the classical and quantum LBM solutions exhibit excellent agreement in the two-dimensional setting and closely follow the analytical solution. These results further demonstrate the robustness and applicability of the proposed method to higher-dimensional problems.

\begin{figure}[htbp]
	\centering
	
	\subfigure[$\tau^*=0.8$]{
		\includegraphics[width=0.3\linewidth]{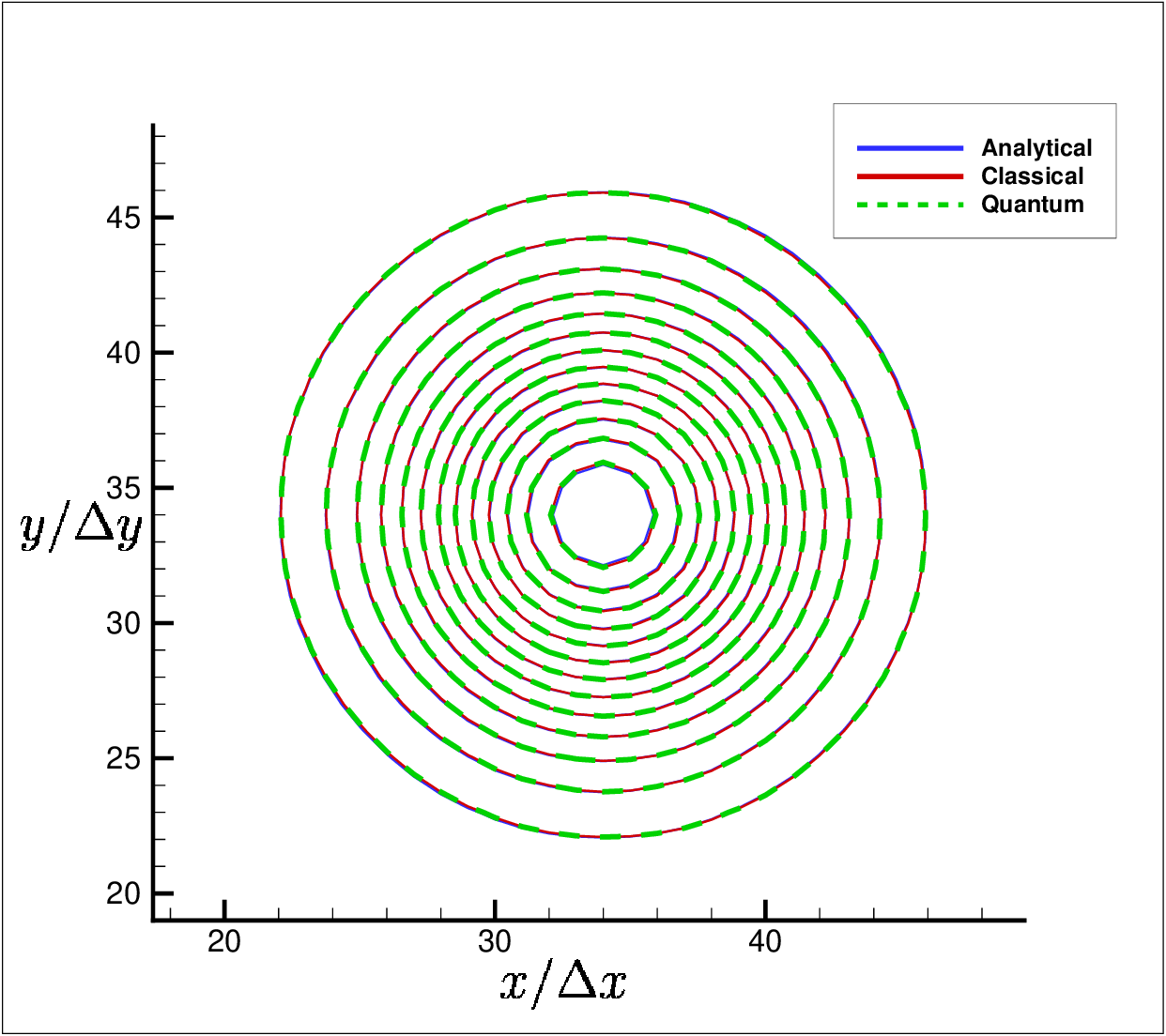}
	}
	\subfigure[$\tau^*=1.0$]{
		\includegraphics[width=0.3\linewidth]{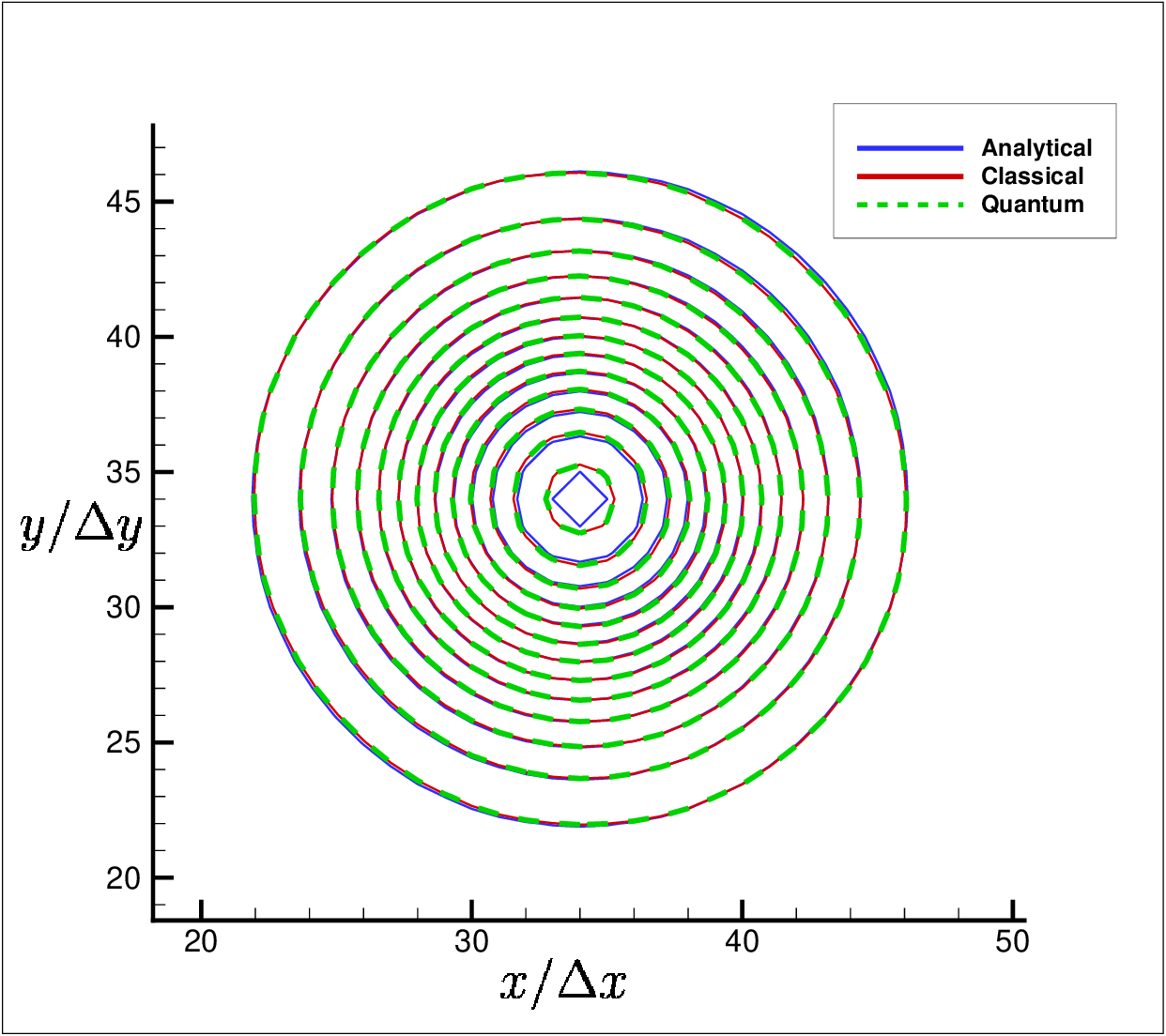}
	}
	\subfigure[$\tau^*=1.3$]{
		\includegraphics[width=0.3\linewidth]{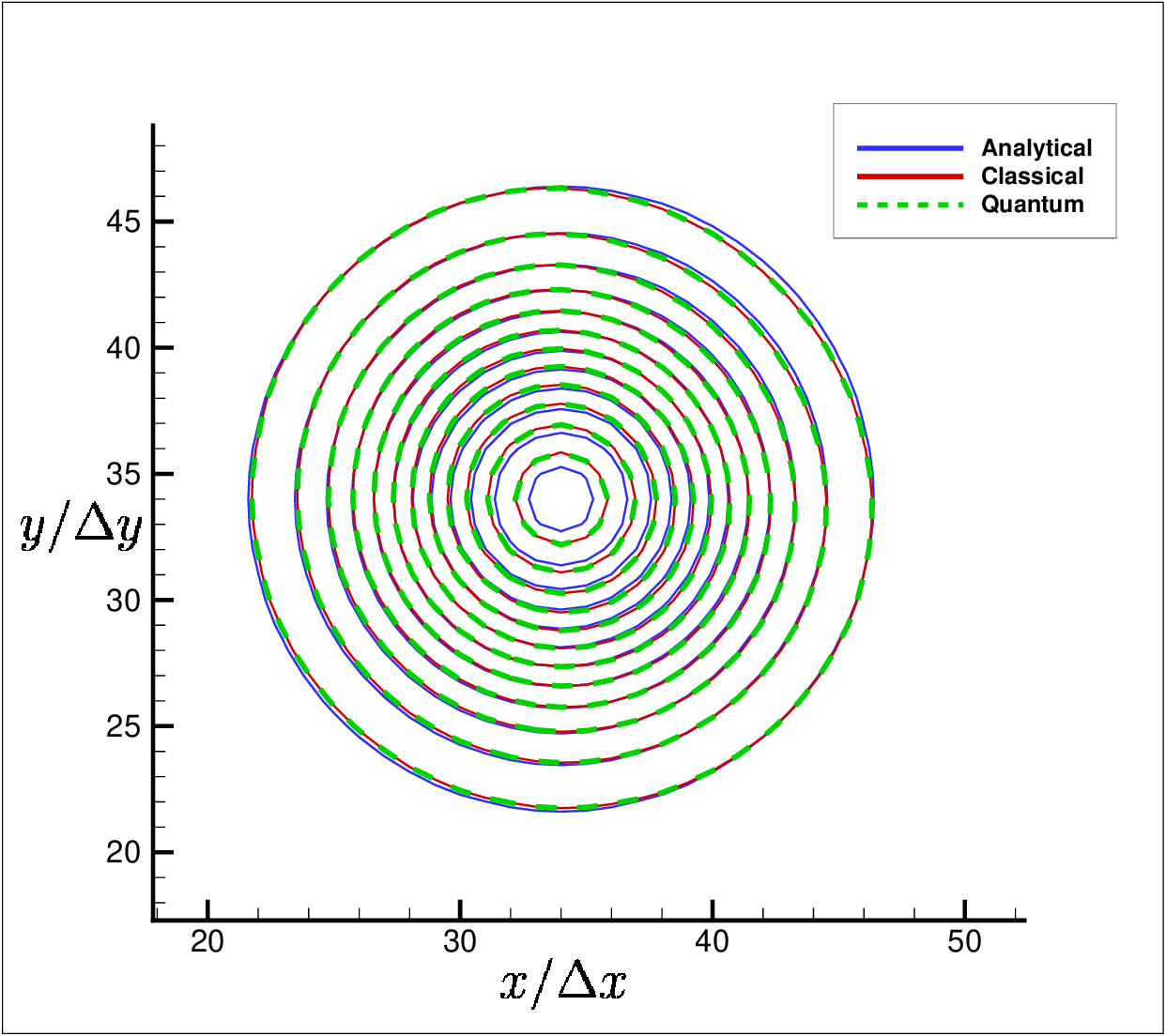}
	}
	
	\subfigure[$\tau^*=0.8$]{
		\includegraphics[width=0.3\linewidth]{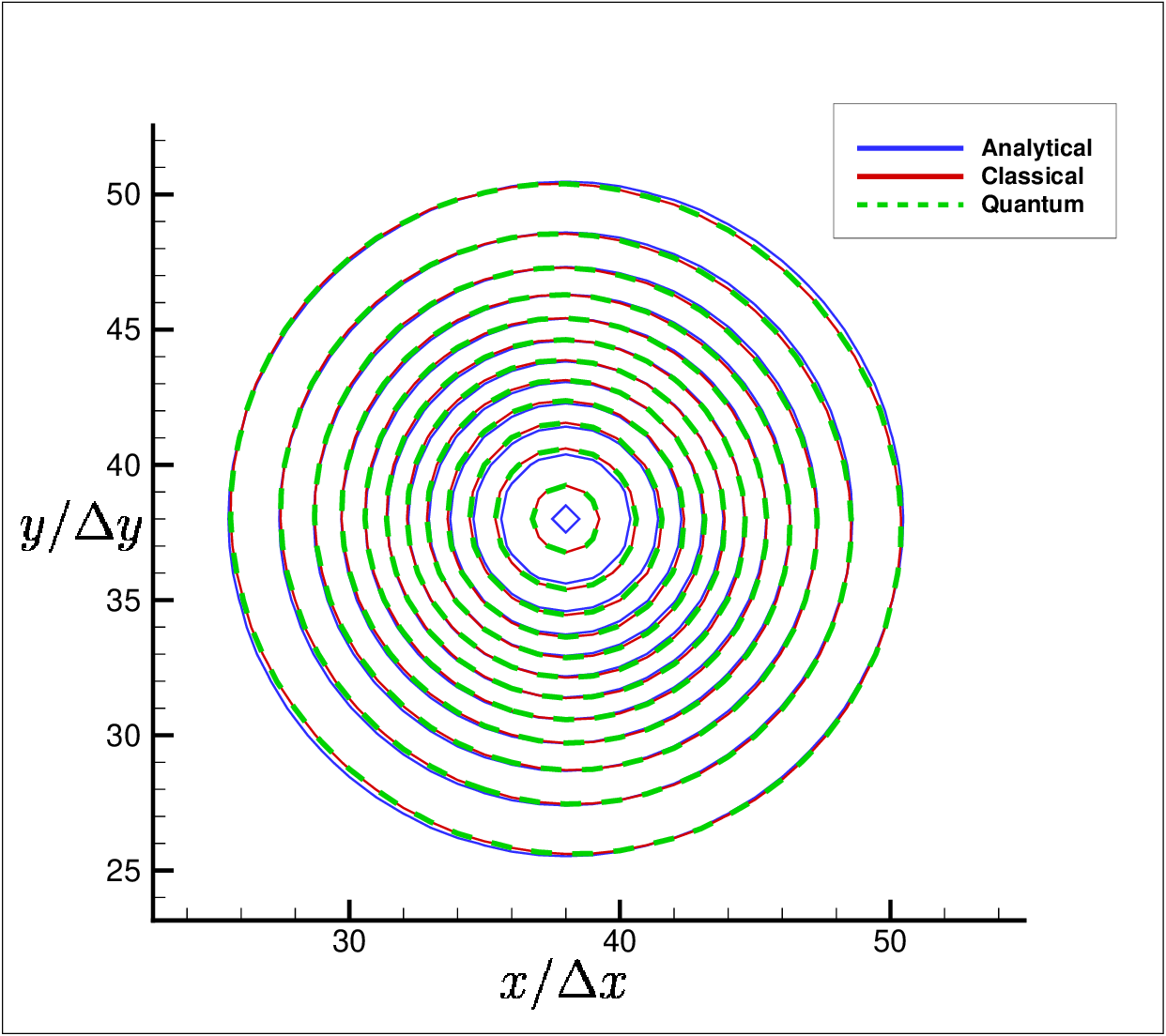}
	}
	\subfigure[$\tau^*=1.0$]{
		\includegraphics[width=0.3\linewidth]{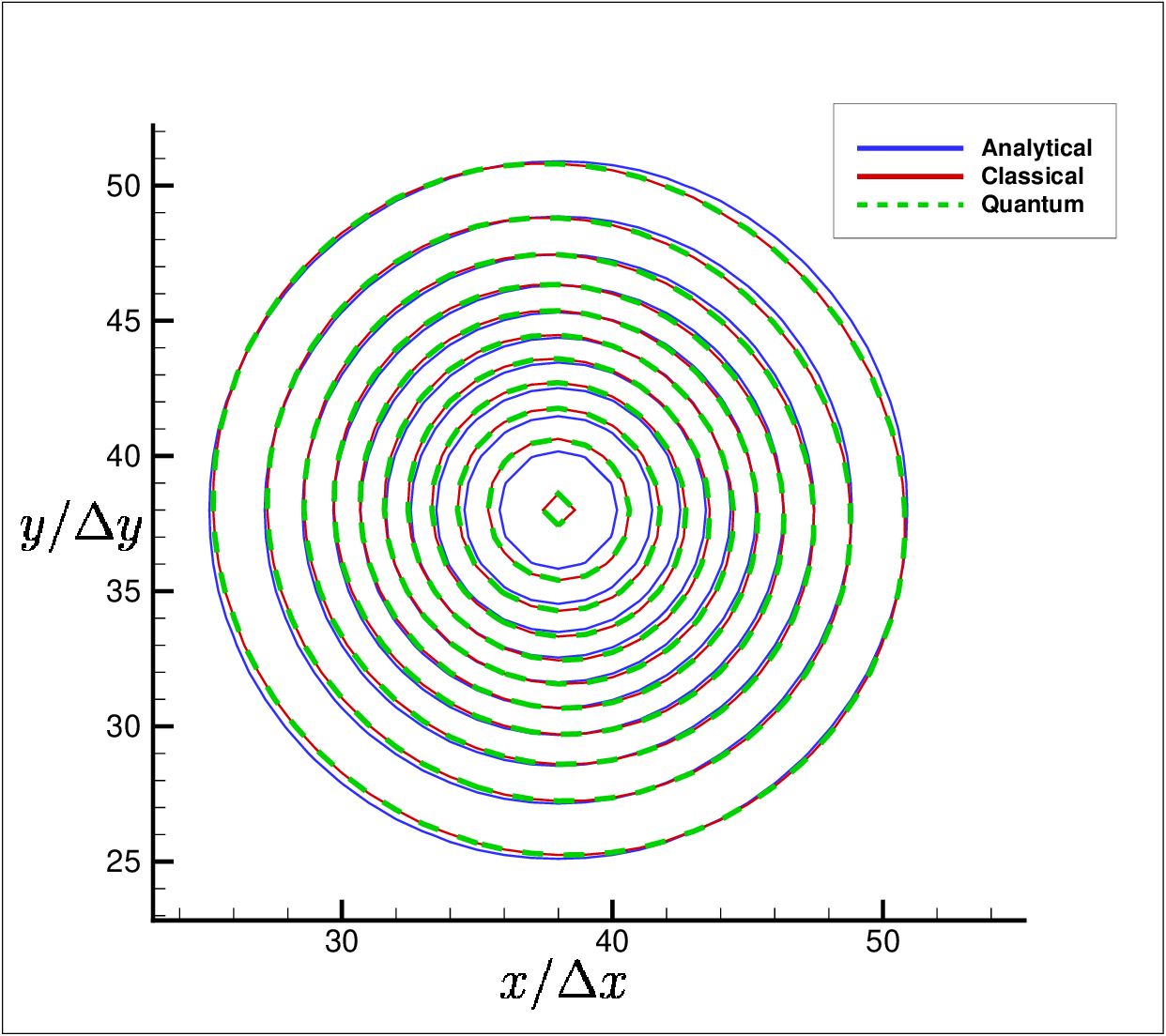}
	}
	\subfigure[$\tau^*=1.3$]{
		\includegraphics[width=0.3\linewidth]{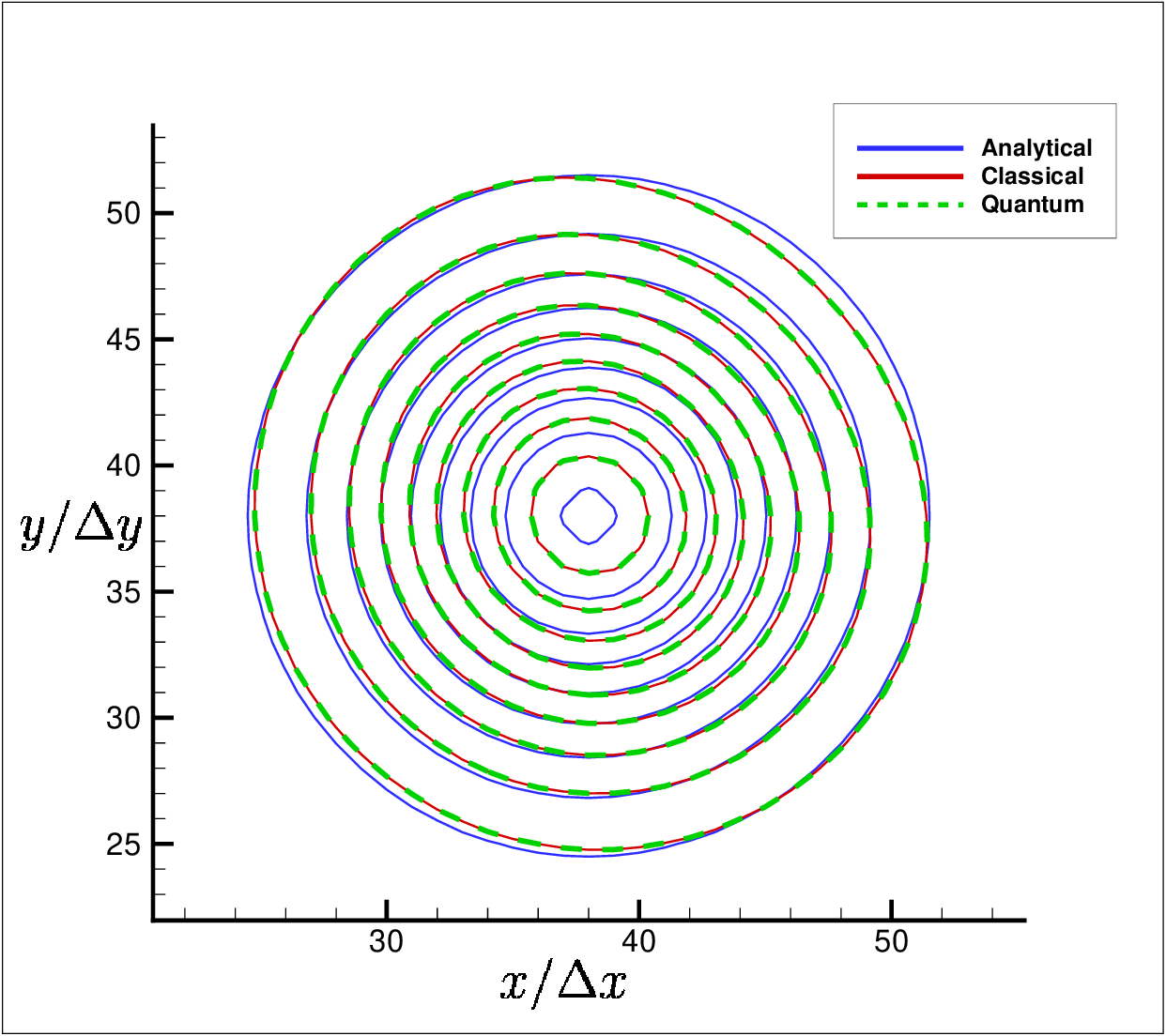}
	}
	
	\caption{Comparison of the analytical solution, classical LBM, and quantum LBM for the two-dimensional Gaussian distribution problem. The first and second rows correspond to evolution after $10$ and $30$ time steps, respectively.}
	\label{fig:gh2d}
\end{figure}

\section{Conclusion}

  We introduce a novel framework for developing quantum algorithms for the Lattice Boltzmann Method applied to the advection-diffusion equation. We formulate the LBM with the BGK collision operator as a compact time-marching scheme and establish weighted energy-norm bounds for spatially uniform and smooth nonuniform steady velocity fields under low-Mach-number conditions. Building upon this representation, we investigate two distinct quantum algorithmic approaches. The first is a time-marching quantum algorithm, realized through sequential evolution operators, for which we provide a detailed implementation including block-encoding and dilating unitarization, along with a full complexity analysis. The second employs a quantum linear systems algorithm, which encodes the entire time evolution into a single linear system. We show that both methods achieve comparable asymptotic time complexities.
  This unified formulation provides a systematic pathway that avoids full-state measurement and reinitialization at every time step for quantum simulation of advection-diffusion processes via the lattice Boltzmann paradigm.

\section*{Acknowledgements}
Yuan Yu was supported by the National Natural Science Foundation of China (No.\ 12101527) and the Natural Science Foundation of Hunan Province (No.\ 2026JJ60005).
Yue Yu was supported by NSFC grant (No.\ 12301561), the Key Project of Scientific Research Project of Hunan Provincial Department of Education (No.\ 24A0100), the Science and Technology Innovation Program of Hunan Province (No.\ 2025RC3150), the general program of Hunan Provincial Natural Science Foundation (No.\ 2026JJ50003), and partially supported by NSFC grant No.\ 12341104.
This research was supported in part by the 111 Project (No.\ D23017), and Program for Science and Technology Innovative Research Team in Higher Educational Institutions of Hunan Province of China. Computational resources were provided by the High Performance Computing Platform of Xiangtan University.

\appendix

\section{Bounds for uniform and smooth nonuniform velocity fields}
\label{app:velocityBounds}

This appendix provides the detailed derivations of the two bounds stated in
the remark following Theorem~\ref{thm:norm}.

\subsection{Spatially uniform velocity field}

Let the velocity field be spatially uniform, namely,
\[
\bb{u}(\bb{x})=\bb{u}_0.
\]
It follows that
\[
a_i(\bb{x}_j)
=
w_i\left(
1+\frac{\bb{c}_i\cdot\bb{u}_0}{c_s^2}
\right)
=:a_i
\]
is independent of the grid index $j$. Therefore,
\[
\frac{a_i(\bb{x}_j)}
{a_i(\bb{x}_{\pi_i(j)})}
=1
\]
for every $i$ and $j$. Hence,
\[
\rho_h
=
\max_{i,j}
\sqrt{
	\frac{a_i(\bb{x}_j)}
	{a_i(\bb{x}_{\pi_i(j)})}
}
=1.
\]
By Theorem~\ref{thm:norm}, we obtain
\[
\|\widehat{B}\|_2
\leq
\rho_h
=
1.
\]

\subsection{Spatially nonuniform but smooth velocity field}

Let the velocity field be spatially nonuniform but smooth. Assume that there
exists a grid-independent constant $L>0$ such that
\begin{equation}\label{equ:appendixLogCondition}
	\max_{i,j}
	\left|
	\log
	\frac{a_i(\bb{x}_j)}
	{a_i(\bb{x}_{\pi_i(j)})}
	\right|
	\leq L\Delta t.
\end{equation}
Using the definition of $\rho_h$, we have
\begin{align*}
	\rho_h
	&=
	\max_{i,j}
	\exp\left[
	\frac{1}{2}
	\log
	\frac{a_i(\bb{x}_j)}
	{a_i(\bb{x}_{\pi_i(j)})}
	\right]\\
	&\leq
	\exp\left[
	\frac{1}{2}
	\max_{i,j}
	\left|
	\log
	\frac{a_i(\bb{x}_j)}
	{a_i(\bb{x}_{\pi_i(j)})}
	\right|
	\right]\\
	&\leq
	\e^{L\Delta t/2}.
\end{align*}
It follows from Theorem~\ref{thm:norm} that
\begin{equation}\label{equ:appendixExponentialBound}
	\|\widehat{B}\|_2
	\leq
	\e^{L\Delta t/2}.
\end{equation}

If $0\leq L\Delta t\leq1$, 
we obtain
\[
\e^{L\Delta t/2}
\leq
1+L\Delta t.
\]
Then we get
\begin{equation}\label{equ:appendixLinearBound}
	\|\widehat{B}\|_2
	\leq
	1+L\Delta t.
\end{equation}

For periodic streaming, the condition in
\eqref{equ:appendixLogCondition} follows directly from the spatial
smoothness of the equilibrium coefficients. In this case,
\[
\bb{x}_{\pi_i(j)}
=
\bb{x}_j+\bb{c}_i\Delta t
\]
under periodic identification. Define
\begin{equation}\label{equ:continuousSmoothnessConstant}
	L
	:=
	\max_i
	\sup_{\bb{x}\in\Omega}
	\left|
	\bb{c}_i\cdot\nabla\log a_i(\bb{x})
	\right|.
\end{equation}
Since $a_i(\bb{x})\geq a_*>0$, the function $\log a_i$ is well
defined. Integrating along the streaming link gives
\begin{align*}
	&
	\left|
	\log a_i(\bb{x}_{\pi_i(j)})
	-
	\log a_i(\bb{x}_j)
	\right|\\
	&=
	\left|
	\int_0^{\Delta t}
	\bb{c}_i\cdot
	\nabla\log a_i
	\left(
	\bb{x}_j+s\bb{c}_i
	\right)
	\,\d s
	\right|\\
	&\leq
	\int_0^{\Delta t}
	\left|
	\bb{c}_i\cdot
	\nabla\log a_i
	\left(
	\bb{x}_j+s\bb{c}_i
	\right)
	\right|
	\,\d s\\
	&\leq
	L\Delta t.
\end{align*}
Thus, the constant defined in
\eqref{equ:continuousSmoothnessConstant} provides a sufficient condition
for \eqref{equ:appendixLogCondition}.

For the D2Q5 model, the directional derivatives are
\[
\bb{c}_1\cdot\nabla\log a_1
=
\frac{3\partial_x u_x}
{1+3u_x/c},
\qquad
\bb{c}_2\cdot\nabla\log a_2
=
\frac{3\partial_x u_x}
{1-3u_x/c},
\]
and
\[
\bb{c}_3\cdot\nabla\log a_3
=
\frac{3\partial_y u_y}
{1+3u_y/c},
\qquad
\bb{c}_4\cdot\nabla\log a_4
=
\frac{3\partial_y u_y}
{1-3u_y/c}.
\]
If the strict low Mach number condition
\[
|u_x(\bb{x})|,
\ |u_y(\bb{x})|
\leq
\left(\frac{1}{3}-\delta\right)c
\]
holds for some grid-independent constant $\delta>0$, then
\[
1\pm\frac{3u_x}{c}\geq3\delta,
\qquad
1\pm\frac{3u_y}{c}\geq3\delta.
\]
Consequently, the constant $L$ can be chosen such that
\[
L
\leq
\frac{1}{\delta}
\max\left\{
\sup_{\bb{x}\in\Omega}|\partial_xu_x(\bb{x})|,
\sup_{\bb{x}\in\Omega}|\partial_yu_y(\bb{x})|
\right\}.
\]
Therefore, $L$ is independent of the grid resolution when the velocity
field and the strict low Mach number margin are fixed.

\bibliographystyle{elsarticle-num} 
\bibliography{RefsQLBM}

@article{chan1984stability,
	title={Stability analysis of finite difference schemes for the advection-diffusion equation},
	author={T. F. Chan},
	journal={SIAM J. Numer. Anal.},
	volume={21},
	number={2},
	pages={272--284},
	year={1984}
}

@article{zoppou1997analytical,
	title={Analytical solutions for advection and advection-diffusion equations with spatially variable coefficients},
	author={C. Zoppou and J. H. Knight},
	journal={J. Hydraul. Eng.},
	volume={123},
	number={2},
	pages={144--148},
	year={1997}
}

@article{dehghan2004numerical,
	title={Numerical solution of the three-dimensional advection-diffusion equation},
	author={M. Dehghan},
	journal={Appl. Math. Comput.},
	volume={150},
	number={1},
	pages={5--19},
	year={2004}
}

@article{ginzburg2005equilibrium,
	title={Equilibrium-type and link-type lattice {B}oltzmann models for generic advection and anisotropic-dispersion equation},
	author={I. Ginzburg},
	journal={Adv. Water Resour.},
	volume={28},
	number={11},
	pages={1171--1195},
	year={2005}
}

@book{kruger2017lattice,
	title={The Lattice {B}oltzmann Method},
	author={T. Kr\"{u}ger and H. Kusumaatmaja and A. Kuzmin and O. Shardt and G. Silva and E. M. Viggen},
	year={2017},
	publisher={Springer}
}

@article{2018arXiv180601838G,
	author={A. Gily\'{e}n and Y. Su and G. H. Low and N. Wiebe},
	title={Quantum singular value transformation and beyond: exponential improvements for quantum matrix arithmetics},
	journal={arXiv:1806.01838},
	year={2018}
}

@inproceedings{Gilyen2019QSVD,
	title={Quantum singular value transformation and beyond: exponential improvements for quantum matrix arithmetics},
	author={A. Gily\'{e}n and Y. Su and G. H. Low and N. Wiebe},
	booktitle={Proceedings of the 51st Annual ACM SIGACT Symposium on Theory of Computing},
	pages={193--204},
	year={2019}
}

@inproceedings{Chakraborty2019blockEncode,
	title={The power of block-encoded matrix powers: improved regression techniques via faster {H}amiltonian simulation},
	author={S. Chakraborty and A. Gily\'{e}n and S. Jeffery},
	booktitle={46th International Colloquium on Automata, Languages, and Programming (ICALP 2019)},
	series={Leibniz International Proceedings in Informatics (LIPIcs)},
	volume={132},
	year={2019}
}

@article{Low2019Interaction,
	title={Hamiltonian simulation in the interaction picture},
	author={G. H. Low and N. Wiebe},
	journal={arXiv:1805.00675},
	year={2019}
}

@article{todorova2020quantum,
	title={Quantum algorithm for the collisionless {B}oltzmann equation},
	author={B. N. Todorova and R. Steijl},
	journal={J. Comput. Phys.},
	volume={409},
	pages={109347},
	year={2020}
}

@article{budinski2021quantum,
	title={Quantum algorithm for the advection-diffusion equation simulated with the lattice {B}oltzmann method},
	author={L. Budinski},
	journal={Quantum Inf. Process.},
	volume={20},
	number={2},
	pages={57},
	year={2021}
}

@article{Lin2022Notes,
	title={Lecture notes on quantum algorithms for scientific computation},
	author={L. Lin},
	journal={arXiv:2201.08309},
	year={2022}
}

@article{An2022forwarding,
	title={A theory of quantum differential equation solvers: limitations and fast-forwarding},
	author={D. An and J.-P. Liu and D. Wang and Q. Zhao},
	journal={arXiv:2211.05246},
	year={2022}
}

@article{Costa2021QLSA,
	title={Optimal scaling quantum linear-systems solver via discrete adiabatic theorem},
	author={P. C. S. Costa and D. An and Y. R. Sanders and Y. Su and R. Babbush and D. W. Berry},
	journal={PRX Quantum},
	volume={3},
	number={4},
	pages={040303},
	year={2022}
}

@article{Lin2023timeODE,
	title={Time-marching based quantum solvers for time-dependent linear differential equations},
	author={D. Fang and L. Lin and Y. Tong},
	journal={Quantum},
	volume={7},
	pages={955},
	year={2023}
}

@article{ACL2023LCH2,
	title={Quantum algorithm for linear non-unitary dynamics with near-optimal dependence on all parameters},
	author={D. An and A. M. Childs and L. Lin},
	journal={arXiv:2312.03916},
	year={2023}
}

@article{kocherla2024fully,
	title={Fully quantum algorithm for mesoscale fluid simulations with application to partial differential equations},
	author={S. Kocherla and Z. Song and F. E. Chrit and B. Gard and E. F. Dumitrescu and A. Alexeev and S. H. Bryngelson},
	journal={AVS Quantum Sci.},
	volume={6},
	number={3},
	year={2024}
}

@article{wawrzyniak2024unitary,
	title={Unitary quantum algorithm for the lattice-{B}oltzmann method},
	author={D. Wawrzyniak and J. Winter and S. Schmidt and T. Indinger and U. Schramm and C. Jan{\ss}en and N. A. Adams},
	journal={arXiv:2405.13391},
	year={2024}
}

@article{wawrzyniak2025linearized,
	title={Linearized quantum lattice-{B}oltzmann method for the advection-diffusion equation using dynamic circuits},
	author={D. Wawrzyniak and J. Winter and S. Schmidt and T. Indinger and C. F. Jan{\ss}en and U. Schramm and N. A. Adams},
	journal={Comput. Phys. Commun.},
	pages={109856},
	year={2025}
}

@article{xu2025improved,
	title={Improved quantum lattice {B}oltzmann method for advection-diffusion equations with a linear collision model},
	author={L. Xu and M. Li and L. Zhang and H. Sun and J. Yao},
	journal={Phys. Rev. E},
	volume={111},
	number={4},
	pages={045305},
	year={2025}
}

@article{kumar2025quantum,
	title={Quantum unitary matrix representation of the lattice {B}oltzmann model for low {R}eynolds fluid flow simulation},
	author={E. D. Kumar and S. H. Frankel},
	journal={AVS Quantum Sci.},
	volume={7},
	number={1},
	year={2025}
}

@article{kharazi2025explicit,
	title={Explicit block encodings of boundary value problems for many-body elliptic operators},
	author={T. Kharazi and A. M. Alkadri and J. Liu and K. K. Mandadapu and K. B. Whaley},
	journal={Quantum},
	volume={9},
	number = {},
	pages={1764},
	year={2025}
}

@article{DLTY2026timemarching,
	author={Dong, X. and Liu, N. and Tang, Q. and Yu, Y.},
	title={A time-marching quantum algorithm for time-dependent linear differential equations},
    journal={To preprint},  
    eprint={},
    archivePrefix={arXiv:},
	year={2026},
}

\end{document}